\newtheorem{theorem}{Theorem}
\newtheorem{lemma}{Lemma}
\newtheorem{proposition}{Proposition}
\newtheorem{cor}{Corollary}
\newtheorem{assumption}{Assumption}
\newtheorem{remark}{Remark}
\newtheorem{definition}{Definition}
\renewcommand{\proofname}{\textbf{Proof.}}
\DeclareMathOperator*\argmax{argmax}
\DeclareMathOperator*\argmin{argmin}
\newcommand{\norm}[1]{\left\lVert#1\right\rVert}
\newcommand{\E}{\mathbb{E}}
\newcommand{\p}{\mathbb{P}}
\newcommand{\R}{\mathbb{R}}
\newcommand{\red}[1]{{\color{red} #1}}
\newcommand{\code}[1]{\text{\ttfamily #1}}
\newcommand{\bx}{\bm{x}}
\newcommand{\bz}{\bm{z}}
\newcommand{\B}{\boldsymbol}
\newcommand{\dx}{~d\bm{x}}
\newcommand{\bphi}{\bm{\phi}}
\newcommand{\cef}{\mathrm{cef}}
\newcommand{\dom}[1]{\mathrm{dom}(#1)}
\let\expandafter\oldproof\csname\string\proof\endcsname
\let\oldendproof\endproof
\renewenvironment{proof}[1][\proofname]{%
  \oldproof[\textit{\textbf{#1}}]%
}{\oldendproof}
\begin{document}
\title{A new computational framework for log-concave density estimation}
\author{Wenyu Chen$^\ast$, Rahul Mazumder$^\ast$ and Richard J. Samworth$^\dagger$\\$^\ast$Massachusetts Institute of Technology and $^\dagger$University of Cambridge}
\date{January 27, 2023}
\maketitle
\begin{abstract}
In Statistics, log-concave density estimation is a central problem within the field of nonparametric inference under shape constraints.  Despite great progress in recent years on the statistical theory of the canonical estimator, namely the log-concave maximum likelihood estimator, adoption of this method has been hampered by the complexities of the non-smooth convex optimization problem that underpins its computation.  We provide enhanced understanding of the structural properties of this optimization problem, which motivates the proposal of new algorithms, based on both randomized and Nesterov smoothing, combined with an appropriate integral discretization of increasing accuracy.  We prove that these methods enjoy, both with high probability and in expectation, a convergence rate of order $1/T$ up to logarithmic factors on the objective function scale, where $T$ denotes the number of iterations.  The benefits of our new computational framework are demonstrated on both synthetic and real data, and our implementation is available in a github repository \texttt{LogConcComp} (Log-Concave Computation).
\end{abstract}

\section{Introduction}

In Statistics, the field of nonparametric inference under shape constraints dates back at least to \cite{grenander1956theory}, who studied the nonparametric maximum likelihood estimator of a decreasing density on the non-negative half line.  But it is really over the last decade or so that researchers have begun to realize its full potential for addressing key contemporary data challenges such as (multivariate) density estimation and regression.  The initial allure is the flexibility of a nonparametric model, combined with estimation methods that can often avoid the need for tuning parameter selection, which can often be troublesome for other nonparametric techniques such as those based on smoothing.  Intensive research efforts over recent years have revealed further great attractions: for instance, these procedures frequently attain optimal rates of convergence over relevant function classes.  Moreover, it is now known that shape-constrained procedures can possess intriguing adaptation properties, in the sense that they can estimate particular subclasses of functions at faster rates, even (nearly) as well as the best one could do if one were told in advance that the function belonged to this subclass.  

Typically, however, the implementation of shape-constrained estimation techniques requires the solution of an optimization problem, and, despite some progress, there are several cases where computation remains a bottleneck and hampers the adoption of these methods by practitioners.  In this work, we focus on the problem of log-concave density estimation, which has become arguably the central challenge in the field because the class of log-concave densities enjoys stability properties under marginalization, conditioning, convolution and linear transformations that make it a very natural infinite-dimensional generalization of the class of Gaussian densities \citep{samworth2018recent}.

The univariate log-concave density estimation problem was first studied in \cite{walther2002detecting}, and fast algorithms for the computation of the log-concave maximum likelihood estimator (MLE) in one dimension are now available through the \texttt{R} packages \texttt{logcondens} \citep{dumbgen2011logcondens} and \texttt{cnmlcd} \citep{liu2018fast}.  \cite{cule2010maximum} introduced and studied the multivariate log-concave maximum likelihood estimator, but their algorithm, which is described below and implemented in the \texttt{R} package \texttt{LogConcDEAD} \citep{JSSv029i02}, is slow; for instance, \cite{cule2010maximum} report a running time of 50 seconds for computing the bivariate log-concave MLE with 500 observations, and 224 minutes for computing the log-concave MLE in four dimensions with 2{,}000 observations.  An alternative, interior point method for a suitable approximation was proposed by \cite{koenker2010quasi}.  Recent progress on theoretical aspects of the computational problem in the computer science community includes \cite{axelrod2019polynomial}, who proved that there exists a polynomial time algorithm for computing the log-concave maximum likelihood estimator.  We are unaware of any attempt to implement this algorithm.  \cite{rathke2019fast} compute an approximation to the log-concave MLE by considering $-\log p$ as a piecewise affine maximum function, using the log-sum-exp operator to approximate the non-smooth operator, a Riemann sum to compute the integral and its gradient, and obtain a solution via L-BFGS.  This reformulation means that the problem is no longer convex.

\sloppy To describe the problem more formally, let $\mathcal{C}_d$ denote the class of proper, convex lower-semicontinuous functions $\varphi:\R^d \rightarrow (-\infty,\infty]$ that are coercive in the sense that $\varphi(\bx) \rightarrow \infty$ as $\|\bx\| \rightarrow \infty$.  The class of upper semi-continuous log-concave densities on $\R^d$ is denoted as 
\[
  \mathcal{P}_d := \biggl\{p:\mathbb{R}^d \rightarrow [0,\infty): p = e^{-\varphi} \text{ for some } \varphi \in \mathcal{C}_d, \int_{\R^d} p = 1\biggr\}.
\]
Given $\bx_1,\ldots,\bx_n \in \R^d$, \citet[Theorem~1]{cule2010maximum} proved that whenever the convex hull $C_n$ of $\bx_1,\ldots,\bx_n$ is $d$-dimensional, there exists a unique
\begin{equation}
  \label{Eq:MLE}
  \hat{p}_n \in \argmax_{p \in \mathcal{P}_d} \frac{1}{n}\sum_{i=1}^n \log p(\bx_i).
\end{equation}
If $\bx_1,\ldots,\bx_n$ are regarded as realizations of independent and identically distributed random vectors on $\R^d$, then the objective function in~\eqref{Eq:MLE} is a scaled version of the log-likelihood function, so $\hat{p}_n$ is called the {\emph{log-concave MLE}}.  The existence and uniqueness of this estimator is not obvious, because the infinite-dimensional class~$\mathcal{P}_d$ is non-convex, and even the class of negative log-densities $\bigl\{\varphi \in \mathcal{C}_d:\int_{\R^d} e^{-\varphi} = 1\bigr\}$ is non-convex.  In fact, the estimator belongs to a finite-dimensional subclass; more precisely, for a vector $\bphi = (\phi_1,\ldots,\phi_n) \in \R^n$, define $\mathrm{cef}[\bphi] \in \mathcal{C}_d$ to be the (pointwise) largest function with 
\[
\mathrm{cef}[\bphi](\bx_i) \leq \phi_i
\]
for $i=1,\ldots,n$.  \cite{cule2010maximum} proved that $\hat{p}_n = e^{-\mathrm{cef}[\bphi^*]}$ for some $\bphi^* \in \R^n$, and refer to the function $-\mathrm{cef}[\bphi^*]$ as a `tent function'; see the illustration in Figure~\ref{Fig:Tent}.  \cite{cule2010maximum} further defined the non-smooth, convex objective function $f:\R^n \rightarrow \R$ by
\begin{equation}
  \label{Eq:sigma}
  f(\bphi) \equiv f(\phi_1,\ldots,\phi_n):= \frac{1}{n}\sum_{i=1}^n \phi_i + \int_{C_n} \exp\bigl\{-\mathrm{cef}[\bphi](x)\bigr\} \, dx,
\end{equation}
and proved that $\bphi^* = \argmin_{\bphi \in \R^n} f(\bphi)$.

\begin{figure}[ht]
  \begin{center}
    \includegraphics[width=0.4\textwidth]{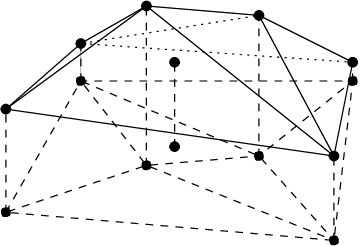}
  \end{center}
  \caption{\label{Fig:Tent}An illustration of a tent function, taken from \cite{cule2010maximum}.}
\end{figure}

The two main challenges in optimizing the objective function $f$ in~\eqref{Eq:sigma} are that the value and subgradient of the integral term are hard to evaluate, and that it is non-smooth, so vanilla subgradient methods lead to a slow rate of convergence.  To address the first issue, \cite{cule2010maximum} computed the exact integral and its subgradient using the \texttt{qhull} algorithm \citep{barber1993quickhull} to obtain a triangulation of the convex hull of the data, evaluating the function value and subgradient over each simplex in the triangulation. However, in the worst case, the triangulation can have $O(n^{d/2})$ simplices \citep{mcmullen1970maximum}. The non-smoothness is handled via Shor's $r$-algorithm \cite[Chapter~3]{shor1985minimization}, as implemented by \cite{kappel2000implementation}.

In Section~\ref{sec:Setup}, we characterize the subdifferential of the objective function in terms of the solution of a linear program (LP), and show that the solution lies in a known, compact subset of $\R^n$.   This understanding allows us to introduce our new computational framework for log-concave density estimation in Section~\ref{sec:computation}, based on an accelerated version of a dual averaging approach \citep{nesterov2009primal}.  This relies on smoothing the objective function, and encompasses two popular strategies, namely Nesterov smoothing \citep{nesterov2005smooth} and randomized smoothing \citep{lakshmanan2008decentralized,yousefian2012stochastic,duchi2012randomized}, as special cases. A further feature of our algorithm is the construction of approximations to gradients of our smoothed objective, and this in turn requires an approximation to the integral in~\eqref{Eq:sigma}.  While a direct application of the theory of \cite{duchi2012randomized} would yield a rate of convergence for the objective function of order $n^{1/4}/T + 1/\sqrt{T}$ after~$T$ iterations, we show in Section~\ref{sec:theory} that by introducing finer approximations of both the integral and its gradient as the iteration number increases, we can obtain an improved rate of order $1/T$, up to logarithmic factors.  Moreover, we translate the optimization error in the objective into a bound on the error in the log-density, which is uncommon in the literature in the absence of strong convexity.  A further advantage of our approach is that we are able to extend it in Section~\ref{Sec:sconcave} to the more general problem of quasi-concave density estimation \citep{koenker2010quasi,seregin2010nonparametric}, thereby providing a computationally tractable alternative to the discrete Hessian approach of \cite{koenker2010quasi}.  Section~\ref{sec:compute} illustrates the practical benefits of our methodology in terms of improved computational timings on simulated data. Additional experimental details and applications on real data sets are provided in Appendix~\ref{sec:addl-expts}. Proofs of all main results can be found in Appendix~\ref{sec:proofs}, and background on the field of nonparametric inference under shape constraints can be found in Appendix~\ref{app:background}.


\paragraph{Notation:} We write $[n] := \{1,2,\ldots, n\}$, let $\bm{1} \in \mathbb{R}^n$ denote the all-ones vector, and denote the cardinality of a set $S$ by $|S|$.  For a Borel measurable set $C\subseteq \R^d$, we use $\mathrm{vol}(C)$ to denote its volume (i.e.~$d$-dimensional Lebesgue measure).  We write $\|\cdot\|$ for the Euclidean norm of a vector.  For $\mu > 0$, a convex function $f:\R^n \rightarrow \R$ is said to be $\mu$-strongly convex if $\bphi \mapsto f(\bm{\phi})-\frac{\mu}{2}\|\bm{\phi}\|^2$ is convex.  The notation $\partial f(\bm{\phi})$ denotes the subdifferential (set of subgradients) of $f$ at~$\bm{\phi}$.  Given a real-valued sequence $(a_n)$ and a positive sequence $(b_n)$, we write $a_n = \tilde{O}(b_n)$ if there exist $C,\gamma > 0$ such that $a_n \leq C b_n \log^\gamma (1+n)$ for all $n \in \mathbb{N}$.

\section{Understanding the structure of the optimization problem}\label{sec:Setup}
Throughout this paper, we assume that $\bx_1,\ldots,\bx_n \in \mathbb{R}^d$
are distinct and that their convex hull $C_n:=\mathrm{conv}(\bx_1,\ldots,\bx_n)$ has nonempty interior, so that $n \geq d+1$ and $\Delta := \mathrm{vol}(C_n)>0$. This latter assumption ensures the existence and uniqueness of a minimizer of the objective function in~\eqref{Eq:sigma}~\cite[Theorem~2.2]{dumbgen2011approximation}.  Recall that we define the \textit{lower convex envelope} function~\citep{rockafellar1997convex} $\cef: \mathbb{R}^n \rightarrow \mathcal{C}_d$ by
\begin{equation}
  \label{Eq:cef}
  \cef[\bphi](\bx) \equiv \cef[(\phi_1,\ldots,\phi_n)](\bx) := \sup\bigl\{g(\bx):g\in\mathcal{C}_d, g(\bx_i)\leq \phi_i~\forall i \in [n]\bigr\}.
\end{equation}
As mentioned in the introduction, in computing the MLE, we seek
\begin{equation}
\label{Eq:OptProblem}
\bphi^* := \argmin_{\bm\phi \in \mathbb{R}^n} f(\bphi),
\end{equation}
where
\begin{equation}\label{eqn:MLE-phi}
    f(\bphi) := \frac1n\bm1^\top\bphi+\int_{C_n}\exp\{-\cef[\bm\phi](\bm x)\}\, d\bm x =: \frac1n\bm 1^\top\bm\phi+I(\bphi).
\end{equation}
Note that~\eqref{Eq:OptProblem} can be viewed as a stochastic optimization problem by writing
\begin{equation}\label{eqn:MLE-StocOpt}
    f(\bphi)=\E F(\bphi,\bm\xi),
\end{equation}
where $\bm\xi$ is uniformly distributed on $C_n$ and where, for $\bx \in C_n$,
\begin{equation}\label{eqn:F}
    F(\bphi,\bx):=\frac1n\bm1^\top\bphi+\Delta e^{-\cef[\bphi](\bx)}.
  \end{equation}

Let $\bm X := [\bx_1 \, \cdots \, \bx_n]^\top \in \mathbb{R}^{n \times d}$, and for $\bx \in \mathbb{R}^d$, let $E(\bx) := \bigl\{\bm\alpha\in\R^n:\bm X^\top\bm\alpha = \bx, \bm1_n^\top\bm\alpha=1, \bm\alpha\geq 0\bigr\}$ denote the set of all weight vectors for which $\bx$ can be written as a weighted convex combination of $\bx_1,\ldots,\bx_n$.  Thus $E(\bx)$ is a compact, convex subset of $\R^n$.  The $\cef$ function is given by a linear program (LP)~\citep{koenker2010quasi,axelrod2019polynomial}:
\begin{equation}
  \cef[\bphi](\bx) = \inf_{\bm\alpha\in E(\bx)}~\bm\alpha^\top\bphi.     \tag{\mbox{$Q_0$}} \label{eqn:cef-LP}
\end{equation}
If $\bx\notin C_n$, then $E(\bx) =\emptyset$, and, with the standard convention that $\inf \emptyset := \infty$, we see that~\eqref{eqn:cef-LP} agrees with~\eqref{Eq:cef}. 
From the LP formulation, it follows that $\bphi \mapsto \cef[\bphi](\bx)$ is concave, for every $\bx \in \R^d$.

Given a pair $\bphi \in \mathbb{R}^n$ and $\bx \in C_n$, an optimal solution to~\eqref{eqn:cef-LP} may not be unique, in which case the map $\bm{\phi}\mapsto  \cef[\bphi](\bx)$ is not differentiable \cite[Proposition~B.25(b)]{bertsekas2016nonlinear}.  Noting that the infimum in~\eqref{eqn:cef-LP} is attained whenever $\bx \in C_n$, let
\begin{align*}
  A[\bm\phi](\bx) &:= \mathrm{conv}\bigl(\bigl\{\bm\alpha \in E(\bx): \bm\alpha^\top \bphi = \cef[\bphi](\bx)\bigr\}\bigr) \\
  &\phantom{:}= \bigl\{\bm\alpha \in E(\bx): \bm\alpha^\top \bphi = \cef[\bphi](\bx)\bigr\}.
\end{align*}
Danskin's theorem \cite[Proposition B.25(b)]{bertsekas2016nonlinear} applied to $-\cef[\bphi](\bx)$ then yields that for each $\bx \in C_n$, the subdifferential of $F(\bphi,\bx)$ with respect to $\bphi$ is given by 
\begin{equation}
    \partial F(\bphi,\bx):=\biggl\{\frac1n\bm 1-\Delta e^{-\cef[\bphi](\bx)}\bm \alpha:\bm\alpha \in A[\bm\phi](\bx)\biggr\}. 
\end{equation}
Since both $f$ and $F(\cdot,\bx)$ are finite convex functions on $\mathbb{R}^n$ (for each fixed $\bx \in C_n$ in the latter case), by \citet[Proposition~2.3.6(b) and Theorem~2.7.2]{clarke1990optimization}, the subdifferential of $f$ at $\bphi \in \mathbb{R}^n$ is given by
\begin{equation}\label{eqn:subgradient-phi}
    \partial f(\bphi):= \bigl\{\E\bm G(\bphi,\bm\xi):\bm G(\bphi,\bx) \in \partial F(\bphi,\bx) \text{ for each } \bx \in C_n\bigr\}.
\end{equation}
Observe that given any $\bphi \in \mathbb{R}^n$, the function $\bx \mapsto -\mathrm{cef}\bigl[\bphi + \log I(\bphi)\bm1\bigr](\bx)$ (where $I(\bphi)$ is the integral defined in~\eqref{eqn:MLE-phi}) is a log-density.  It is also convenient to let $\bar{\bphi} \in \R^n$ be such that $\exp\{-\cef[\bar{\bphi}]\}$ is the uniform density on $C_n$, so that $f(\bar\bphi)=\log \Delta+1$.   Proposition~\ref{prop:bound-for-phis} below (an extension of \cite[Lemma~2]{axelrod2019polynomial}) provides uniform upper and lower bounds on this log-density, whenever the objective function~$f$ evaluated at $\bphi$ is at least as good as that at $\bar{\bphi}$.  In more statistical language, these bounds hold whenever the log-likelihood of the density $\exp\bigl\{-\mathrm{cef}\bigl[\bphi + \log I(\bphi)\bm1\bigr](\cdot)\bigr\}$ is at least as large as that of the uniform density on the convex hull of the data, so in particular, they must hold for the log-concave MLE (i.e.~when $\bphi = \bphi^*$).  Let 
$\phi^0 := (n-1) +d(n-1)\log \bigl(2n + 2nd \log(2nd)\bigr) +\log \Delta$ and $\phi_0 :=-1 - d \log \bigl(2n + 2nd \log(2nd)\bigr) +\log \Delta$.

\begin{proposition}\label{prop:bound-for-phis}
For any $\bphi \in \R^n$ such that $f(\bphi)\leq \log \Delta+1$, we have
$\phi_0\leq \phi_i+\log I(\bphi)\leq \phi^0$ for all $i \in [n]$.
\end{proposition}
The following corollary is an immediate consequence of Proposition~\ref{prop:bound-for-phis}.
\begin{cor}\label{cor:verify-D-assum}
  Suppose that $\bphi \in \mathbb{R}^n$ satisfies $I(\bphi)=1$ and $f(\bphi)\leq f(\bar\bphi) = \log \Delta+1$.  Then $\bphi^* \in \mathbb{R}^n$ defined in~\eqref{Eq:OptProblem} satisfies
  \[
    \norm{\bphi-\bphi^*}\leq \sqrt{n}(\phi^0-\phi_0).
  \]
\end{cor}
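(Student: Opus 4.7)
The plan is to show that this corollary is a nearly immediate consequence of Proposition~\ref{prop:bound-for-phis} applied to both $\bphi$ and $\bphi^*$, combined with the hypothesis $I(\bphi) = 1$.

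First, I would apply Proposition~\ref{prop:bound-for-phis} to the optimal solution $\bphi^*$ to obtain $\phi_i^* \in [\phi_0, \phi^0]$ for every $i \in [n]$. Next, I would invoke the second part of Proposition~\ref{prop:bound-for-phis}: since the hypothesis gives $f(\bphi) \leq f(\bar\bphi) = \log \Delta + 1$, we conclude that $\phi_0 \leq \phi_i + \log I(\bphi) \leq \phi^0$ for each $i \in [n]$. The hypothesis $I(\bphi) = 1$ makes the logarithmic correction vanish, so this simplifies to $\phi_i \in [\phi_0, \phi^0]$ as well.

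Both $\bphi$ and $\bphi^*$ therefore have every coordinate in the same interval $[\phi_0, \phi^0]$ of length $\phi^0 - \phi_0$. Consequently $|\phi_i - \phi_i^*| \leq \phi^0 - \phi_0$ for all $i$, and summing the squares yields
\[
  \|\bphi - \bphi^*\|^2 = \sum_{i=1}^n (\phi_i - \phi_i^*)^2 \leq n (\phi^0 - \phi_0)^2,
\]
from which the claimed bound follows by taking square roots.

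There is no real obstacle here — the whole work has already been done inside Proposition~\ref{prop:bound-for-phis}. The only thing to check carefully is that the hypothesis $I(\bphi) = 1$ is precisely what is needed to convert the normalized bound $\phi_0 \leq \phi_i + \log I(\bphi) \leq \phi^0$ into a coordinatewise bound on $\phi_i$, and that $f(\bar\bphi) = \log \Delta + 1$ is indeed the threshold in the second part of Proposition~\ref{prop:bound-for-phis}, which has already been noted in the paragraph preceding the corollary.
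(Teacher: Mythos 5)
Your argument is correct and is precisely the intended one: the paper simply calls the corollary an "immediate consequence" of Proposition~\ref{prop:bound-for-phis}, and your proof spells out exactly why — both $\bphi$ and $\bphi^*$ have every coordinate trapped in $[\phi_0,\phi^0]$, and the Euclidean bound follows by summing the coordinatewise bounds. Nothing to add.
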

Corollary~\ref{cor:verify-D-assum} gives a sense in which any $\bphi \in \R^n$ for which the objective function is `good' cannot be too far from the optimizer $\bphi^*$; here, `good' means that the objective should be no larger than that of the uniform density on the convex hull of the data.  Moreover, an upper bound on the integral $I(\bphi)$ provides an upper bound on the norm of any subgradient $\bm g(\bphi)$ of $f$ at $\bphi$.
\begin{proposition}\label{prop:bound-subgradient}
Any subgradient $\bm g(\bphi) \in \R^n$ of $f$ at $\bphi \in \R^n$ satisfies
$\norm{\bm g(\bphi)}^2\leq \max \bigl\{1/n + 1/4,I(\bphi)^2\bigr\}.$
\end{proposition}

\section{Computing the log-concave MLE}\label{sec:computation}

As mentioned in the introduction, subgradient methods~\citep{shor1985minimization,polyak1987introduction} tend to be slow for minimizing the objective function $f$ defined in~\eqref{eqn:MLE-phi} \cite{cule2010maximum}.  Our alternative approach involves the minimizing the representation of $f$ given in~\eqref{eqn:MLE-StocOpt} via smoothing techniques, which offer superior computational guarantees and practical performance in our numerical experiments.

\subsection{Smoothing techniques}\label{sec:smoothing-techs}
We present two smoothing techniques to find the minimizer $\bphi^* \in \R^n$ of the nonsmooth convex optimization problem~\eqref{Eq:OptProblem}.
By Proposition~\ref{prop:bound-for-phis}, we have that $\bphi^* \in {\bm{\Phi}}$, where
\begin{equation}
    \bm{\Phi} := \{\bphi=(\phi_1,\ldots,\phi_n)\in\R^n:\phi_0 \leq \phi_i\leq \phi^0 \text{ for } i \in [n]\},
\end{equation}
with $\phi_0,\phi^0 \in \R$.  In what follows we present two smoothing techniques: one based on Nesterov smoothing~\citep{nesterov2005smooth} and the second on randomized smoothing~\citep{duchi2012randomized}. 

\subsubsection{Nesterov smoothing}

Recall that the non-differentiability in $f$ in~\eqref{eqn:MLE-phi} is due to the LP~\eqref{eqn:cef-LP} potentially having  multiple optimal solutions. Therefore, following~\cite{nesterov2005smooth}, we consider replacing this LP with the following quadratic program (QP): 
\begin{equation}\label{eqn:h-QP}
    q_u[\bphi](\bx) := \inf_{\bm\alpha \in E(\bx)}~\biggl(\bm{\alpha}^\top\bm{\phi}+\frac{u}2\norm{\bm\alpha-\bm\alpha_0}^2-\frac u2\biggr), \tag{\mbox{$Q_u$}}
\end{equation}
where $\bm\alpha_0:=(1/n)\bm 1\in\R^n$ is the center of $E(\bx)$, and where $u\geq 0$ is a regularization parameter that controls the extent of the quadratic regularization of the objective.
With this definition, we have $q_0[\bphi](\bx)=\cef[\bphi](\bx)$. For $u>0$, due to the strong convexity of the function $\bm{\alpha} \mapsto \bm{\alpha}^\top\bm{\phi}+ ({u}/2)\|\bm\alpha-\bm\alpha_0\|^2$ on the convex polytope $E(\bx)$, \eqref{eqn:h-QP} admits a unique solution that we denote by $\bm\alpha^*_u[\bphi](\bx)$. It follows again from Danskin's theorem that $\bphi \mapsto q_u[\bphi](\bx)$ is differentiable for such $u$, with gradient $\nabla_{\bphi} q_u[\bphi](\bx) = \bm\alpha^*_u[\bphi](\bx)$. 

Using $q_{u}[\bphi](\bx)$ instead of $q_{0}[\bphi](\bx)$ in~\eqref{eqn:MLE-phi}, 
we obtain a smooth objective $\bphi \mapsto \tilde{f}_u(\bphi)$, given by
\begin{align}\label{eqn:tildefu}
    \tilde{f}_u(\bphi) := \frac{1}{n}\bm1^\top\bm\phi+\int_{C_n}\exp\{-q_u[\bm\phi](\bm x)\}\dx=\E\tilde F_u(\bphi,\bm\xi),
\end{align}
where $\tilde{F}_u(\bphi,\bx):=({1}/{n})\bm1^\top\bphi+\Delta \exp\{-q_u[\bphi](\bx)\}$, and where $\bm\xi$ is again uniformly distributed on $C_n$.  
We may differentiate under the integral \citep[e.g.][Theorem~6.28]{klenke2014probability} to see that the partial derivatives of $\tilde{f}_u$ with respect to each component of $\bphi$ exist, and moreover they are continuous (because $\bphi \mapsto \bm\alpha^*_u[\bphi](\bx)$ is continuous by Proposition~\ref{prop:QP-properties}), so $\nabla_{\bphi} \tilde{f}_u(\bphi)= \E[\tilde{\bm G}_u(\bphi,\bm\xi)]$, where 
\begin{equation}\label{defn-G-hat-u-1}
\tilde{\bm G}_u(\bphi,\bx) := \nabla_{\bphi} \tilde{F}_u(\bphi,\bx)=\frac1n\bm 1-\Delta e^{-q_u[\bphi](\bx)}\alpha^*_u[\bphi](\bx).
\end{equation}

Proposition~\ref{prop:fhat-properties} below presents some properties of the smooth objective 
$\tilde{f}_u$. 
\begin{proposition}\label{prop:fhat-properties}
For any $\bphi\in{\bm{\Phi}}$,
we have 

\noindent(a) $0 \leq \tilde{f}_u(\bphi)-\tilde{f}_{u'}(\bphi)\leq \frac{u-u'}{2}e^{u'/2}I(\bphi)$ for $u' \in [0,u]$;

\noindent(b) For every $u \geq 0$, the function $\bphi \mapsto \tilde{f}_u(\bphi)$ is convex and $\Delta e^{-\phi_0+u/2}$-Lipschitz; 

\noindent(c) For every $u \geq 0$, the function $\bphi \mapsto \tilde{f}_u(\bphi)$ has $\Delta e^{-\phi_0+u/2}(1+u^{-1})$-Lipschitz gradient;

\noindent(d) $\E\bigl(\|{\tilde{\bm G}_u(\bphi,\bm\xi)-\nabla_{\bm\phi}\tilde{f}_u(\bphi)}\|^2\bigr) \leq (\Delta e^{-\phi_0+u/2})^2$ for every $u \geq 0$.

\end{proposition}

\subsubsection{Randomized smoothing} 

Our second smoothing technique is randomized smoothing \citep{lakshmanan2008decentralized,yousefian2012stochastic,duchi2012randomized}: we take the expectation of a random perturbation of the argument of $f$.  Specifically, for $u \geq 0$, let
\begin{equation}\label{eqn:barfu}
    \bar{f}_u(\bphi) :=\E f(\bphi+u\bm z),
\end{equation}
where $\bz$ is uniformly distributed on the unit $\ell_{2}$-ball in $\R^n$.  Thus, similar to Nesterov smoothing, $\bar{f}_0 = f$, and the amount of smoothing increases with $u$.  From a stochastic optimization viewpoint, we can write 
\[
\bar{f}_u(\bphi)=\E F(\bphi+u\bz,\bm\xi)~~~~\text{and}~~~\nabla_{\bphi} \bar{f}_u(\bphi)=\E \bm{G}(\bphi+u\bz,\bm\xi)
\]
where $\bm{G}(\bphi+u\bm v,\bx) \in \partial F(\bphi+u\bm v,\bx)$, and where the expectations are taken over independent random vectors $\bz$, distributed uniformly on the unit Euclidean ball in $\R^n$, and $\bm\xi$, distributed uniformly on $C_n$.  Here the gradient expression follows from, e.g.,~\citet[Lemma 3.3(a)]{lakshmanan2008decentralized}, ~\citet[Lemma 7]{yousefian2012stochastic}; since $F(\bphi+u\bm v,\bx)$ is differentiable almost everywhere with respect to $\bphi$, the expression for $\bar{f}_u(\bphi)$ does not depend on the choice of subgradient.



Proposition~\ref{prop:fbar-properties} below lists some properties of $\bar{f}_{u}$ and its gradient.  It extends \citet[Lemmas~7 and~8]{yousefian2012stochastic} by exploiting special properties of the objective function to sharpen the dependence of the bounds on $n$.  
\begin{proposition}\label{prop:fbar-properties}
For any $u \geq 0$ and $\bphi \in {\bm{\Phi}}$, we have

\noindent(a) $0\leq \bar{f}_u(\bphi)-f(\bphi)\leq I(\bphi)ue^u\sqrt{\frac{2\log n}{n+1}}$;

\noindent(b) $\bar{f}_{u'}(\bphi)\leq \bar{f}_u(\bphi)$ for any $u' \in [0,u]$;

\noindent(c) $\bphi \mapsto \bar{f}_u(\bphi)$ is convex and $\Delta e^{-\phi_0+u}$-Lipschitz;

\noindent(d) $\bphi \mapsto \bar{f}_u(\bphi)$ has $\Delta e^{-\phi_0+u}n^{1/2}/u$-Lipschitz gradient;

\noindent(e) $\E\bigl(\bigl\|\bm G(\bphi+u\bz,\bm\xi)-\nabla \bar{f}_u(\bphi)\bigr\|^2\bigr) \leq (\Delta e^{-\phi_0+u})^2$ whenever $\bm G(\bphi+u\bm v,\bx) \in \partial F(\bphi+u\bm v,\bx)$ for every $\bm v \in \R^n$ with $\|\bm v\| \leq 1$ and $\bx \in C_n$.

\end{proposition}


\subsection{Stochastic first-order methods for smoothing sequences} 

Our proposed algorithm for computing the log-concave MLE is given in Algorithm~\ref{algo:smoothing}.  It relies on the choice of a smoothing sequence of $f$, which may be constructed using Nesterov or randomized smoothing, for instance.  For a non-negative sequence $(u_t)_{t \in \mathbb{N}_0}$, this smoothing sequence is denoted by $(\ell_{u_t})_{t \in \mathbb{N}_0}$, where $\ell_{u_t}:=\tilde f_{u_t}$ is given by~\eqref{eqn:tildefu} or $\ell_{u_t}:=\bar f_{u_t}$ is given by~\eqref{eqn:barfu}.  In Algorithm~\ref{algo:smoothing}, $P_{\B\Phi}:\mathbb{R}^n \rightarrow \B\Phi$ denotes the projection operator onto the closed convex set $\B\Phi$, which is essentially a threshold clipping operator. In fact, Algorithm~\ref{algo:smoothing} is a modification of an algorithm due to~\cite{duchi2012randomized}, and can be regarded as an accelerated version of the dual averaging scheme~\citep{nesterov2009primal} applied to $(\ell_{u_t})$. 
\begin{algorithm}[!ht]
\begin{algorithmic}[1]
\Require Smoothing sequence $(\ell_{u_t})$ whose gradients have Lipschitz constants $(L_t)_{t \in \mathbb{N}_0}$; initialization $\bm\phi_0 \in \mathbb{R}^n$; learning rate sequence $(\eta_t)_{t \in \mathbb{N}}$ of positive real numbers; number of iterations $T \in \mathbb{N}$
\State $\bphi_0^{(x)}=\bphi_0^{(y)}=\bphi_0^{(z)}=\bphi_0$, $\bm s_t = \bm 0 \in \R^n$, $\theta_0 = 1$
\For {$t=0,\ldots,T-1$}
\State Compute an approximation $\bm g_t$ of $\nabla_{\bphi} \ell_{u_t}\bigl(\bphi_t^{(y)}\bigr)$; see Section~\ref{SubSubSec:Gradient}
\State $\bm{s}_{t+1}=\bm s_t+\bm g_t/\theta_t$ 
\State $\theta_{t+1}=\frac{2}{1+\sqrt{1+4/\theta_t^2}}$
\State $\bphi_{t+1}^{(z)} = P_{\B\Phi}\bigl(\bphi_0- \frac{\bm{s}_t}{L_{t+1}+\eta_{t+1}/\theta_{t+1}}\bigr)$   
\State $\bphi_{t+1}^{(x)} = (1-\theta_t)\bphi_t^{(x)}+\theta_t\bphi_{t+1}^{(z)}$
\State $\bphi_{t+1}^{(y)} = (1-\theta_{t+1})\bphi_{t+1}^{(x)}+\theta_{t+1}\bphi_{t+1}^{(z)}$
\EndFor
\State \Return $\bphi_{t+1}^{(x)}$
\end{algorithmic}
\caption{Accelerated stochastic dual averaging on a smoothing sequence with increasing grids}
\label{algo:smoothing}
\end{algorithm}

\subsubsection{Approximating the gradient of the smoothing sequence}
\label{SubSubSec:Gradient}

In Line 3 of Algorithm~\ref{algo:smoothing}, we need to compute an approximation of the gradient $\nabla_{\bphi} \ell_{u}$, for a general $u \geq 0$.  A key step in this process is to approximate the integral $I(\cdot)$, as well as a subgradient of $I$, at an arbitrary $\bm\phi \in \mathbb{R}^n$.  \cite{cule2010maximum} provide explicit formulae for these quantities, based on a triangulation of $C_n$, using tools from computational geometry.  For practical purposes, \cite{cule2008auxiliary} apply a Taylor expansion to approximate the analytic expression.  The \texttt{R} package \code{LogConcDEAD} \citep{JSSv029i02} uses this method to evaluate the exact integral at each iteration, but since this is time-consuming, we will only use this method at the final stage of our proposed algorithm as a polishing step\footnote{{Once our algorithm terminates at $\tilde{\bphi}_T$, say, we evaluate the integral $I(\tilde{\bphi}_T)$ in the same way as \cite{cule2010maximum}.  Our final output, then is $\bphi_T := \tilde{\bphi}_T + \log I(\tilde{\bphi}_T)\bm{1}$; this final step not only improves the objective function, but also guarantees that $\exp[-\mathrm{cef}[\bphi_T](\cdot)]$ is a log-concave density.}}.

An alternative approach is to use numerical integration\footnote{Yet another option involves sampling from a log-concave density~\citep{cule2010maximum,dalalyan2017theoretical}. \cite{axelrod2019polynomial}~discuss interesting polynomial-time sampling methods to approximate $I(\cdot)$, but as noted by~\cite{rathke2019fast}, these methods may not be practically efficient, and we do not pursue them here.}.  Among deterministic schemes, \cite{rathke2019fast} observed empirically that the simple Riemann sum with uniform weights appears to perform the best among several multi-dimensional integration techniques.  Random (Monte Carlo) approaches to approximate the integral are also possible: given a collection of grid points $\mathcal{S}=\{\bm\xi_1,\ldots,\bm\xi_m\}$, we approximate the integral as $I_{\mathcal{S}}(\bphi) := ({\Delta}/{m})\sum_{\ell=1}^m \exp\{-\cef[\bphi](\bm\xi_\ell)\}.$ This leads to an approximation of the objective $f$ given by
\begin{equation}\label{f-s-approximate}
f(\bphi) \approx \frac1n\bm1^\top\bphi +  I_{\mathcal{S}}(\bphi) =: f_{\mathcal{S}}(\bphi).
\end{equation}
Since $f_{\mathcal{S}}$ is a finite, convex function on $\mathbb{R}^n$, it has a subgradient at each $\bm\phi \in \mathbb{R}^n$, given by
\[
\bm g_{\mathcal{S}}(\bphi) := \frac{1}{m}\sum_{\ell=1}^m \bm G(\bphi,\bm\xi_\ell).
\]
As the effective domain of $\cef[\bphi](\cdot)$ is $C_{n}$, we consider grid points $\mathcal{S} \subseteq C_{n}$.

 
We now illustrate how these ideas allow us to approximate the gradient of the smoothing sequence, and initially consider Nesterov smoothing, with $\ell_u=\tilde{f}_u$.  If  
$\mathcal{S}=\{\bm \xi_1,\ldots,\bm \xi_m\} \subseteq C_n$ denotes a collection of grid points (either deterministic or Monte Carlo based), then $\nabla_{\bphi} \ell_{u}$ can be approximated by $\tilde{\bm{g}}_{u,\mathcal{S}}$, where
\begin{equation}
  \tilde{\bm{g}}_{u,\mathcal{S}}(\bphi) := \frac1n\bm1-\frac{\Delta}{m}\sum_{j=1}^me^{-q_u[\bphi](\bm\xi_j)}\alpha^*_u[\bphi](\bm\xi_j).\label{eqn:gtilde-S}
\end{equation}
In fact, we distinguish the cases of deterministic and random $\mathcal{S}$ by writing this approximation as $\tilde{\bm{g}}_{u,\mathcal{S}}^{\mathrm{D}}$ and $\tilde{\bm{g}}_{u,\mathcal{S}}^{\mathrm{R}}$ respectively.

For the randomized smoothing method with $\ell_u=\bar{f}_u$, the approximation is slightly more involved.  Given $m$ grid points $\mathcal{S}=\{\bm \xi_1,\ldots,\bm \xi_m\} \subseteq C_n$ (again either deterministic or random), and independent random vectors $\bm z_1,\ldots,\bm z_m$, each uniformly distributed on the unit Euclidean ball in $\R^n$, we can approximate $\nabla_{\bphi} \ell_{u}(\bphi)$ by
\begin{equation}\bar{\bm g}_{u,\mathcal{S}}^{\circ}(\bphi) = \frac1n\bm{1}-\frac{\Delta}{m}\sum_{j=1}^me^{-\cef[\bphi+u\bm z_j](\bm\xi_j)}\alpha^*[\bphi+u\bm z_j](\bm\xi_j),\label{eqn:gbar-S}\end{equation}
with $\circ \in \{\mathrm{D},\mathrm{R}\}$ again distinguishing the cases of deterministic and random $\mathcal{S}$.

\subsubsection{Related literature}

As mentioned above, Algorithm~\ref{algo:smoothing} is an accelerated version of the dual averaging method of \cite{nesterov2009primal}, which to the best of our knowledge has not been studied in the context of log-concave density estimation previously.  Nevertheless, related ideas have been considered for other optimization problems~\citep[e.g.][]{xiao2010dual,duchi2012randomized}.  Relative to previous work, our approach is quite general, in that it applies to both of the smoothing techniques discussed in Section~\ref{sec:smoothing-techs}, and allows the use of both deterministic and random grids to approximate the gradients of the smoothing sequence.  Another key difference with earlier work is that we allow the grid $\mathcal{S}$ to depend on $t$, so we write it as $\mathcal{S}_t$, with $m_t := |\mathcal{S}_t|$; in particular, inspired by both our theoretical results and numerical experience, we take $(m_{t})$ to be a suitable increasing sequence.

\section{Theoretical analysis of optimization error of Algorithm~\ref{algo:smoothing}}
\label{sec:theory}

We have seen in Propositions~\ref{prop:fhat-properties} and \ref{prop:fbar-properties} that the two smooth functions $\tilde{f}_{u}$ and $\bar{f}_{u}$ enjoy similar properties --- according to  Proposition~\ref{prop:fhat-properties}(a) to (c) and Proposition~\ref{prop:fbar-properties}(a) to (d), both $\tilde{f}_{u}$ and $\bar{f}_{u}$
satisfy the following assumption:
\begin{assumption}[Assumptions on smoothing sequence] \label{assump:smoothing-seq}
There exists $r \geq 0$ such that for any $\bphi\in {\bm{\Phi}}$,

\noindent(a) we can find $B_0 > 0$ with $f(\bphi)\leq \ell_{u}(\bphi)\leq f(\bphi)+B_0I(\bphi)u$ for all $u \in [0,r]$;

\noindent(b) $\ell_{u'}(\bphi)\leq \ell_u(\bphi)$ for all $u' \in [0,u]$;

\noindent(c) for each $u \in [0,r]$, the function $\bphi \mapsto \ell_u(\bphi)$ is convex and has ${B_1}/{u}$-Lipschitz gradient, for some $B_1>0$.
\end{assumption}

Recall from Section~\ref{sec:computation} that we have four possible choices corresponding to a combination of the smoothing and integral approximation methods, as summarized in Table~\ref{table:summary}.
\begin{table}[!h]
    \centering
    \caption{Summary of options for smoothing and gradient approximation methods.}
    \label{table:summary}
    \scalebox{0.92}{
    \begin{tabular}{rcc|rcc}
    \toprule
         Options& Smoothing & Approximation&Options& Smoothing & Approximation \\
         \midrule
        1 & $\breve{f}_u$ &$\tilde{\bm g}_{u,\mathcal{S}}^{\mathrm{D}}$ &3 & $\bar{f}_u$ & $\bar{\bm g}_{u,\mathcal{S}}^{\mathrm{D}}$ \\
        2 & $\breve{f}_u$ &$\tilde{\bm g}_{u,\mathcal{S}}^{\mathrm{R}}$ &      4 & $\bar{f}_u$ & $\bar{\bm g}_{u,\mathcal{S}}^{\mathrm{R}}$\\
        \bottomrule
    \end{tabular}}
\end{table}

Once we select an option, in line 3 of Algorithm~\ref{algo:smoothing}, we can take 
\[
  \bm{g}_t=\check{\bm g}_{u_t,\mathcal{S}_t}^{\circ}(\bphi_t^{(y)}),
\]
where $\check{~} \in \{\breve{~},~\bar{~}\}$ and $\circ \in \{\mathrm{D},\mathrm{R}\}$.  To encompass all four approximation choices in Line~3 of Algorithm~\ref{algo:smoothing}, we make the following assumption on the gradient approximation error $\bm e_t:= \bm g_t - \nabla_{\bphi} \ell_{u_t}(\bphi_t^{(y)})$:
\begin{assumption}[Gradient approximation error]\label{assump:error}
There exists $\sigma > 0$ such that 
\begin{equation}\label{cond-ass-1}
\E\bigl(\norm{\bm{e}_t}^2|\mathcal{F}_{t-1}\bigr) \leq  {\sigma^2}/{m_t}~~~~\text{for all } t \in \mathbb{N}_0,
\end{equation}
where $\mathcal{F}_{t-1}$ denotes the $\sigma$-algebra generated by all random sources up to iteration $t-1$ (with $\mathcal{F}_{-1}$ denoting the trivial $\sigma$-algebra).  
\end{assumption}

When $\mathcal{S}$ is a Monte Carlo random grid (options 2 and 4), the approximate gradient $\bm g_t$ is an average of $m_t$ independent and identically distributed random vectors, each being an unbiased estimator of $\nabla \ell_{u_t}(\bphi_t^{(y)})$. Hence,~\eqref{cond-ass-1} holds true with 
$\sigma^2$ determined by the bounds in Proposition~\ref{prop:fhat-properties}(d) (option 2) and Proposition~\ref{prop:fbar-properties}(e) (option 4).  For a deterministic Riemann grid $\mathcal{S}$ and Nesterov's smoothing technique (option 1), $\bm e_t$ is deterministic, and arises from using $\tilde{\bm g}_{u,\mathcal{S}}(\bphi)$ in~\eqref{eqn:gtilde-S} to approximate $\nabla_{\bphi} \tilde{f}_u(\bphi) =\E[\tilde{\bm G}_u(\bm\phi,\bm\xi)]$. For the deterministic Riemann grid and randomized smoothing (option 3), the error $\bm{e}_{t}$ can be decomposed into a random estimation error term (induced by $\bm z_1,\ldots,\bm z_{m_t}$) and a deterministic approximation error term (induced by $\bm\xi_1,\ldots,\bm\xi_{m_t}$) as follows:
\begin{align*}
    \bm e_t
    &=\frac{1}{m_t}\sum_{j=1}^{m_t}\bigl(\bm G(\bphi_t^{(y)}+u_t\bz_j,\bm\xi_j)-\E[\bm G(\bphi_t^{(y)}+u_t\bz,\bm\xi_j)|\mathcal{F}_{t-1}]\bigr) \\
    &\hspace{1cm}+\biggl(\frac{1}{m_t}\sum_{j=1}^{m_t}\E[\bm G(\bphi_t^{(y)}+u_t\bz,\bm\xi_j)|\mathcal{F}_{t-1}]-\E[\bm G(\bphi_t^{(y)}+u_t\bz,\bm\xi)|\mathcal{F}_{t-1}]\biggr).
\end{align*}
It can be shown using this decomposition that $\mathbb{E}(\|\bm e_t\|^2|\mathcal{F}_{t-1}) = O(1/m_t)$ under regularity conditions.




Theorem~\ref{thm:smoothing-guarantee-expectation} below establishes our desired computational guarantees for Algorithm~\ref{algo:smoothing}.  We write $D:=\sup_{\bphi,\tilde{\bphi}\in{\bm{\Phi}}} \| \bphi-\tilde{\bphi} \|$ for the diameter of $\bm \Phi$.
\begin{theorem}\label{thm:smoothing-guarantee-expectation}
Suppose that Assumptions~\ref{assump:smoothing-seq} and~\ref{assump:error} hold, and define the sequence $(\theta_t)_{t \in \mathbb{N}_0}$ by $\theta_0 := 1$ and $\theta_{t+1}:= 2\bigl(1+\sqrt{1+4/\theta_t^2}\bigr)^{-1}$ for $t \in \mathbb{N}_0$.  Let $u > 0$, let $u_t := \theta_t u$ and take $L_t=B_1/u_t$ and $\eta_t=\eta$ for all~$t \in \mathbb{N}_0$ as input parameters to Algorithm~\ref{algo:smoothing}.  Writing $M_T^{(1)}:=\sqrt{\sum_{t=0}^{T-1}m_t^{-1}}$ and $M_T^{(1/2)}:=\sum_{t=0}^{T-1}m_t^{-1/2}$, we have for any $\bphi \in {\bm{\Phi}}$ that
\begin{equation}\label{thm1-state-1}
\E[f(\bphi_T^{(x)})]-f(\bphi)\leq \frac{B_1D^2}{Tu}+\frac{4B_0I(\bphi)u}{T}+\frac{\eta D^2}{T}+\frac{\sigma^2(M_T^{(1)})^2}{T\eta}+\frac{2D\sigma M_T^{(1/2)}}{T}.
\end{equation}
In particular, taking $\bphi=\bphi^*$, and choosing $u=({D}/{2})\sqrt{B_1/B_0}$ and $\eta=({\sigma M_T^{(1)}})/{D}$, we obtain
\begin{equation}\label{thm1-state-2}
\varepsilon_{T} := \E[f(\bphi_T^{(x)})]-f(\bphi^*)\leq \frac{4\sqrt{B_0B_1}D}{T}+\frac{2\sigma DM_T^{(1)}}{T}+\frac{2D\sigma M_T^{(1/2)}}{T}.
\end{equation}
Moreover, if we further assume that $\E(\bm{e}_t|\mathcal{F}_{t-1})=\bm0$ (e.g.~by using options 2 and 4), then we can remove the last term of both inequalities above.
\end{theorem}
For related general results that control the expected optimization error for smoothed objective functions, see, e.g., \cite{nesterov2005smooth}, \cite{tseng2008accelerated}, \cite{xiao2010dual}, \cite{duchi2012randomized}.  With deterministic grids (corresponding to options 1 and 3), if we take $|\mathcal S_{t}| = m$ for all $t$, then $M_T^{(1/2)}=T/\sqrt{m}$, and the upper bound in~\eqref{thm1-state-2} does not converge to zero as $T \rightarrow \infty$.  On the other hand, if we take $|\mathcal{S}_t|=t^2$, for example, then $\sup_{T \in \mathbb{N}} M_T^{(1)} < \infty$ and $M_T^{(1/2)}= \tilde{O}(1)$, and we find that $\varepsilon_{T} = \tilde{O}(1/T)$.  For random grids (options 2 and 4), if we take $|\mathcal S_{t}| = m$ for all $t$, then $M_T^{(1)}=\sqrt{T/m}$ and we recover the $\varepsilon_{T} = O(1/\sqrt{T})$ rate for stochastic subgradient methods \citep{polyak1987introduction}.  This can be improved to $\varepsilon_{T} =\tilde{O}(1/T)$ with $m_t = t$, or even $\varepsilon_{T} = O(1/T)$ if we choose $(m_t)_t$ such that $\sum_{t=0}^\infty m_t^{-1} < \infty$.  
A direct application of the theory of \cite{duchi2012randomized} would yield an error rate of $\varepsilon_{T} = O(n^{1/4}/T + 1/\sqrt{T})$.  On the other hand, Theorem~\ref{thm:smoothing-guarantee-expectation} shows that, owing to the increasing sequence of grid sizes used to approximate the gradients in Step~3 of Algorithm~\ref{algo:smoothing}, we can improve this rate to $\tilde{O}(1/T)$. Note however, 
that this improvement is in terms of the number of iterations $T$, and not the total number of stochastic oracle queries (equivalently, the total number of LPs~\eqref{eqn:cef-LP}), which is given by $T_{\mathrm{query}}:=\sum_{t=0}^{T-1}m_t$. \cite{agarwal2012information} and \cite{nemirovsky1983problem} have shown that 
the optimal expected number of stochastic oracle queries is of order $1/\sqrt{T_{\mathrm{query}}}$, which is attained by the algorithm of~\cite{duchi2012randomized}.
For our framework, by taking $m_t=t$, we have $T_{\mathrm{query}}=\sum_{t=0}^{T-1} m_t=\tilde{O}(T^2)$, so after $T_{\mathrm{query}}$ stochastic oracle queries, our algorithm also attains the optimal error on the objective function scale, up to a logarithmic factor.  Other advantages of our algorithm and the theoretical guarantees provided by Theorem~\ref{thm:smoothing-guarantee-expectation} relative to the earlier contributions of \cite{duchi2012randomized} are that we do not require an upper bound on $I(\bphi)$ and are able to provide a unified framework that includes Nesterov smoothing and an alternative gradient approximation approach by numerical integration in addition to randomized smoothing scheme with stochastic gradients.  Moreover, we can exploit the specific structure of the log-concave density estimation problem to provide much better Lipschitz constants for the randomized smoothing sequence than would be obtained using the generic constants of \cite{duchi2012randomized}.  For example, our upper bound in Proposition~\ref{prop:fbar-properties}(a) is of order $O(n^{-1/2}\log^{1/2}n)$, whereas a naive application of the general theory of \cite{duchi2012randomized} would only yield a bound of $O(1)$.  A further improvement in our bound comes from the fact that it now involves $I(\bphi)$ directly, as opposed to an upper bound on this quantity.

In Theorem~\ref{thm:smoothing-guarantee-expectation}, the computational guarantee depends upon $B_0,B_1,\sigma$ in Assumptions~\ref{assump:smoothing-seq} and~\ref{assump:error}.  In light of Propositions~\ref{prop:fhat-properties} and~\ref{prop:fbar-properties}, Table~\ref{table:smoothing-comparison} illustrates how these quantities, and hence the corresponding guarantees, differ according to whether we use Nesterov smoothing or randomized smoothing.

The randomized smoothing procedure requires solving LPs, whereas Nesterov's smoothing technique requires solving QPs. While both of these problems are highly structured and can be solved efficiently by off-the-shelf solvers \citep[e.g.,][]{gurobi}, we found the LP solution times to be faster than those for the QP. Additional computational details are discussed in Section~\ref{sec:compute}.

\begin{table}[h]
    \centering
        \caption{Comparion of constants in Assumption~\ref{assump:smoothing-seq} for different smoothing schemes with $u \in [0,r]$. Here, $\sigma$ corresponds to random grid points (options 2 and 4), the optimal $\eta$ is taken to be proportional to $\sigma$, the optimal $u$ is proportional to $\sqrt{B_1/B_0}$, we take $C_1= \sqrt{\Delta e^{-\phi_0}}$; $\sqrt{B_0B_1}$ determines the first term in the error rate.}
    \label{table:smoothing-comparison}
    \resizebox{0.95\textwidth}{!}{
    \begin{tabular}{l|ccccc}
    \toprule
         &$B_0$&$B_1$&$\sigma~(\eta\propto\sigma)$&$u~(\propto\sqrt{B_1/B_0})$ &$\sqrt{B_0B_1}$ \\
         \midrule
         Nesterov&$1/2$&$\Delta e^{-\phi_0+r/2}(r+1)$&$\Delta e^{-\phi_0+r/2}$&\multicolumn{2}{c}{$C_1e^{r/4}\sqrt{(r+1)}O(1)$}\\
         Randomized& $\sqrt{2n^{-1}\log n}e^r$&$\Delta e^{-\phi_0+r}\sqrt{n}$&$\Delta e^{-\phi_0+r}$&$C_1\tilde{O}(\sqrt{n})$&$C_1e^r\tilde{O}(1)$\\
         \bottomrule
    \end{tabular}}
\end{table}

Note that Theorem~\ref{thm:smoothing-guarantee-expectation} presents error bounds in expectation, though for option 1, since we use Nesterov's smoothing technique and the Riemann sum approximation of the integral, the guarantee in Theorem~\ref{thm:smoothing-guarantee-expectation} holds without the need to take an expectation.  Theorem~\ref{thm:smoothing-guarantee-highprob} below presents corresponding high-probability guarantees. For simplicity, we present results for options 2 and 4, which rely on the following assumption:
\begin{assumption}\label{assump:error-subGaussian}
Assume that $\E(\bm{e}_t|\mathcal{F}_{t-1})=\bm{0}$ and that $\mathbb{E}\bigl(e^{\|\bm e_t\|^2/\sigma_t^2} \mid \mathcal{F}_{t-1}\bigr) \leq e$, where $\sigma_t=\sigma/\sqrt{m_t}$. 
\end{assumption}
\begin{theorem}\label{thm:smoothing-guarantee-highprob}
  Suppose that Assumptions~\ref{assump:smoothing-seq} and~\ref{assump:error-subGaussian} hold, and define the sequence $(\theta_t)_{t \in \mathbb{N}_0}$ by $\theta_0 := 1$ and $\theta_{t+1}:= 2\bigl(1+\sqrt{1+4/\theta_t^2}\bigr)^{-1}$ for $t \in \mathbb{N}_0$.  Let $u > 0$, let $u_t := \theta_t u$ and take $L_t=B_1/u_t$ and $\eta_t=\eta$ for all~$t \in \mathbb{N}_0$ as input parameters to Algorithm~\ref{algo:smoothing}.  Writing $M_T^{(2)}:=\sqrt{\sum_{t=0}^{T-1}m_t^{-2}}$ and $M_T^{(1)}:=\sqrt{\sum_{t=0}^{T-1}m_t^{-1}}$, and choosing $u=({D}/{2})\sqrt{B_1/B_0}$ and $\eta=({\sigma M_T^{(1)}})/{D}$ as in Theorem~\ref{thm:smoothing-guarantee-expectation}, for any $\delta \in (0,1)$, we have
with probability at least $1-\delta$ that
\begin{align*}
      f(\bphi_T^{(x)})-f(\bphi^*)&\leq \frac{2\sqrt{B_0B_1}D}{T}+\frac{\sigma DM_T^{(1)}}{T}+\frac{4\sigma DM_T^{(1)}\sqrt{\log\frac2\delta}}{T}\\
      &\hspace{2.5cm}+\frac{4\sigma D\max\bigl\{M_T^{(2)}\sqrt{2e\log\frac2\delta},m_0^{-1}\log\frac2\delta\bigr\}}{M_T^{(1)} T}.
\end{align*}

\end{theorem}

For option 3, we would need to consider the approximation error from the Riemann sum, and the final error rate would include additional $O(1/T)$ terms. We omit the details for brevity. 


Finally in this section, we relate the error of the objective to the error in terms of $\bphi$, as measured through the squared $L_2$ distance between the corresponding lower convex envelope functions. 
\begin{theorem}\label{thm:error-in-cef}
For any $\bphi\in{\bm{\Phi}}$, we have
\begin{equation}
    \int_{C_n}\bigl\{\cef[\bphi](\bx)-\cef[\bphi^\ast](\bx)\bigr\}^2\,d\bx\leq 2e^{\phi^0}\bigl\{f(\bphi)-f(\bphi^\ast)\bigr\}.
\end{equation}
\end{theorem}

\section{Beyond log-concave density estimation}
\label{Sec:sconcave}
In this section, we extend our computational framework beyond the log-concave density family, through the notion of $s$-concave densities. 
For $s\in \R$, define domains $\mathcal{D}_s$ and $\psi_s:\mathcal{D}_s \rightarrow \R$ by
\[
\mathcal{D}_s :=\left\{\begin{array}{ll}[0,\infty)&\text{ if } s<0\\(-\infty,\infty)&\text{ if } s= 0\\(-\infty,0]&\text{ if } s>0.\end{array}\right.~~~~\text{and}~~~
 \psi_s(y) := \left\{\begin{array}{ll}y^{1/s}&\text{ if } s<0\\e^{-y}&\text{ if } s= 0\\(-y)^{1/s}&\text{ if } s>0.\end{array}\right.
\]
\begin{definition}[$s$-concave density, \cite{seregin2010nonparametric}]\label{def:s-concave}
For $s\in \R$, 
the class $\mathcal{P}_s(\R^d)$ of $s$-concave density functions on $\R^d$ is given by
\begin{align*}
  &\mathcal{P}_s(\R^d) \\
  &\hspace{0.2cm}:= \biggl\{p(\cdot):p =\psi_s\circ\varphi \ \mbox{for some} \ \varphi\in \mathcal{C}_d \ \mbox{with} \ \mathrm{Im}(\varphi) \subseteq \mathcal{D}_s \cup \{\infty\}, \int_{\R^d}p =1 \biggr\}.
\end{align*}
\end{definition}
For $s=0$, the family of $s$-concave densities reduces to the family of log-concave densities.   Moreover, for $s_1 < s_2$, we have $\mathcal{P}_{s_2}(\R^d)\subseteq \mathcal{P}_{s_1}(\R^d)$ \cite[p.~86]{dharmadhikari1988unimodality}.  The $s$-concave density family introduces additional modelling flexibility, in particular allowing much heavier tails when $s < 0$ than the log-concave family, but we note that there is no guidance available in the literature on how to choose $s$.

For the problem of $s$-concave density estimation, we discuss two estimation methods, both of which have been previously considered in the literature, but for which there has been limited algorithmic development.  The first is based on the maximum likelihood principle (Section~\ref{sec:s-concave-MLE}), while the other is based on minimizing a R\'enyi divergence (Section~\ref{sec:s-concave-renyi}).  


\subsection{Computation of the \texorpdfstring{$s$}{s}-concave maximum likelihood estimator}\label{sec:s-concave-MLE}

\cite{seregin2010nonparametric} proved that a maximum likelihood estimator over $\mathcal{P}_s(\R^d)$ exists with probability one for $s\in(-1/d,\infty)$ and $n > \max\bigl(\frac{dr}{r-d},d\bigr)$, where $r := -1/s$, and does not exist if $s<-1/d$.  \cite{doss2016global} provide some statistical properties of this estimator when $d=1$.  The maximum likelihood estimation problem is to compute
\begin{equation}\label{eqn:MLE-psi-functional}
\hat{p}_n:=\argmax_{p\in\mathcal{P}_s(\R^d)} \sum_{i=1}^n\log p(\bx_i),
\end{equation}
or equivalently,
\begin{equation}\label{eqn:MLE-psi-varphi}
\argmax_{\varphi \in\mathcal{C}_d: \mathrm{Im}(\varphi) \subseteq \mathcal{D}_s\cup\{\infty\}}~\frac{1}{n}\sum_{i=1}^n\log \psi_s\circ\varphi(\bx_i) \quad\text{subject to}\quad \int_{\R^d}\psi_s\circ\varphi(\bx)\dx=1.    
\end{equation}
We establish the following theorem:
\begin{theorem}\label{thm:s-concave-MLE}
Let $s \in [0,1]$ and suppose that the convex hull $C_n$ of the data is $d$-dimensional (so that the $s$-concave MLE $\hat{p}_n$ exists and is unique). Then computing $\hat{p}_n$ in~\eqref{eqn:MLE-psi-functional} is equivalent to the convex minimization problem of computing
\begin{equation}
\bphi^* := \argmin_{\bphi = (\phi_1,\ldots,\phi_n) \in \mathcal{D}_s^n}\biggl\{-\frac{1}{n}\sum_{i=1}^n\log \psi_s(\phi_i)+ \int_{C_n}\psi_s\bigl(\cef[\bphi](\bx)\bigr)\, d\bx\biggr\},\label{eqn:MLE-psi-phi}\end{equation}
in the sense that $\hat{p}_n = \psi_s \circ \cef[\bphi^*]$.
\end{theorem}

\begin{remark}
The equivalence result in Theorem~\ref{thm:s-concave-MLE} holds for any $s$ (outside $[0,1]$) as long as the $s$-concave MLE exists. 
However, when $s\in[0,1]$, \eqref{eqn:MLE-psi-phi} is a convex optimization problem.  The family of $s$-concave densities with $s<0$ appears to be more useful from a statistical viewpoint as it allows for heavier tails than log-concave densities, but the MLE cannot be then computed via convex optimization.  Nevertheless, the entropy minimization methods discussed in Section~\ref{sec:s-concave-renyi} can be used to obtain $s$-concave density estimates for $s > -1$. 
\end{remark}



\subsection{Quasi-concave density estimation}\label{sec:s-concave-renyi}
Another route to estimate an $s$-concave density (or even a more general class) is via the following problem:
\begin{equation}\label{eqn:quasi-concave}\check\varphi := \argmin_{\varphi\in\mathcal{C}_d:\mathrm{dom}(\varphi) = C_n} \biggl\{\frac{1}{n}\sum_{i=1}^n\varphi(\bx_i)+\int_{C_n}\Psi\bigl(\varphi(\bx)\bigr)\dx\biggr\},\end{equation}
where $\Psi:\R \rightarrow (-\infty,\infty]$ is a decreasing, proper convex function.  When $\Psi(y) = e^{-y}$, \eqref{eqn:quasi-concave} is equivalent to the MLE for log-concave density estimation \eqref{Eq:MLE}, by \citet[Theorem~1]{cule2010maximum}. This problem, proposed by \cite{koenker2010quasi}, is called {\emph{quasi-concave}} density estimation. \citet[Theorem~4.1]{koenker2010quasi} show that under some assumptions on $\Psi$, there exists a solution to~\eqref{eqn:quasi-concave}, and if $\Psi$ is strictly convex, then the solution is unique. Furthermore, if $\Psi$ is differentiable on the interior of its domain, then the optimal solution to the dual of \eqref{eqn:quasi-concave} is a probability density $p$ such that $p=-\Psi'(\varphi)$, and the dual problem can be regarded as minimizing different distances or entropies (depending on~$\Psi$) between the empirical distribution of the data and $p$.  In particular, when $\beta \geq 1$ and $\Psi(y) = \mathbbm{1}_{\{y \leq 0\}}(-y)^\beta/\beta$, and when $\beta < 0$ and $\Psi(y) = -y^\beta/\beta$ for $y \geq 0$ (with $\Psi(y) = \infty$ otherwise), the dual problem of~\eqref{eqn:quasi-concave} is essentially minimizing the R\'enyi divergence and we have the primal-dual relationship  $p=|\varphi|^{\beta-1}$.  In fact, this amounts to estimating an $s$-concave density via R\'enyi divergence minimization with $\beta=1+1/s$ and $s \in (-1,\infty) \setminus \{0\}$.  We therefore consider the problem 
\begin{equation}\label{eqn:quasi-concave-s}
    \min_{\substack{\varphi\in\mathcal{C}_d:\mathrm{dom}(\varphi) = C_n\\\mathrm{Im}(\varphi) \subseteq \mathcal{D}_s}} \biggl\{\frac{1}{n}\sum_{i=1}^n\varphi(\bx_i)+\frac{1}{|1+1/s|}\int_{C_n}|\varphi(\bx)|^{1+1/s}\dx\biggr\}.
\end{equation}
The proof of Theorem~\ref{thm:quasi-concave} is similar to that of Theorem~\ref{thm:s-concave-MLE}, and is omitted for brevity. 
\begin{theorem}\label{thm:quasi-concave}
Given a decreasing proper convex function $\Psi$, the quasi-concave density estimation problem \eqref{eqn:quasi-concave} is equivalent to the following convex problem: 
\begin{equation}\bphi^* := \argmin_{\bphi\in\mathcal{D}_s^n}\biggl\{\frac{1}{n}\bm{1}^\top\bphi+ \int_{C_n}\Psi\bigl(\cef[\bphi](\bx)\bigr)\dx\biggr\},\label{eqn:quasi-concave-phi}\end{equation}
in the sense that $\check{\varphi} = \cef[\bphi^*]$, with corresponding density estimator $\tilde{p}_n = -\Psi' \circ \cef[\bphi^*]$.  
\end{theorem}
The objective in~\eqref{eqn:quasi-concave-phi} is convex, so our computational framework can be applied to solve this problem.

\section{Computational experiments on simulated data}\label{sec:compute}
In this section, we present numerical experiments to study the different variants of our algorithm
and compare them with existing methods based on convex optimization for the log-concave MLE.  Our results are based on large-scale synthetic datasets with $n \in \{5{,}000,10{,}000\}$ observations generated from standard $d$-dimensional normal and Laplace distributions with $d=4$. 
Code for our experiments is available from the github repository \texttt{LogConcComp} available at: 
\begin{center}
\url{https://github.com/wenyuC94/LogConcComp}. 
\end{center}

All computations were carried out on the MIT Supercloud Cluster \citep{reuther2018interactive} on an Intel Xeon Platinum 8260 machine, with 24 CPUs and 24GB of RAM. Our algorithms were written in Python; we used Gurobi~\citep{gurobi} to solve the LPs and QPs.

Our first comparison method is that of \cite{cule2010maximum}, implemented in the \texttt{R} package \texttt{LogConcDEAD} \citep{JSSv029i02}, and denoted by CSS.  The CSS algorithm terminates when $\|\bphi^{(t)} - \bphi^{(t-1)}\|_\infty \leq \tau$, and we consider $\tau \in \{10^{-2},10^{-3},10^{-4}\}$.   Our other competing approach is the randomized smoothing method of \cite{duchi2012randomized}, with random grids of a fixed grid size, which we denote here by RS-RF-$m$, with $m$ being the grid size.  To the best of our knowledge, this method has not been used to compute the log-concave MLE previously.

We denote the different variants of our algorithm as $\text{Alg-$V$}$, where $\text{Alg}\in\{\text{RS,NS}\}$ represents Algorithm~\ref{algo:smoothing} with Randomized smoothing and Nesterov smoothing, and $V\in\{\text{DI},\text{RI}\}$ represents whether we use deterministic or random grids of increasing grid sizes to approximate the gradient. Further details of our input parameters are given in Appendix~\ref{app:expt-setting}.

Figure~\ref{fig:profile} presents the relative objective error, defined for an algorithm with iterates $\bphi_1,\ldots,\bphi_t$ as
\begin{equation}
\label{Eq:RelObj}
\texttt{relobj}(t) := \biggl|\frac{\min_{s\in[t]}f(\bphi_s)-f(\bphi^*)}{f(\bphi^*)}\biggr|,
\end{equation}
against time (in minutes) and number of iterations.  In the definition of the relative objective error in~\eqref{Eq:RelObj} above, $\bphi^*$ is taken as the CSS solution with tolerance $\tau=10^{-4}$.  The figure shows that randomized smoothing appears to outperform Nesterov smoothing in terms of the time taken to reach a desired relative objective error, since the former solves an LP~\eqref{eqn:cef-LP}, whereas the latter has to solve a QP~\eqref{eqn:h-QP}; the number of iterations taken by the different methods is, however, similar.  There is no clear winner between randomized and deterministic grids, and both appear to perform well.

Table~\ref{table:obj} compares our proposed methods against the CSS solutions with different tolerances, in terms of running time, final objective function, and distances of the algorithm outputs to the optimal solution~$\bphi^*$ and the truth $\bphi^{\text{truth}}$.  We find that all of our proposals yield marked improvements in running time compared with the CSS solution: with $n=10{,}000$, $d=4$ and $\tau = 10^{-4}$, CSS takes more than 20 hours for all of the data sets we considered, whereas the RS-DI variant is around 50 times faster.  The CSS solution may have a slightly improved objective function value on termination, but as shown in Table~\ref{table:obj}, all of our algorithms achieve an optimization error that is small by comparison with the statistical error, and from a statistical perspective, there is no point in seeking to reduce the optimization error further than this.  Table~\ref{table:stats-dist} shows that the distances $\|\bphi^* - \bphi^{\mathrm{truth}}\|/n^{1/2}$ are well concentrated around their means (i.e.~do not vary greatly over different random samples drawn from the underlying distribution), which provides further reassurance that our solutions are sufficiently accurate for practical purposes.   On the other hand, the CSS solution with tolerance $10^{-3}$ is not always sufficiently reliable in terms of its statistical accuracy, e.g.~for a Laplace distribution with $n=5{,}000$.  Our further experiments on real data sets reported in Appendix~\ref{subsec:additional-expts} provide qualitatively similar conclusions.  


Finally, Figure~\ref{fig:profile-RS} compares our proposed multistage increasing grid sizes (RS-DI/RS-RI) (see Tables~5 and~6) with the fixed grid size (RS-RF) proposed by \cite{duchi2012randomized}, under the randomized smoothing setting.  We see that the benefits of using the increasing grid sizes as described by our theory carry over to improved practical performance, both in terms of running time and number of iterations.

\begin{table}[t!]
\centering
\caption{Comparison of our proposed methods with the CSS solution \citep{cule2010maximum} and RS-RF \citep{duchi2012randomized}.  On a single dataset, we ran 5 repetitions of each algorithm with different random seeds and report the median statistics.  Here, \texttt{obj} and \texttt{relobj} denote the objective and relative objective error, respectively, \texttt{runtime} denotes the running time (in minutes), \texttt{dopt} and \texttt{dtruth} denote the (Euclidean) distances between the algorithm outputs and the optimal solution and the truth, respectively, \texttt{iter} denotes the number of iterations, \texttt{tO} denotes the total number of oracles (grid points), \texttt{aO} denotes the average number of oracles (grid points) per iteration, and \texttt{hO} denotes the harmonic average of grid sizes (which equals $T/(M_T^{(1)})^2$).  For CSS, \texttt{param} is the tolerance $\tau$; for RS-RF, \texttt{param} is the (fixed) grid size $m$. Here `-' means the running time of the corresponding algorithm exceeded 20 hours.}
\label{table:obj}
Normal, $n=5{,}000,d=4$\\
\resizebox{0.65\textwidth}{!}{\begin{tabular}{rrrrrrrrrrr}
\toprule
 algo & \texttt{param} &    \texttt{obj} & \texttt{relobj} & \texttt{runtime} &   \texttt{dopt} & \texttt{dtruth} & \texttt{iter} & \texttt{tO} & \texttt{aO} & \texttt{hO}\\
\midrule
   \multirow{3}{*}{CSS} &  1e-2 & 6.5209 & 1.1e-03 &   10.15 & 0.1955 & 0.2788 &        &        &        \\
   &  1e-3 & 6.5146 & 9.8e-05 &  110.04 & 0.0612 & 0.2465 &        &        &        \\
   &  1e-4 & 6.5140 & 0.0e-00 &  829.55 & 0.0000 & 0.2454 &        &        &        \\\midrule
RS-DI &  None & 6.5144 & 7.0e-05 &   16.04 & 0.0227 & 0.2499 & 128 & 6.18M & 48.31K & 20.23K \\
RS-RI &  None & 6.5150 & 1.6e-04 &   31.05 & 0.0289 & 0.2502 & 128 & 6.88M & 53.75K & 32.00K \\
NS-DI &  None & 6.5144 & 7.1e-05 &   89.94 & 0.0259 & 0.2497 & 128  &  6.18M & 48.31K & 20.23K \\
NS-RI &  None & 6.5149 & 1.5e-04 &  102.23 & 0.0312 & 0.2502  & 128 &  6.88M & 53.75K & 32.00K \\\midrule
 \multirow{5}{*}{RS-RF}&  5000& 6.5174 & 5.2e-04 &   30.22 & 0.0575 & 0.2552 & 1024& 5.12M &  5.00K &  5.00K \\
 &10000 & 6.5168 & 4.4e-04 &   25.48 & 0.0429 & 0.2508 & 512 & 5.12M & 10.00K & 10.00K \\
  & 20000 & 6.5164 & 3.7e-04 &   23.05 & 0.0552 & 0.2567 & 256&  5.12M & 20.00K & 20.00K \\
 & 40000 & 6.5158 & 2.9e-04 &   21.31 & 0.0344 & 0.2496 & 128&  5.12M & 40.00K & 40.00K \\
 & 80000 & 6.5150 & 1.6e-04 &   42.25 & 0.0288 & 0.2499 & 128& 10.24M & 80.00K & 80.00K \\
\bottomrule
\end{tabular}}\\
Laplace, $n=5{,}000,d=4$\\
\resizebox{0.65\textwidth}{!}{\begin{tabular}{rrrrrrrrrrr}
\toprule
 algo & \texttt{param} &    \texttt{obj} & \texttt{ relobj} & \texttt{runtime} &   \texttt{dopt} & \texttt{dtruth} & \texttt{iter}& \texttt{tO} & \texttt{aO}& \texttt{hO}\\
\midrule
   \multirow{3}{*}{CSS} &  1e-2 & 7.9183 & 3.0e-02 &   30.77 & 2.5100 & 2.5449 &        &        &        \\
   &  1e-3 & 7.6994 & 1.1e-03 &  387.34 & 0.5985 & 0.6514 &        &        &        \\
   &  1e-4 & 7.6908 & 0.0e-00 & 1007.80 & 0.0000 & 0.2552 &        &        &        \\\midrule
RS-DI &  None & 7.6988 & 1.0e-03 &   17.64 & 0.0592 & 0.2304 & 128&  7.24M & 56.54K & 34.84K \\
RS-RI &  None & 7.6939 & 4.0e-04 &   31.32 & 0.0424 & 0.2632 & 128& 6.88M & 53.75K & 32.00K \\
NS-DI &  None & 7.6989 & 1.0e-03 &  109.68 & 0.0640 & 0.2259 &  128& 7.24M & 56.54K & 34.84K \\
NS-RI &  None & 7.6943 & 4.5e-04 &  107.57 & 0.0362 & 0.2601 &  128& 6.88M & 53.75K & 32.00K \\\midrule
 \multirow{5}{*}{RS-RF} &  5000 & 7.7048 & 1.8e-03 &   31.43 & 0.0628 & 0.2732 & 1024&  5.12M &  5.00K &  5.00K \\
& 10000 & 7.7059 & 2.0e-03 &   27.33 & 0.0621 & 0.2745 & 512& 5.12M & 10.00K & 10.00K \\
 & 20000 & 7.6986 & 1.0e-03 &   24.53 & 0.0527 & 0.2696 &  256&5.12M & 20.00K & 20.00K \\
 & 40000 & 7.6979 & 9.2e-04 &   22.68 & 0.0852 & 0.2776 &  128&5.12M & 40.00K & 40.00K \\
& 80000 & 7.6949 & 5.4e-04 &   43.27 & 0.0349 & 0.2614 & 128& 10.24M & 80.00K & 80.00K\\
\bottomrule
\end{tabular}}\\
Normal, $n=10{,}000,d=4$\\
\resizebox{0.65\textwidth}{!}{\begin{tabular}{rrrrrrrrrrr}
\toprule
 algo & \texttt{param} &    \texttt{obj} & \texttt{ relobj} & \texttt{runtime} &   \texttt{dopt} & \texttt{dtruth} & \texttt{iter}& \texttt{tO} & \texttt{aO} & \texttt{hO}\\
\midrule
   \multirow{3}{*}{CSS} &  1e-2 & 6.5634 & 4.1e-04 &   24.72 & 0.1018 & 0.1911 &        &        &        \\
   &  1e-3 & 6.5612 & 7.3e-05 &  181.01 & 0.0462 & 0.1854 &        &        &        \\
   &  1e-4 & 6.5607 & 0.0e-00 & - & 0.0000 & 0.1859 &        &        &        \\\midrule
RS-DI &  None & 6.5621 & 2.1e-04 &   35.40 & 0.0411 & 0.1939 & 128&  6.67M & 52.14K & 21.85K \\
RS-RI &  None & 6.5626 & 3.0e-04 &   65.18 & 0.0443 & 0.1939 & 128& 6.88M & 53.75K & 32.00K \\
NS-DI &  None & 6.5620 & 2.0e-04 &  207.32 & 0.0429 & 0.1953 & 128& 6.67M & 52.14K & 21.85K \\
NS-RI &  None & 6.5625 & 2.8e-04 &  215.51 & 0.0452 & 0.1959 & 128& 6.88M & 53.75K & 32.00K \\\midrule
\multirow{5}{*}{RS-RF} &  {5000} & 6.5690 & 1.3e-03 &   64.80 & 0.1097 & 0.2205 &  1024& 5.12M &  5.00K &  5.00K \\
 & {10000} & 6.5704 & 1.5e-03 &   56.26 & 0.0470 & 0.1854 & 512& 5.12M & 10.00K & 10.00K \\
& {20000} & 6.5656 & 7.5e-04 &   50.17 & 0.0412 & 0.1890 & 256& 5.12M & 20.00K & 20.00K \\
 & {40000} & 6.5638 & 4.7e-04 &   46.24 & 0.0478 & 0.1948 & 128& 5.12M & 40.00K & 40.00K \\
 & 80000 & 6.5627 & 3.0e-04 &   89.52 & 0.0446 & 0.1948 & 128&10.24M & 80.00K & 80.00K\\
\bottomrule
\end{tabular}}\\
Laplace, $n=10{,}000,d=4$\\
\resizebox{0.65\textwidth}{!}{\begin{tabular}{rrrrrrrrrrr}
\toprule
 algo & \texttt{param} &    \texttt{obj} & \texttt{ relobj} & \texttt{runtime} &   \texttt{dopt} & \texttt{dtruth} & {\texttt{iter}} & {\texttt{tO}} & {\texttt{aO}} & {\texttt{hO}}\\
\midrule
   \multirow{3}{*}{CSS} &  1e-2 & 8.1796 & 5.8e-02 &   57.46 & 3.9044 & 3.9328 &        &        &        \\
   &  1e-3 & 7.7327 & 4.6e-04 & - & 0.3470 & 0.4081 &        &        &        \\
   &  1e-4 & 7.7292 & 0.0e-00 & - & 0.0000 & 0.2025 &        &        &        \\\midrule
RS-DI &  None & 7.7401 & 1.4e-03 &   42.70 & 0.0886 & 0.1825 & 128& 8.14M & 63.60K & 39.20K \\
RS-RI &  None & 7.7370 & 1.0e-03 &   65.67 & 0.0753 & 0.2295 & 128& 6.88M & 53.75K & 32.00K \\
NS-DI &  None & 7.7399 & 1.4e-03 &  263.40 & 0.0801 & 0.1724 & 128& 8.14M & 63.60K & 39.20K \\
NS-RI &  None & 7.7365 & 9.4e-04 &  225.80 & 0.0480 & 0.2159 & 128& 6.88M & 53.75K & 32.00K \\\midrule
\multirow{5}{*}{RS-RF} &  {5000} & 7.7541 & 3.2e-03 &   64.93 & 0.1051 & 0.2499 & 1024& 5.12M &  5.00K &  5.00K \\
 & { 10000} & 7.7543 & 3.2e-03 &   57.87 & 0.0918 & 0.2435 & 512& 5.12M & 10.00K & 10.00K \\
& {20000} & 7.7468 & 2.3e-03 &   51.25 & 0.1157 & 0.2529 & 256& 5.12M & 20.00K & 20.00K \\
& {40000} & 7.7511 & 2.8e-03 &   46.31 & 0.0907 & 0.2423 & 128& 5.12M & 40.00K & 40.00K \\
 & 80000 & 7.7378 & 1.1e-03 &   89.13 & 0.0529 & 0.2230 & 128&10.24M & 80.00K & 80.00K \\
\bottomrule
\end{tabular}}
\end{table}

\begin{table}[ht]
    \centering
    \caption{Statistics of the distance between the optimal solution and truth. For each type of data set, we drew 40 random samples of the sizes given, and computed the log-concave MLE by CSS with tolerance $10^{-4}$.}
    \label{table:stats-dist}
    \scalebox{0.9}{
    \begin{tabular}{lrrrrrrr}
\toprule
{} &   mean &    std.error &    min &    25\% &    50\% &    75\% &    max \\
\midrule
normal ($n=5{,}000$)   & 0.2565 & 0.0093 & 0.2415 & 0.2480 & 0.2574 & 0.2629 & 0.2745 \\
Laplace ($n=5{,}000$)  & 0.2590 & 0.0114 & 0.2366 & 0.2508 & 0.2578 & 0.2676 & 0.2825 \\
\bottomrule
\end{tabular}}
\end{table}

\begin{figure}[ht]
    \centering
\resizebox{0.92\textwidth}{!}{\begin{tabular}{r c c}
\multicolumn{3}{c}{ {Objective Profiles for Normal, $n=5{,}000$, $d=4$ }}\\
\rotatebox{90}{  { {~~~~~~~~~~~~~$\texttt{relobj}$}}}&\includegraphics[width =0.48 \textwidth,trim = 1cm 0cm 2.5cm 2cm,clip ]{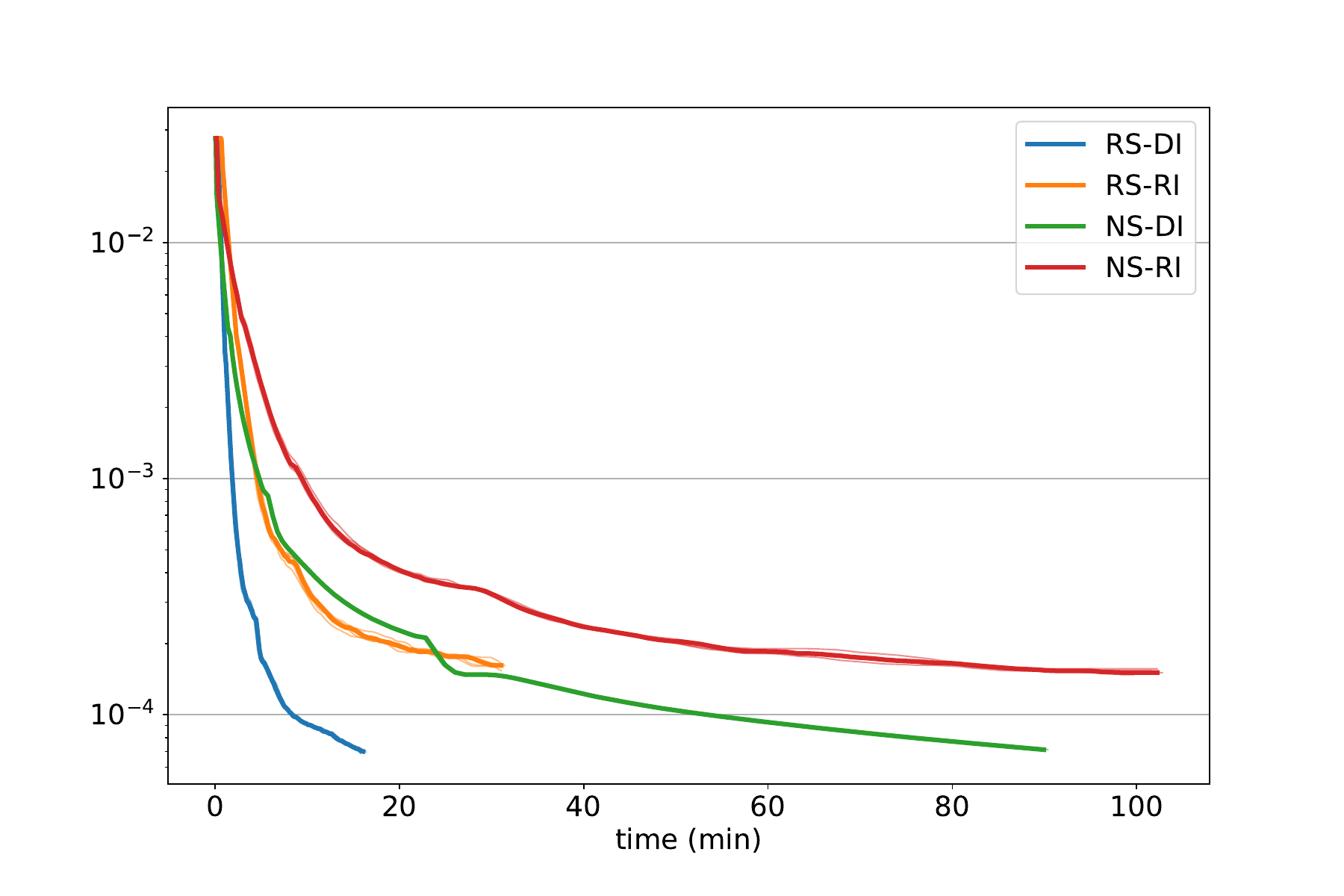}&\includegraphics[width =0.48 \textwidth,trim = 1cm 0cm 2.5cm 2cm,clip ]{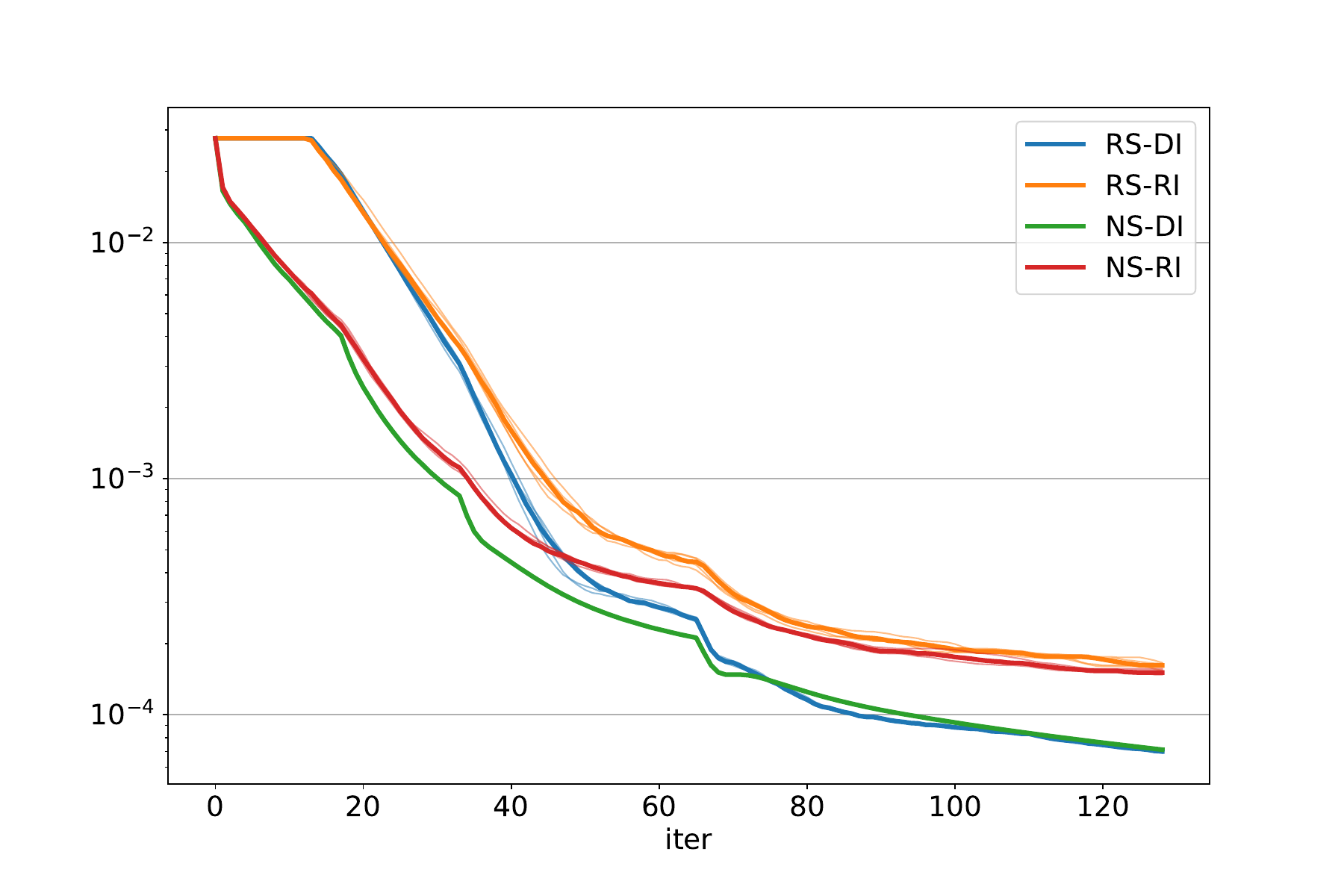}\\
 \multicolumn{3}{c}{ {Objective Profiles for Laplace, $n=5{,}000$, $d=4$}} \\
\rotatebox{90}{  { {~~~~~~~~~~~~~$\texttt{relobj}$}}}&\includegraphics[width =0.48 \textwidth,trim = 1cm 0cm 2.5cm 2cm,clip ]{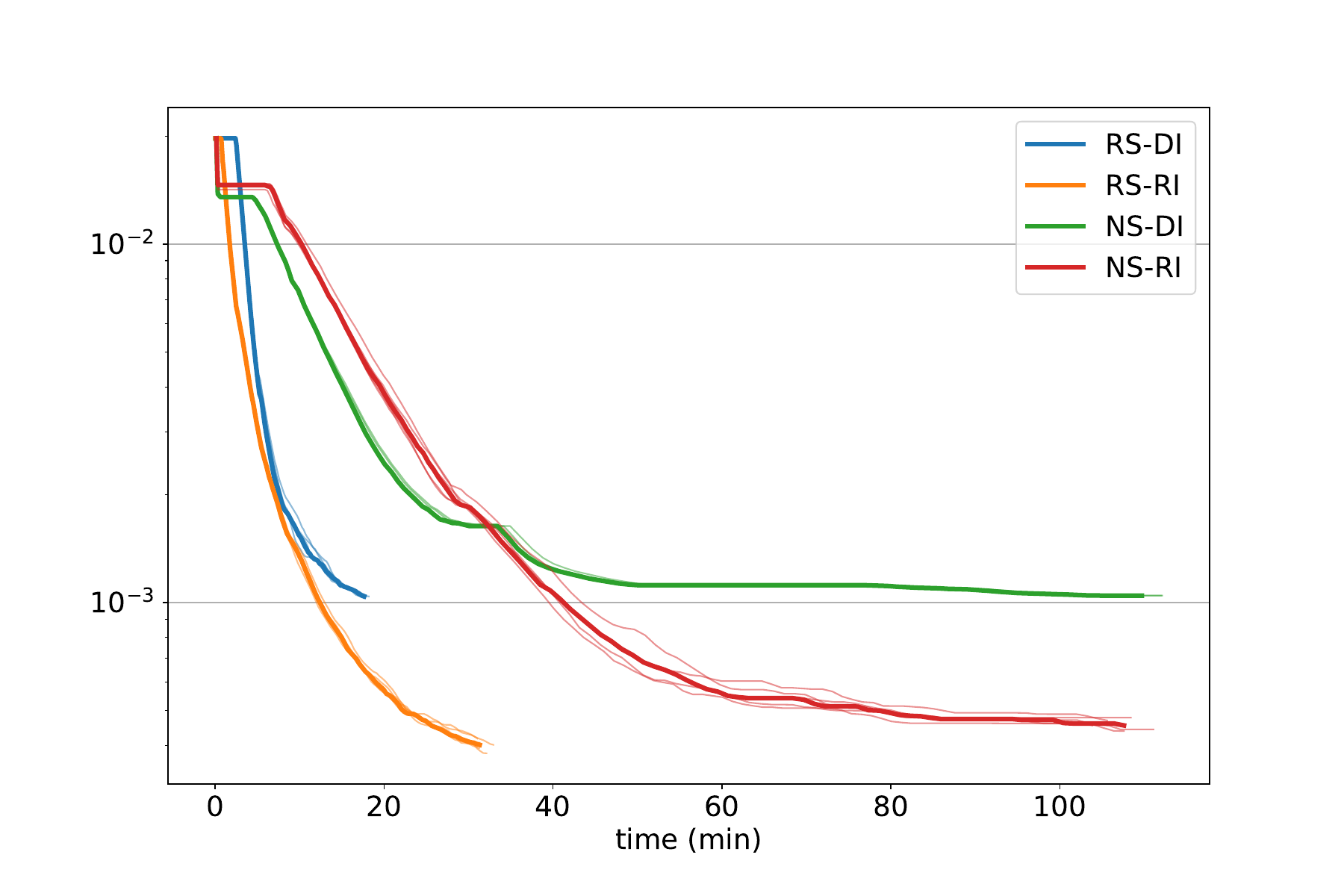}&\includegraphics[width =0.48 \textwidth,trim = 1cm 0cm 2.5cm 2cm,clip ]{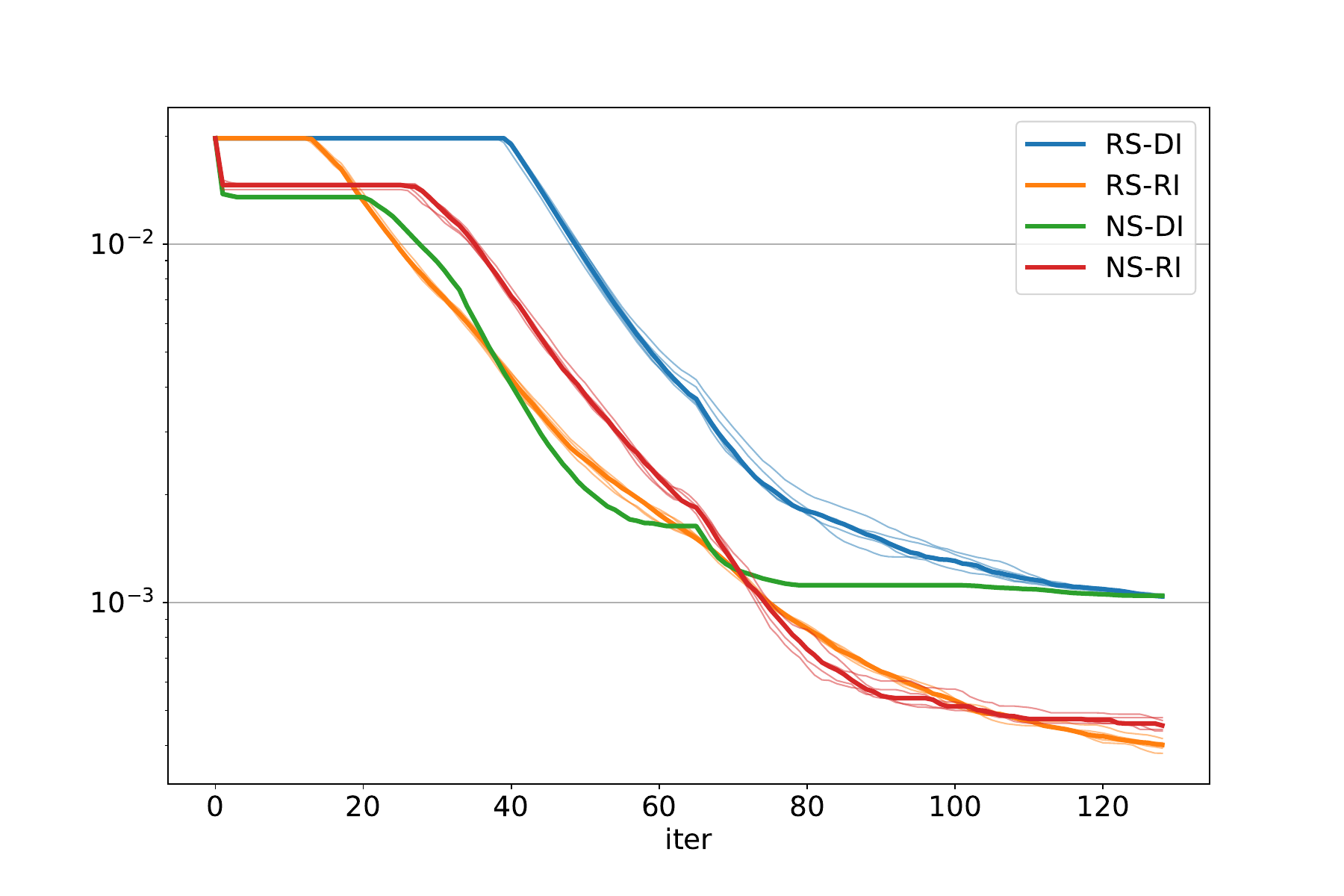}\\
 \multicolumn{3}{c}{ {Objective Profiles for Normal, $n=10{,}000$, $d=4$}} \\
\rotatebox{90}{  { {~~~~~~~~~~~~~$\texttt{relobj}$}}}&\includegraphics[width =0.48 \textwidth,trim = 1cm 0cm 2.5cm 2cm,clip ]{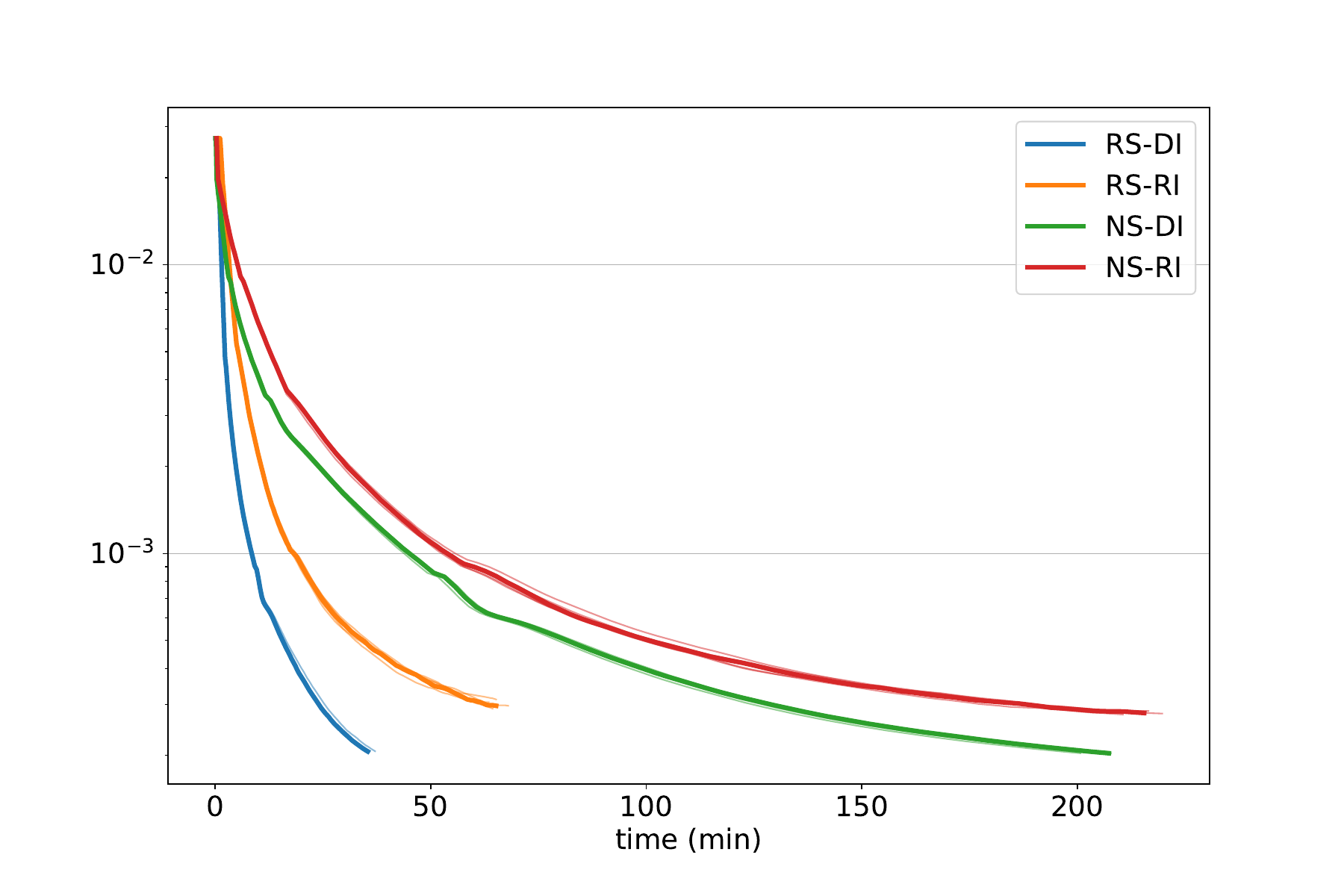}&\includegraphics[width =0.48 \textwidth,trim = 1cm 0cm 2.5cm 2cm,clip ]{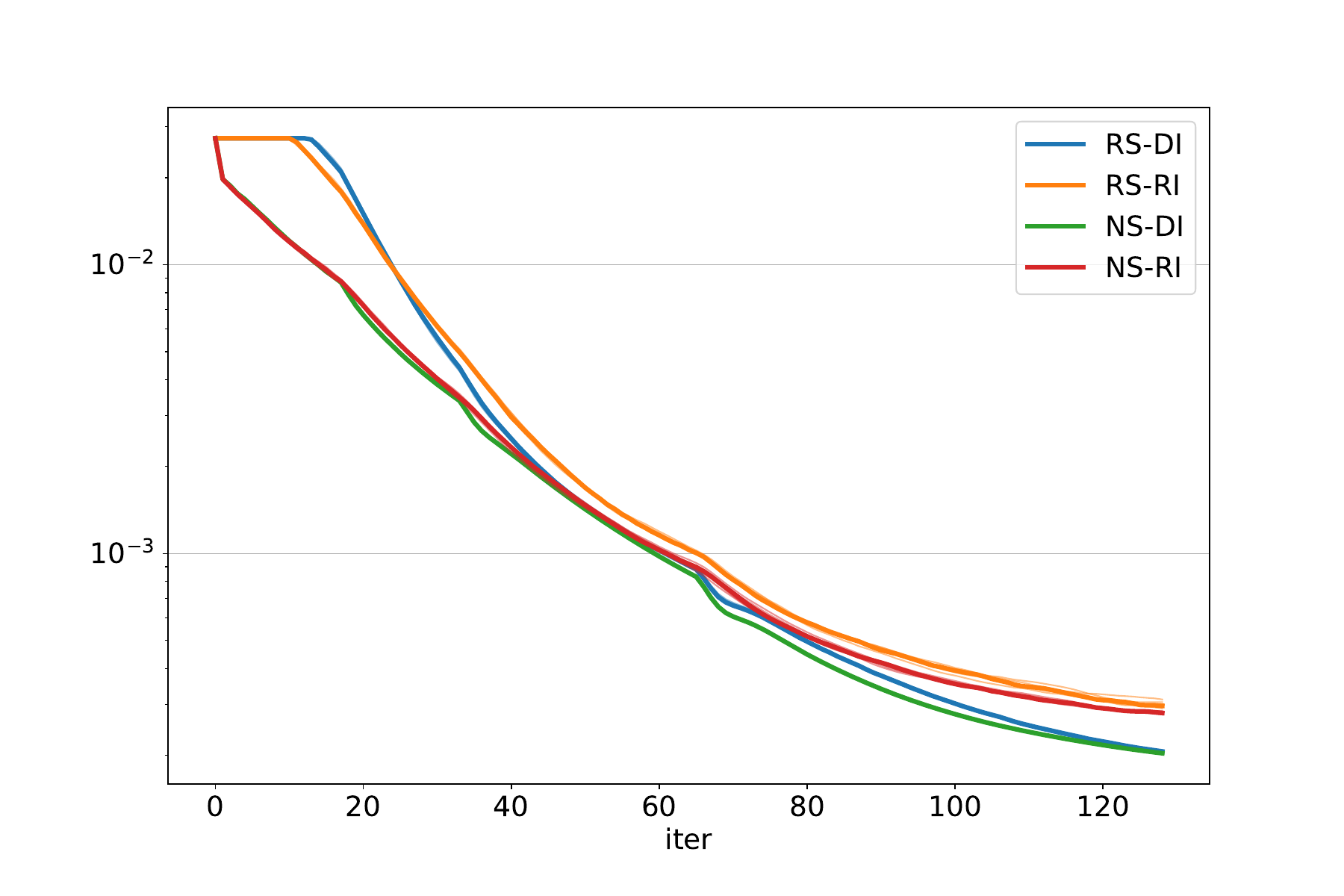}\\
 \multicolumn{3}{c}{ {Objective Profiles for Laplace, $n=10{,}000$, $d=4$}} \\
\rotatebox{90}{  { {~~~~~~~~~~~~~$\texttt{relobj}$}}}&\includegraphics[width =0.48 \textwidth,trim = 1cm 0cm 2.5cm 2cm,clip ]{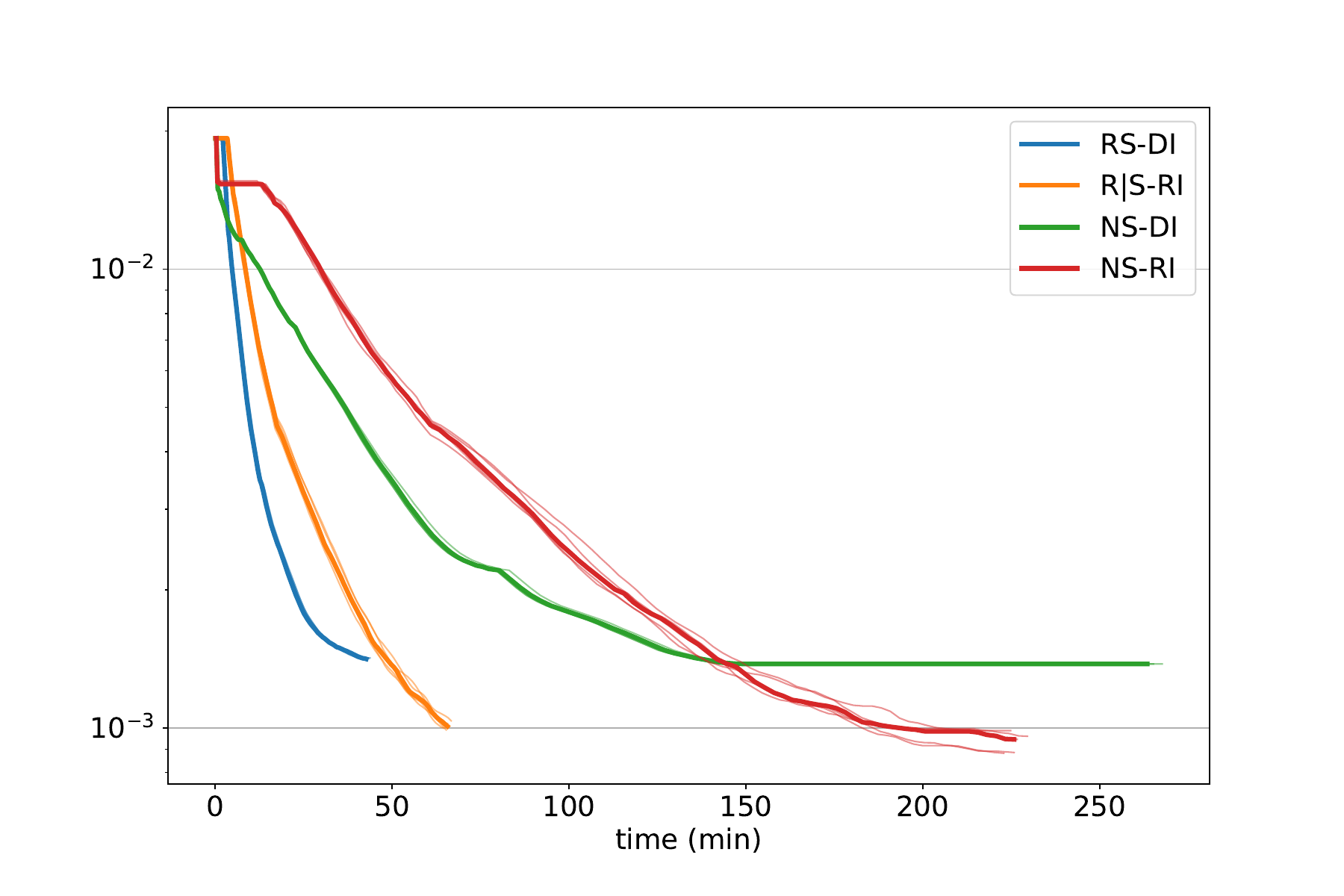}&\includegraphics[width =0.48 \textwidth,trim = 1cm 0cm 2.5cm 2cm,clip ]{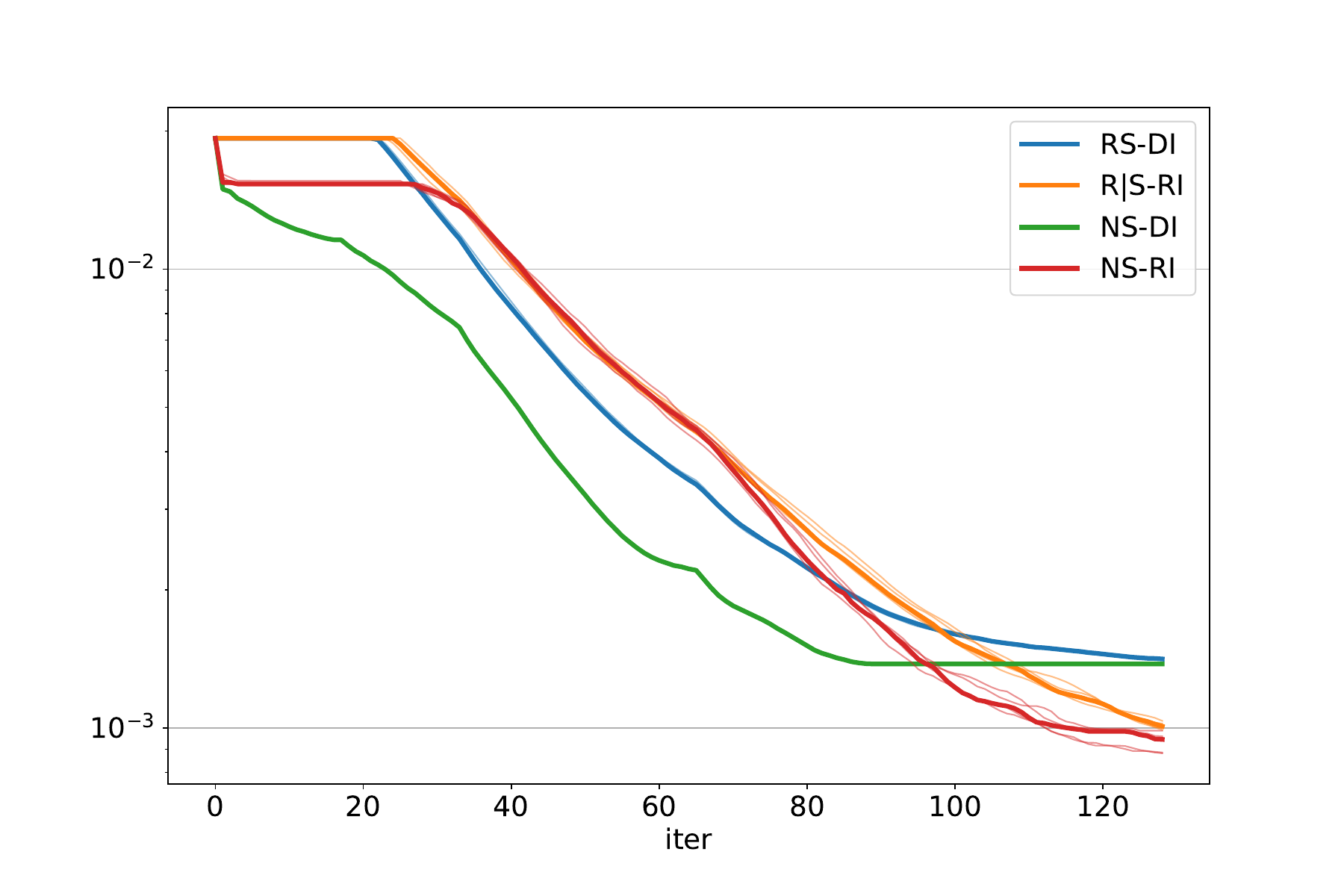}
    \end{tabular}
    }
    \caption{{\small{Plots on a log-scale of Relative Objective 
    versus time (mins) [left panel] and number of iterations [right panel].  For each of our four synthetic data sets, we ran five repetitions of each algorithm, so each bold line corresponds to the median of the profiles of the corresponding algorithm, and each thin line corresponds to the profile of one repetition.  For the right panel, we show the profiles up to 128 iterations.
}}}
    \label{fig:profile}
\end{figure}

\begin{figure}[ht]
    \centering
\resizebox{0.95\textwidth}{!}{\begin{tabular}{r c c}
\multicolumn{3}{c}{ {Objective Profiles for Normal, $n=5{,}000$, $d=4$ }}\\
\rotatebox{90}{  { {~~~~~~~~~~~~~$\texttt{relobj}$}}}&\includegraphics[width =0.48 \textwidth,trim = 1cm 0cm 2.5cm 2cm,clip ]{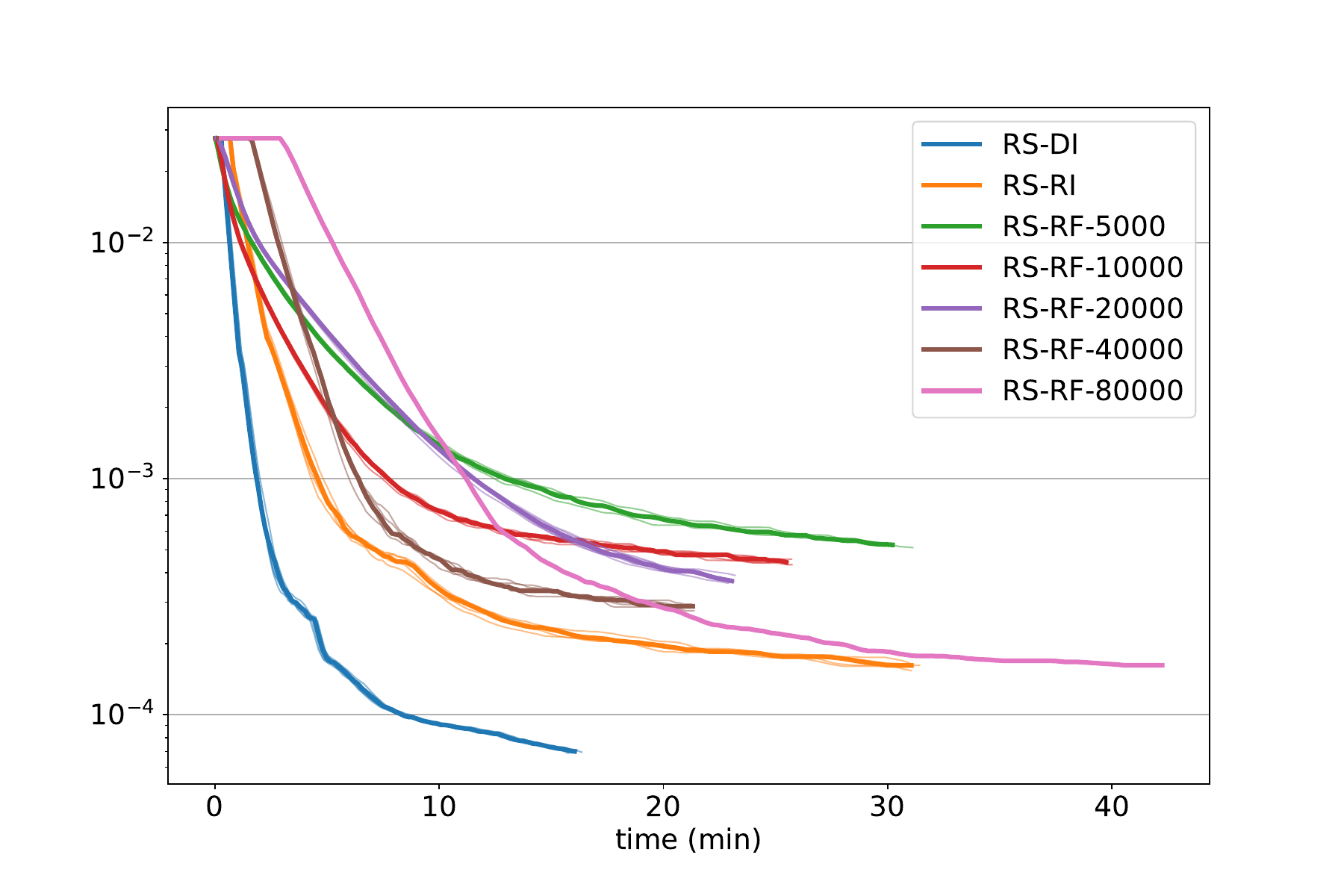}&\includegraphics[width =0.48 \textwidth,trim = 1cm 0cm 2.5cm 2cm,clip ]{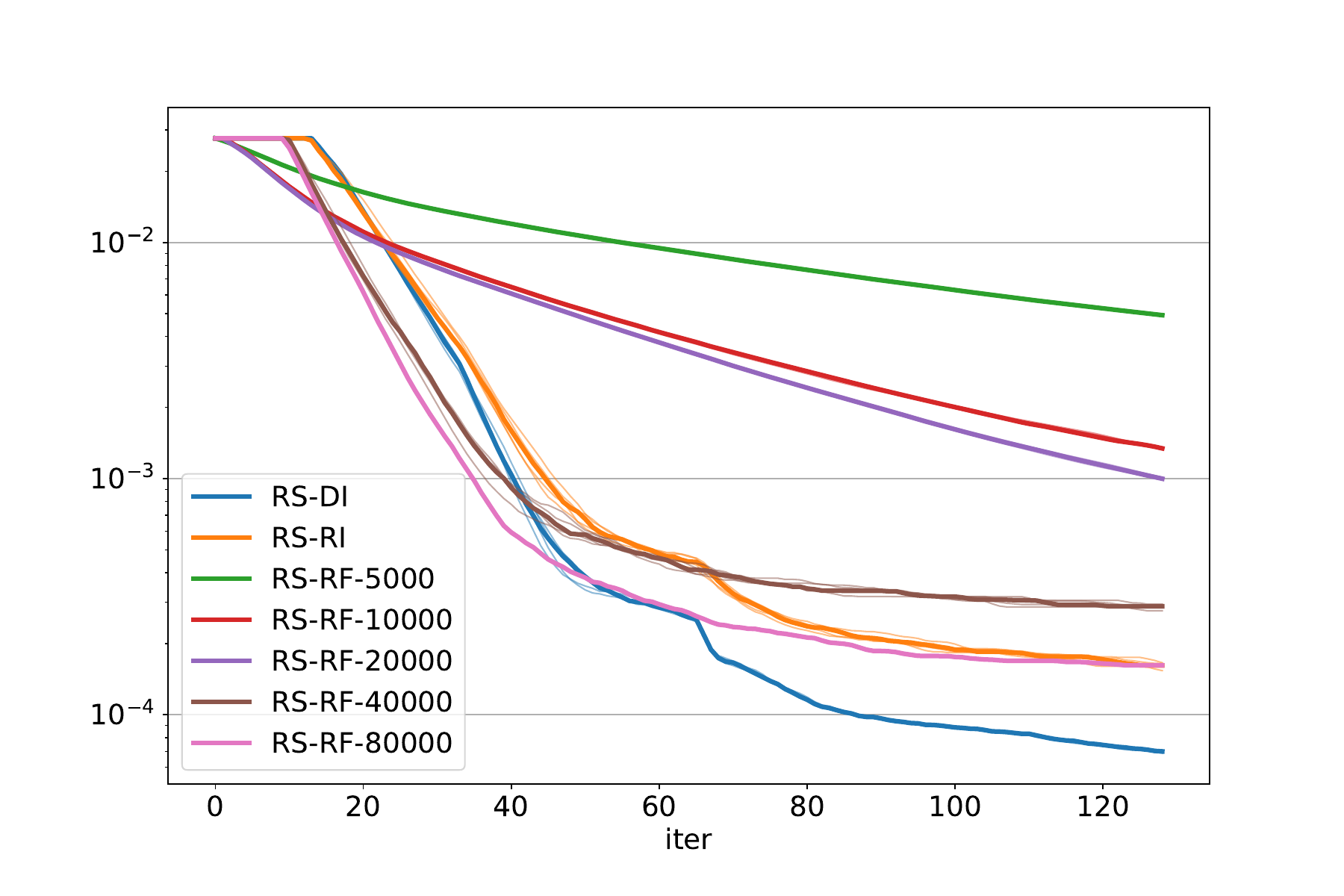}\\
\multicolumn{3}{c}{ {Objective Profiles for Laplace, $n=5{,}000$, $d=4$ }}\\
\rotatebox{90}{  { {~~~~~~~~~~~~~$\texttt{relobj}$}}}&\includegraphics[width =0.48 \textwidth,trim = 1cm 0cm 2.5cm 2cm,clip ]{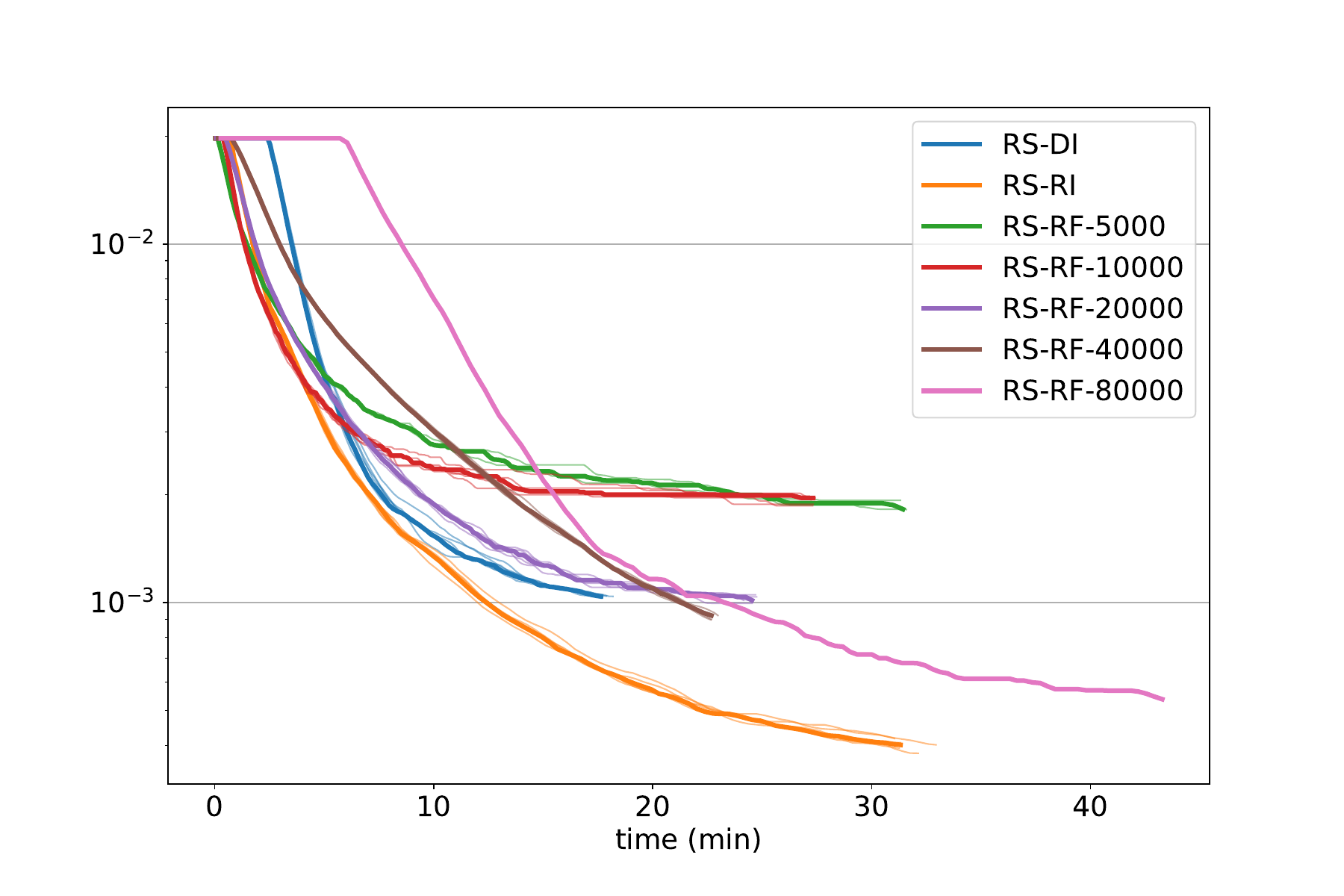}&\includegraphics[width =0.48 \textwidth,trim = 1cm 0cm 2.5cm 2cm,clip ]{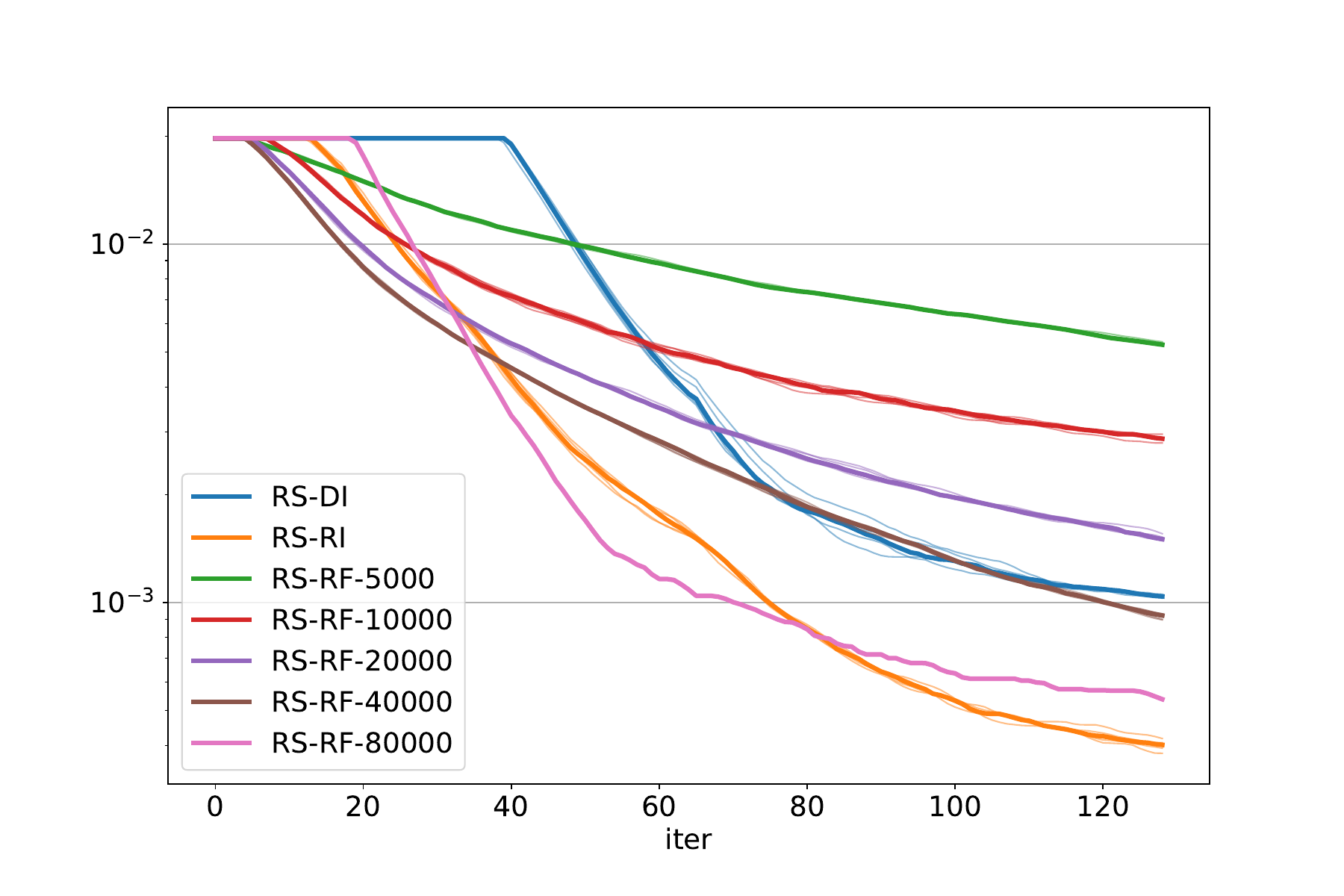}
\\
 \multicolumn{3}{c}{ {Objective Profiles for Normal, $n=10{,}000$, $d=4$}} \\
\rotatebox{90}{  { {~~~~~~~~~~~~~$\texttt{relobj}$}}}&\includegraphics[width =0.48 \textwidth,trim = 1cm 0cm 2.5cm 2cm,clip ]{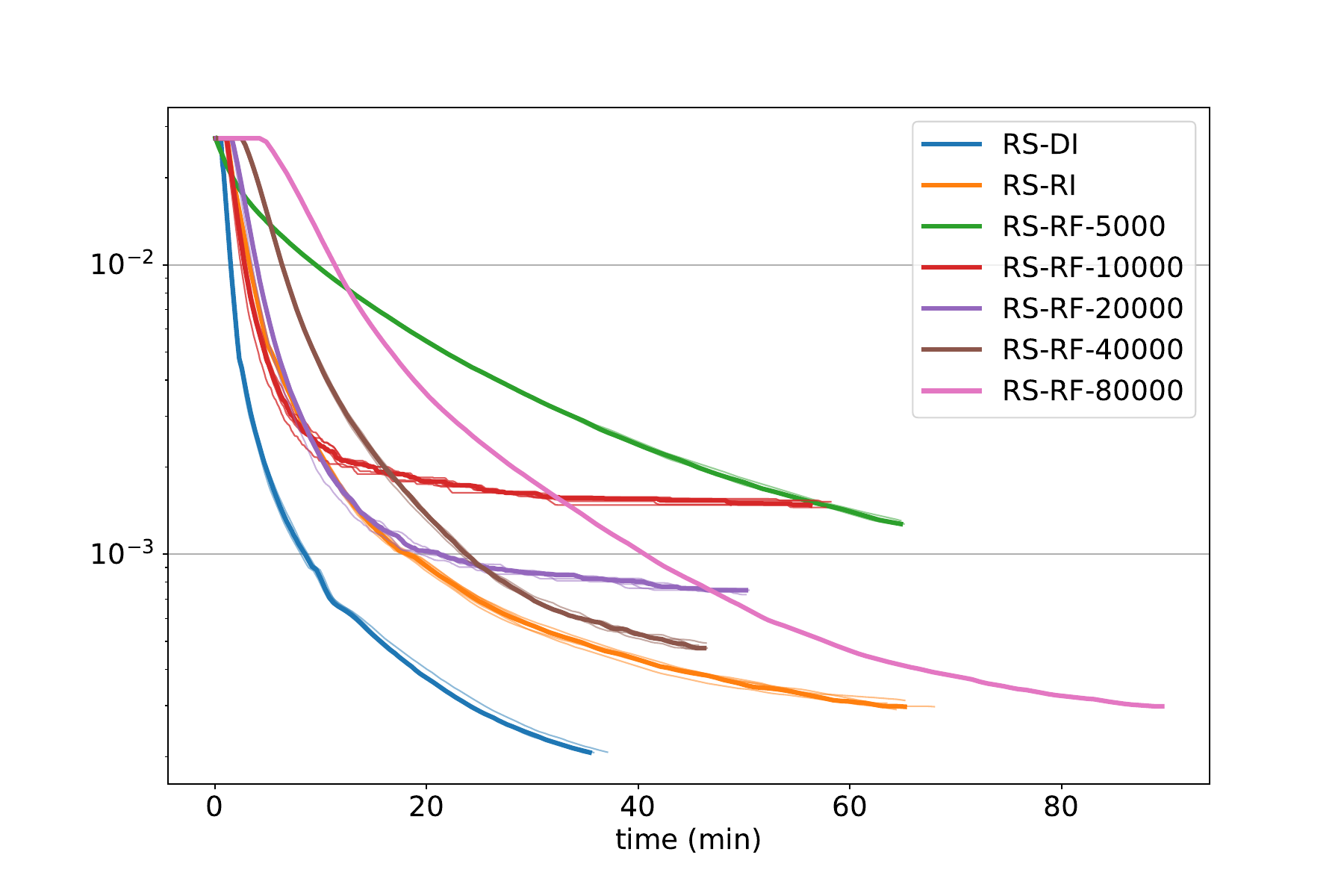}&\includegraphics[width =0.48 \textwidth,trim = 1cm 0cm 2.5cm 2cm,clip ]{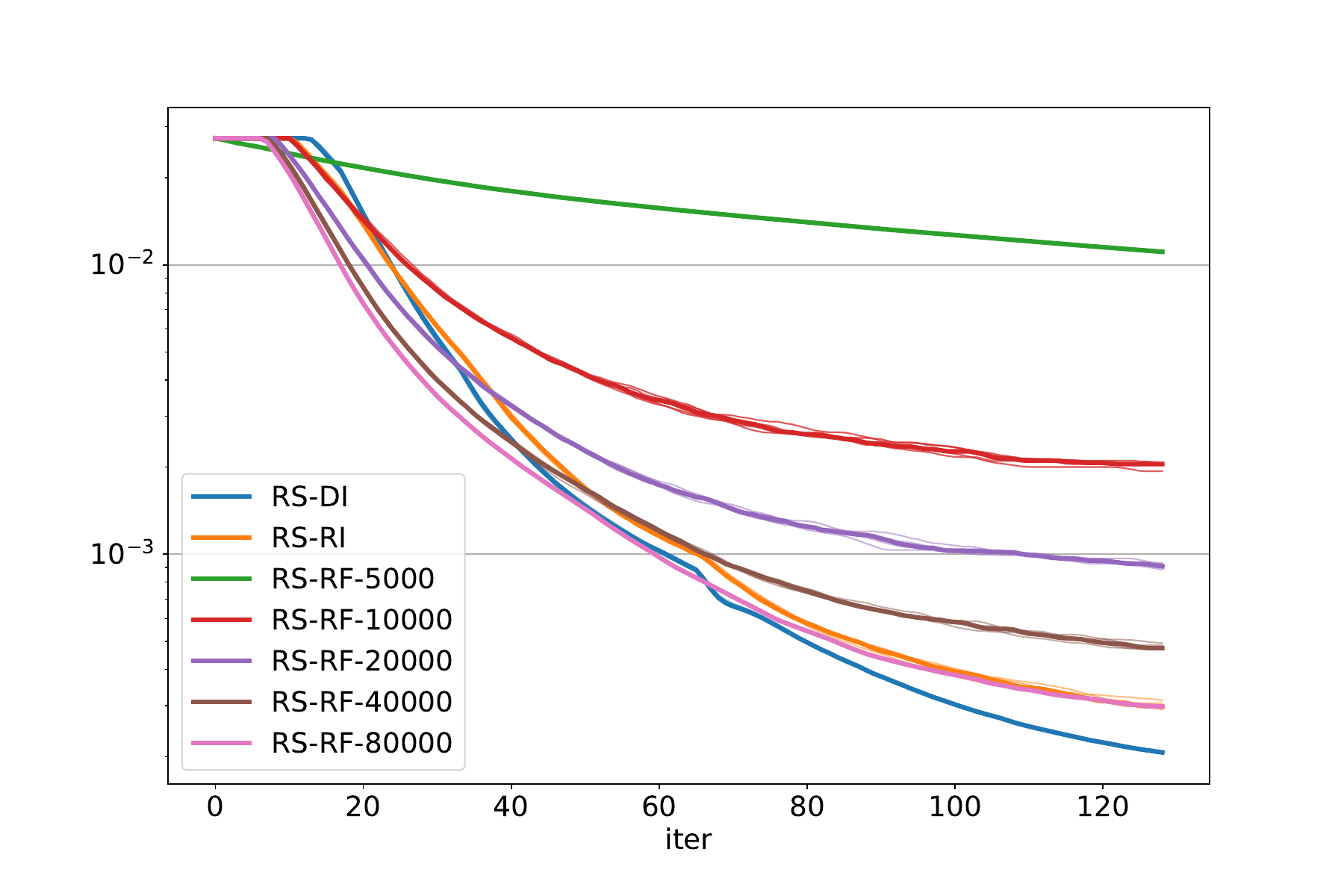}
\\
 \multicolumn{3}{c}{ {Objective Profiles for Laplace, $n=10{,}000$, $d=4$}} \\
\rotatebox{90}{  { {~~~~~~~~~~~~~$\texttt{relobj}$}}}&\includegraphics[width =0.48 \textwidth,trim = 1cm 0cm 2.5cm 2cm,clip ]{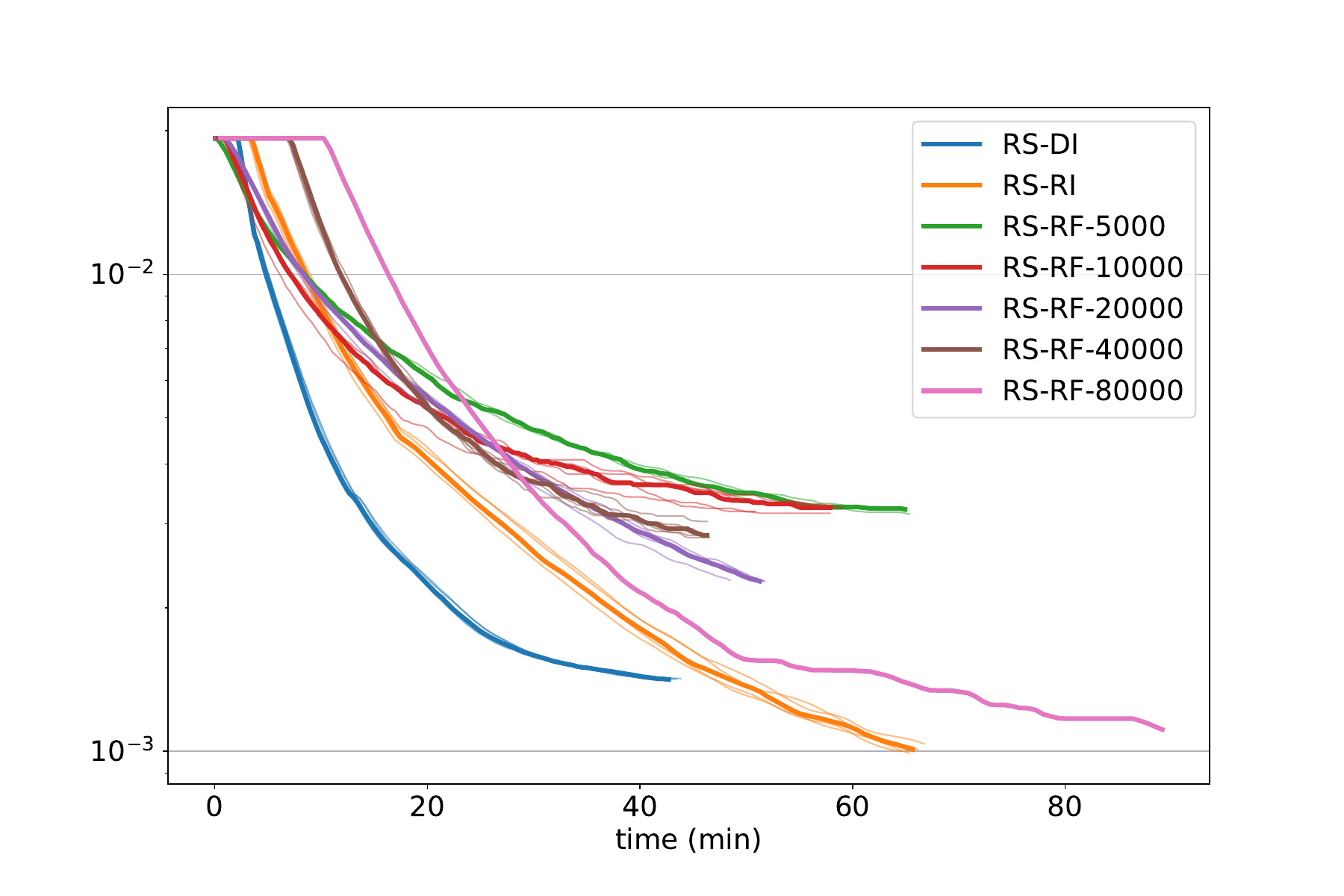}&\includegraphics[width =0.48 \textwidth,trim = 1cm 0cm 2.5cm 2cm,clip ]{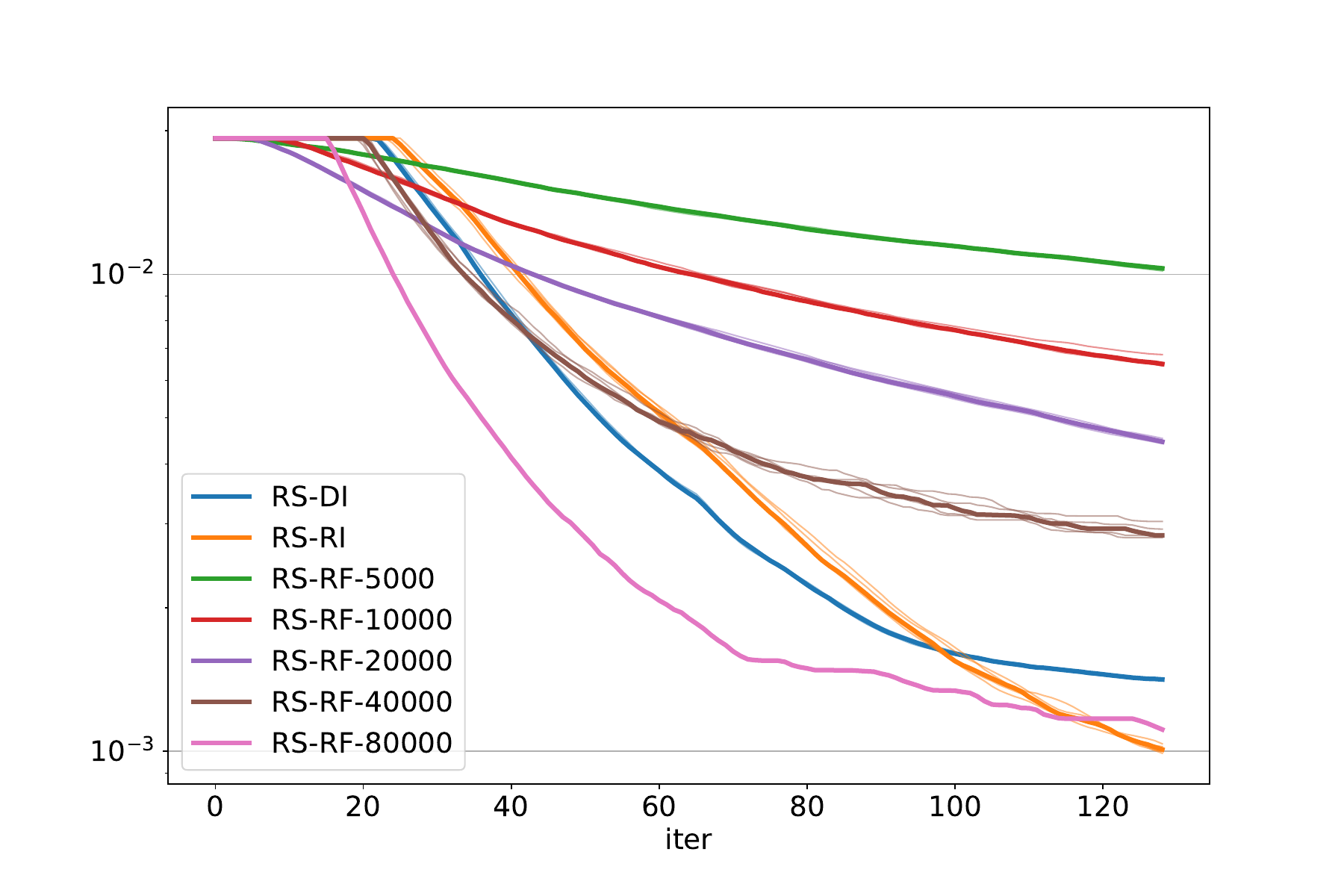}
    \end{tabular}
    }
    \caption{{\small{Plots on a log-scale of Relative Objective 
    versus time (mins) [left panel] and number of iterations [right panel].
}}}
    \label{fig:profile-RS}
\end{figure}

\bibliographystyle{apalike}
\bibliography{references}

\begin{appendix}

 \section{Additional implementational and experimental details}\label{sec:addl-expts}


\subsection{Initialization: non-convex method}\label{subsec:init}
\cite{cule2010maximum} show that the negative log-density $-\log\hat{p}_n(\cdot)$ of the log-concave MLE is a piecewise-affine convex function over its domain $C_n$.   This allows us to parametrize these functions as $\varphi(\bx):=\max_{j \in [m]}\{\bm{a}_j^\top\bx+b_j\}$ for $x \in C_n$, where $\bm{a}_1,\ldots,\bm{a}_m \in \R^d$ and $b_1,\ldots,b_m \in \R$.  We can then reformulate the problem as $\min_{\bm{a}_1,\ldots,\bm{a}_m \in \R^d,b_1,\ldots,b_m \in \R} f_0(\bm{a},\bm{b})$, with non-convex objective
\begin{equation}
    f_0(\bm{a},\bm{b}) :=\frac{1}{n}\sum_{i=1}^n\max_{j\in [m]}\{\bm{a}_j^\top\bx_i+b_j\}+\int_{C_n}e^{-\max_{j\in [m]}\{\bm{a}_j^\top\bx+b_j\}}\dx.\label{eqn:nonconvex}
\end{equation}
To approximate the integral, we use the same simple Riemann grid points mentioned in Section~\ref{SubSubSec:Gradient}.  Subgradients of the objective \eqref{eqn:nonconvex} are straightforward to compute via the chain rule and the subgradient of the maximum function (see, e.g., \cite{bertsekas2009convex}).  After standardizing each coordinate of our data to have mean zero and unit variance, which does not affect the final outcome due to the affine equivariance of the log-concave MLE \cite[Remark~2.4]{dumbgen2011approximation}, we generate $m=10$ initializing hyperplanes from a standard $(d+1)$-dimensional Gaussian distribution.  We then obtain the initializer for our main algorithm by applying a vanilla subgradient method to the objective~\eqref{eqn:nonconvex} \citep{shor1985minimization,polyak1987introduction} with stepsize $t^{-1/2}$ at the $t$th iteration, terminating when the difference in the objective function at successive iterations drops below $10^{-4}$, or after 100 iterations, whichever is the sooner.  This technique is related to the non-convex method for log-concave density estimation proposed by \cite{rathke2019fast}, who considered a smoothed version of~\eqref{eqn:nonconvex}.  Our goal here is only to seek a good initializer rather than the global optimum, and we found that the approach described above was effective in this respect, as well as being  faster to compute than the method of \cite{rathke2019fast}.

\subsection{Final polishing step}
\label{Sec:Polish}

As mentioned in Section~\ref{SubSubSec:Gradient}, once our algorithm terminates at $\tilde{\bphi}_T$, say, we evaluate the integral $I(\tilde{\bphi}_T)$ in the same way as \cite{cule2010maximum}.  Our final output, then is $\bphi_T := \tilde{\bphi}_T + \log I(\tilde{\bphi}_T)\bm{1}$; this final step not only improves the objective function, but also guarantees that $\exp[-\mathrm{cef}[\bphi_T](\cdot)]$ is a log-concave density.  This can be shown by following the same arguments as in Steps 2-3 in the proof of Theorem~\ref{thm:s-concave-MLE}. 


\subsection{Input parameter settings}\label{app:expt-setting}

According to Theorem~\ref{thm:smoothing-guarantee-expectation}, we should take $u=\frac{D}{2}\sqrt{\frac{B_1}{B_0}}$. By Table~\ref{table:smoothing-comparison}, for randomized smoothing, this is approximately $\frac D2C_1\sqrt{n}$, where $C_1=\sqrt{\Delta e^{-\phi_0}}$.  In our experiments for randomized smoothing, we chose $u=Dn^{1/4}/2$, while for Nesterov smoothing, we chose $u= D/2$.  According to Theorem~\ref{thm:smoothing-guarantee-expectation}, $\eta = \sigma M_T^{(1)}/D$, where we took $\sigma=10^{-4}$ for RS-RI and RS-DI, and $\sigma=10^{-3}$ for NS-RI and NS-DI. For the competing RS-RF-$m$ method, we present the better of the results from $\sigma \in \{10^{-3},10^{-4}\}$. 

\begin{table}[ht]
    \centering
\caption{Examples of increasing grid size ($|\mathcal S_t| = m_{t}$) schemes to achieve $\tilde{O}(1/T)$ rate (i.e. $M_T^{(1/2)}=\tilde O(1)$ for deterministic $\mathcal{S}_t$ and $M_T^{(1)}=\tilde O(1)$ for random $\mathcal{S}_t$). Here, $C$ and $C_1$ are positive constants. For the multi-stage scheme, $a\geq 1$ denotes the current stage number.}
    \label{table:examples-MT}
    {\scalebox{0.95}{\begin{tabular}{c|ccc}
    \toprule
        Schemes& Grid Sizes& $M_T^{(1/2)}=\tilde O(1)$&$M_T^{(1)}=\tilde O(1)$  \\
        \midrule
        Exponential & $m_t=C\rho^t$&$\rho >1$&$\rho >1$\\\midrule
        Polynomial&$m_t=Ct^\beta$&$\beta \geq 2$&$\beta\geq 1$\\\midrule
        \multirow{2}{*}{Multi-stage} & $m_t=C\rho^a$ for &\multirow{2}{*}{$\rho_1^2\leq \rho$}&\multirow{2}{*}{$\rho_1\leq \rho$}\\
        &$t\in\bigl[\mathbbm{1}_{\{a \geq 1\}} + C_1\rho_1^{a-1}\mathbbm{1}_{\{a \geq 2\}},C_1\rho_1^a\bigr]$&&\\
        \bottomrule
    \end{tabular}}}
\end{table}

To illustrate the increasing grid size strategy we take in the experiments, we first present in Table~\ref{table:examples-MT} some potential schemes to achieve the $\tilde O(1/T)$ error rate on the objective function scale. In our experiments, we used the multi-stage increasing grid size scheme with $C_1=8$ and $\rho_1=2$.  For the random grid (RI), we take $C=5{,}000$ and $\rho = 2$. For the deterministic grid (DI), we first choose an axis-aligned grid with $m_{0,t}$ points in each dimension that encloses the convex hull $C_n$ of the data.  Then~$m_t$ is the number of these grid points that fall within $C_n$.  Table~\ref{table:grid-sizes} provides an illustration of this multi-stage strategy used in the numerical experiments for a Laplace distribution with $n=5{,}000$ and $d=4$.  Code for the other settings is available in the github repository \texttt{LogConcComp}.
\begin{table}[ht]
    \centering
\caption{Summary of increasing grid size strategy (illustrated with $n=5{,}000$ observations from a Laplace distribution in four dimensions). We take a  four stage grid strategy and 128 iterations in total, with stage lengths shown in second line. For deterministic grids (denoted by DI), we use $m_{0,t}$ to determine the grid size (third line in the table), and the fourth line of the table is the corresponding grid size. For random grids (denoted by RI), the fifth line is the grid size of random sample.}
    \label{table:grid-sizes}
    \begin{tabular}{c|c|cccc}\toprule
    \multicolumn{2}{l}{Stage number $a$}&1&2&3&4\\
    \multicolumn{2}{l}{Stage length}&16&16&32&64\\
    \midrule
        DI&$m_{0,t}$ & 18&22&26&30 \\
        DI&$m_t$ & 10{,}656&23{,}582&45{,}969&81{,}558 \\
        RI&$m_t$& 10{,}000&20{,}000&40{,}000&80{,}000\\
        \bottomrule
    \end{tabular}
\end{table}

\subsection{Experimental results on real data sets}\label{subsec:additional-expts}

We provide additional simulation results on three real data sets:
\begin{itemize}
    \item \textbf{Stock returns:} The Stock returns real data consist of daily returns of four stocks\footnote{International Business Machines Corporation (IBM.US), JPMorgan Chase \& Co. (JPM.US), Caterpillar Inc. (CAT.US), 3M Company (MMM.US)} over $n=10{,}000$ randomly sampled days between 1970 and 2010, normalized so that each dimension has unit variance.  The real data are available at \url{https://stooq.com/db/h/}.
    \item \textbf{Census:} The Census real data consist of percentages of the population of different age groups (18-24, 25-44, 45-64 and 65+) for $n=10{,}000$ randomly sampled Census tracts based on the 2015-2019 5-year ACS (American Community Survery)\footnote{pct\_Pop\_18\_24\_ACS\_15\_19, pct\_Pop\_25\_44\_ACS\_15\_19, pct\_Pop\_45\_64\_ACS\_15\_19, pct\_Pop\_65plus\_ACS\_15\_19}, and the data are normalized so that each dimension has unit variance. The data and description are available at \url{https://www.census.gov/topics/research/guidance/planning-databases/2021.html}.
    \item \textbf{Gas turbine:} The Gas turbine real data consist of 4 sensor measures\footnote{Ambient temperature (AT), Ambient pressure (AP), Carbon monoxide (CO), Nitrogen oxides (NOx).} aggregated over one hour from a gas turbine for $n=10{,}000$ hours between 2011 and 2015, normalized so that each dimension has unit variance. The data are available at \url{https://archive.ics.uci.edu/ml/datasets/Gas+Turbine+CO+and+NOx+Emission+Data+Set}.
\end{itemize}


Table~\ref{table:obj-real}, Figure~\ref{fig:profile-real} and Figure~\ref{fig:profile-RS-real} provide simulation results that correspond to those in Table~\ref{table:obj}, Figure~\ref{fig:profile} and Figure~\ref{fig:profile-RS} respectively, but for three three real data sets.  The table and figures reveal a qualitatively very similar story to that presented for the simulated data in Section~\ref{sec:compute}: the main conclusion is that our randomized smoothing approaches are significantly more computationally efficient than both the Nesterov smoothing and CSS methods. 

\begin{table}[t!]
\centering
\caption{Comparison of our proposed methods with the CSS solution \cite{cule2010maximum} and RS-RF \cite{duchi2012randomized}, but on 3 real datasets.  Details are given in the caption of Table~\ref{table:obj}.}
\label{table:obj-real}
Stock returns, $n=10{,}000,d=4$\\
\resizebox{0.8\textwidth}{!}{\begin{tabular}{rrrrrrrrrr}
\toprule
 algo & \texttt{param} &    \texttt{obj} & \texttt{ relobj} & \texttt{runtime} &   \texttt{dopt}  &  \texttt{iter} &{\texttt{tO}} & \texttt{aO} & \texttt{hO}\\
\midrule

\multirow{3}{*}{CSS}  &  1e-2 & 6.3395 & 6.7e-02 &  315.19 & 5.4659 &        &        &        \\
   &  1e-3 & 5.9458 & 8.7e-04 & - & 0.5130 &        &        &        \\
   &  1e-4 & 5.9406 & 0.0e-00 & - & 0.0000 &        &        &        \\\midrule
RS-DI &  None & 5.9589 & 3.1e-03 &   38.47 & 0.1428 & 128&  7.79M & 60.86K & 30.22K \\
RS-RI &  None & 5.9778 & 6.3e-03 &   59.65 & 0.2032 & 128& 6.88M & 53.75K & 32.00K \\
NS-DI &  None & 5.9506 & 1.7e-03 &  254.91 & 0.0792 & 128& 7.79M & 60.86K & 30.22K \\
NS-RI &  None & 5.9672 & 4.5e-03 &  228.82 & 0.1362 & 128& 6.88M & 53.75K & 32.00K \\\midrule
\multirow{5}{*}{RS-RF}  &  5000 & 6.0003 & 1.0e-02 &   61.98 & 0.2015 & 1024& 5.12M &  5.00K &  5.00K \\
 & 10000 & 5.9886 & 8.1e-03 &   54.20 & 0.2354 & 512& 5.12M & 10.00K & 10.00K \\
 & 20000 & 5.9852 & 7.5e-03 &   49.16 & 0.2141 & 256& 5.12M & 20.00K & 20.00K \\
 & 40000 & 5.9940 & 9.0e-03 &   46.25 & 0.3194 & 128& 5.12M & 40.00K & 40.00K \\
& 80000 & 5.9665 & 4.4e-03 &   84.59 & 0.1055 &128& 10.24M & 80.00K & 80.00K \\
\bottomrule
\end{tabular}}\\
Census, $n=10{,}000,d=4$\\
\resizebox{0.8\textwidth}{!}{\begin{tabular}{rrrrrrrrrr}
\toprule
 algo & \texttt{param} &    \texttt{obj} & \texttt{ relobj} & \texttt{runtime} &   \texttt{dopt} & \texttt{iter} & {\texttt{tO}} & \texttt{aO} & \texttt{hO}\\
\midrule

\multirow{3}{*}{CSS}  &  1e-2 & 5.4458 & 9.4e-03 &   71.36 & 0.9222 &        &        &        \\
   &  1e-3 & 5.3953 & 1.2e-05 &  812.49 & 0.0098 &        &        &        \\
   &  1e-4 & 5.3952 & 0.0e-00 & - & 0.0000 &        &        &        \\\midrule
RS-DI &  None & 5.3995 & 8.0e-04 &   31.97 & 0.0478 & 128& 6.19M & 48.33K & 25.79K \\
RS-RI &  None & 5.4003 & 9.4e-04 &   63.32 & 0.0506 & 128& 6.88M & 53.75K & 32.00K \\
NS-DI &  None & 5.3992 & 7.3e-04 &  199.74 & 0.0453 & 128& 6.19M & 48.33K & 25.79K \\
NS-RI &  None & 5.3992 & 7.4e-04 &  223.34 & 0.0475 & 128& 6.88M & 53.75K & 32.00K \\\midrule
\multirow{5}{*}{RS-RF} &  5000 & 5.4100 & 2.7e-03 &   69.86 & 0.1047 & 1024& 5.12M &  5.00K &  5.00K \\
 & 10000 & 5.4093 & 2.6e-03 &   60.79 & 0.0796 & 512& 5.12M & 10.00K & 10.00K \\
& 20000 & 5.4074 & 2.3e-03 &   55.89 & 0.1236 & 256& 5.12M & 20.00K & 20.00K \\
 & 40000 & 5.4059 & 2.0e-03 &   49.04 & 0.0587 & 128& 5.12M & 40.00K & 40.00K \\
 & 80000 & 5.3998 & 8.6e-04 &   94.51 & 0.0477 & 128& 10.24M & 80.00K & 80.00K \\
\bottomrule
\end{tabular}}\\
Gas turbine, $n=10{,}000,d=4$\\
\resizebox{0.8\textwidth}{!}{\begin{tabular}{rrrrrrrrrr}
\toprule
 algo & \texttt{param} &    \texttt{obj} & \texttt{ relobj} & \texttt{runtime} &   \texttt{dopt} &\texttt{iter} & {\texttt{tO}} & \texttt{aO} & \texttt{hO}\\
\midrule

\multirow{3}{*}{CSS}  &  1e-2 & 5.5920 & 5.7e-03 &   95.59 & 0.5908 &        &        &        \\
   &  1e-3 & 5.5617 & 2.7e-04 & - & 0.0994 &        &        &        \\
   &  1e-4 & 5.5602 & 0.0e-00 & - & 0.0000 &        &        &        \\\midrule
RS-DI &  None & 5.5693 & 1.6e-03 &   34.14 & 0.0897 & 128& 7.08M & 55.28K & 25.53K \\
RS-RI &  None & 5.5633 & 5.7e-04 &   61.90 & 0.0493 & 128& 6.88M & 53.75K & 32.00K \\
NS-DI &  None & 5.5689 & 1.6e-03 &  230.47 & 0.0914 & 128& 7.08M & 55.28K & 25.53K \\
NS-RI &  None & 5.5622 & 3.7e-04 &  224.88 & 0.0499 & 128&  6.88M & 53.75K & 32.00K \\\midrule
\multirow{5}{*}{RS-RF} &  5000 & 5.5694 & 1.7e-03 &   67.28 & 0.1111 & 1024& 5.12M &  5.00K &  5.00K \\
 & 10000 & 5.5670 & 1.2e-03 &   61.22 & 0.0794 & 512& 5.12M & 10.00K & 10.00K \\
 & 20000 & 5.5673 & 1.3e-03 &   53.08 & 0.0455 & 256& 5.12M & 20.00K & 20.00K \\
 & 40000 & 5.5657 & 9.9e-04 &   47.80 & 0.0547 & 128& 5.12M & 40.00K & 40.00K \\
 & 80000 & 5.5632 & 5.5e-04 &   92.44 & 0.0501 & 128& 10.24M & 80.00K & 80.00K \\
\bottomrule
\end{tabular}}
\end{table}

\begin{figure}[ht]
    \centering
\resizebox{0.95\textwidth}{!}{\begin{tabular}{r c c}
 \multicolumn{3}{c}{ {Objective Profiles for Stock returns, $n=10{,}000$, $d=4$}} \\
\rotatebox{90}{  { {~~~~~~~~~~~~~$\texttt{relobj}$}}}&\includegraphics[width =0.48 \textwidth,trim = 1cm 0cm 2.5cm 2cm,clip ]{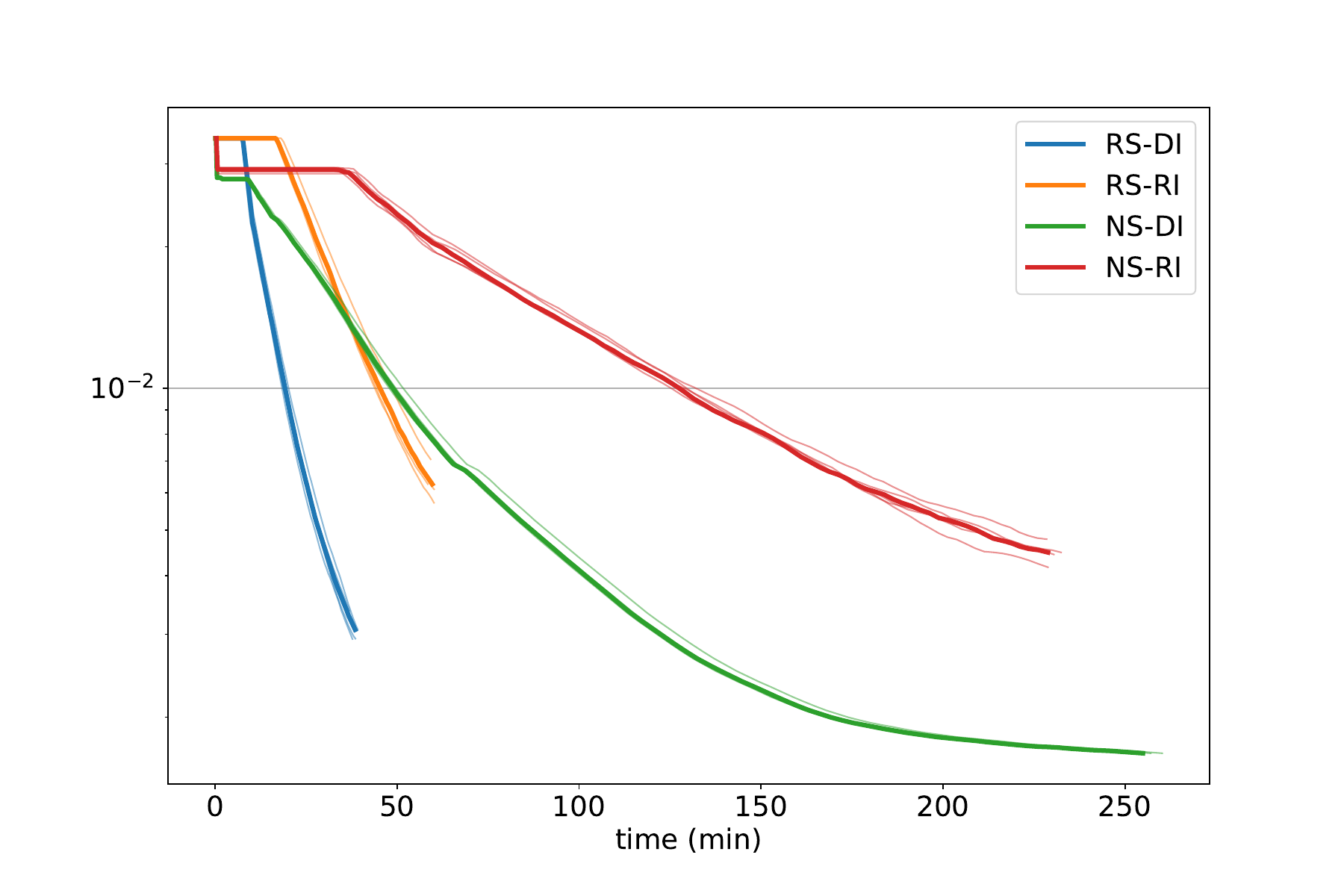}&\includegraphics[width =0.48 \textwidth,trim = 1cm 0cm 2.5cm 2cm,clip ]{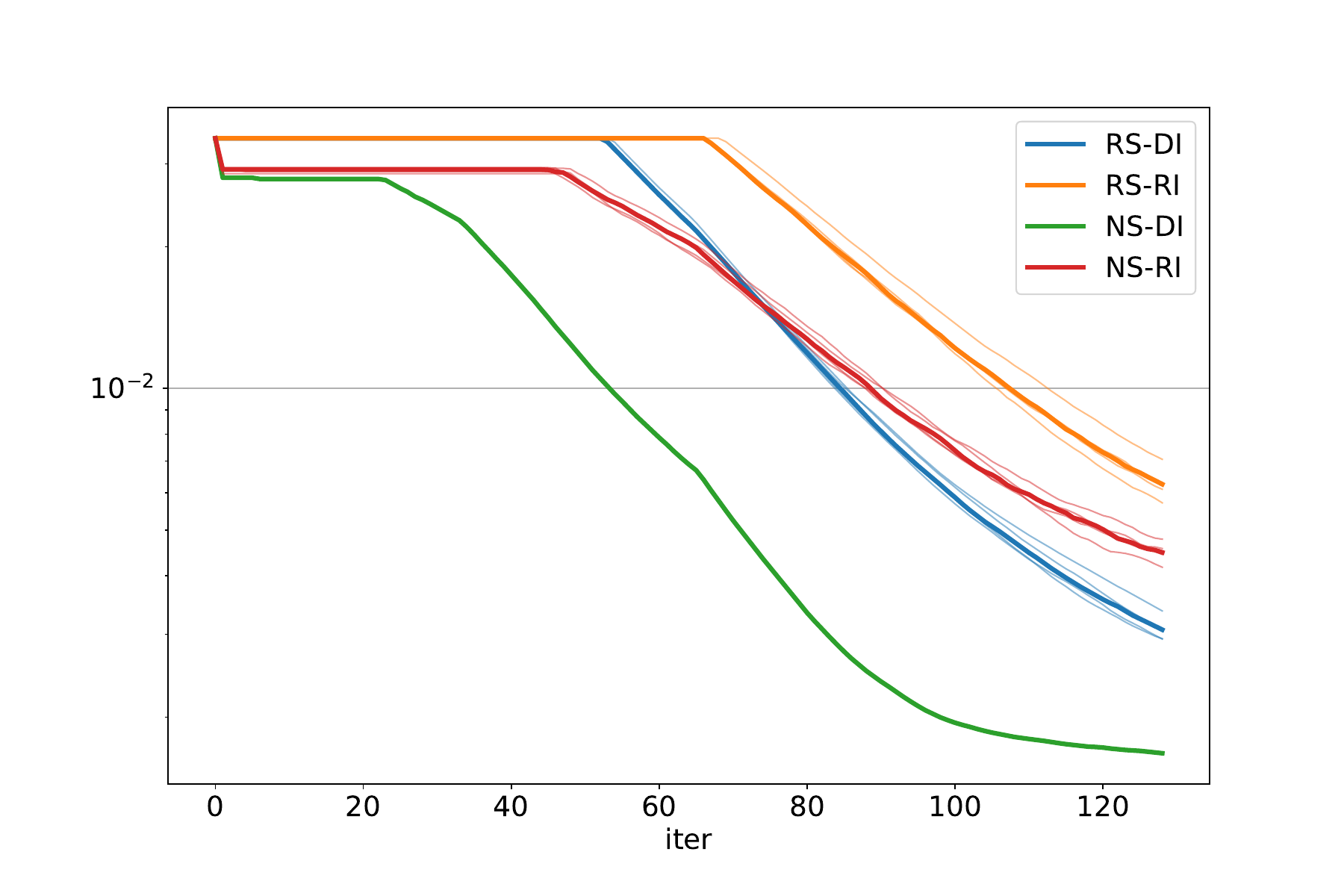}\\
\multicolumn{3}{c}{ {Objective Profiles for Census, $n=10{,}000$, $d=4$ }} \\
\rotatebox{90}{{{~~~~~~~~~~~~~~~~~$\texttt{relobj}$}}}&\includegraphics[width =0.48 \textwidth,trim = 1cm 0cm 2.5cm 2cm,clip ]{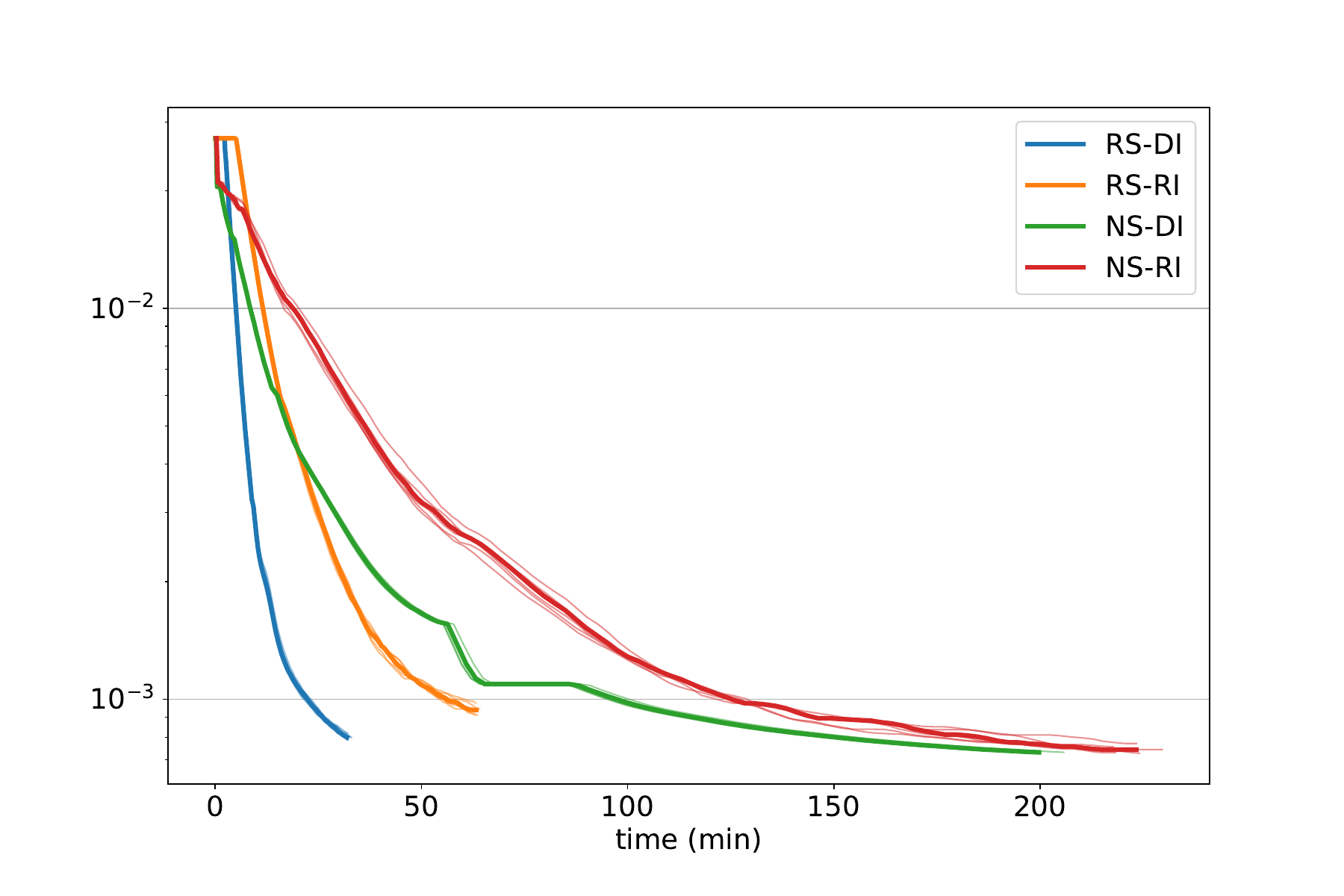}&\includegraphics[width =0.48 \textwidth,trim = 1cm 0cm 2.5cm 2cm,clip ]{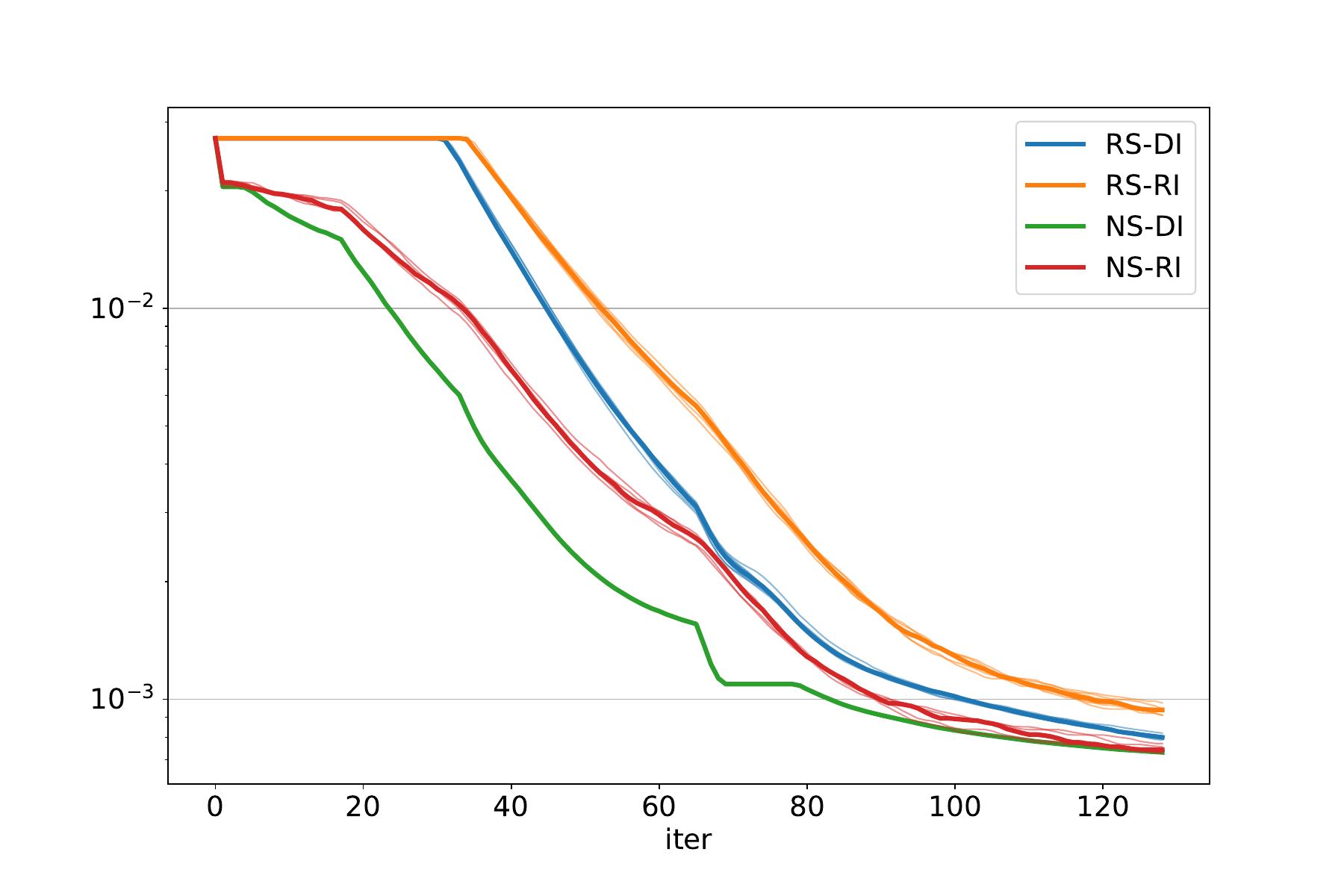}\\
\multicolumn{3}{c}{ {Objective Profiles for Gas turbine, $n=10{,}000$, $d=4$ }} \\
\rotatebox{90}{{{~~~~~~~~~~~~~~~~~$\texttt{relobj}$}}}&\includegraphics[width =0.48 \textwidth,trim = 1cm 0cm 2.5cm 2cm,clip ]{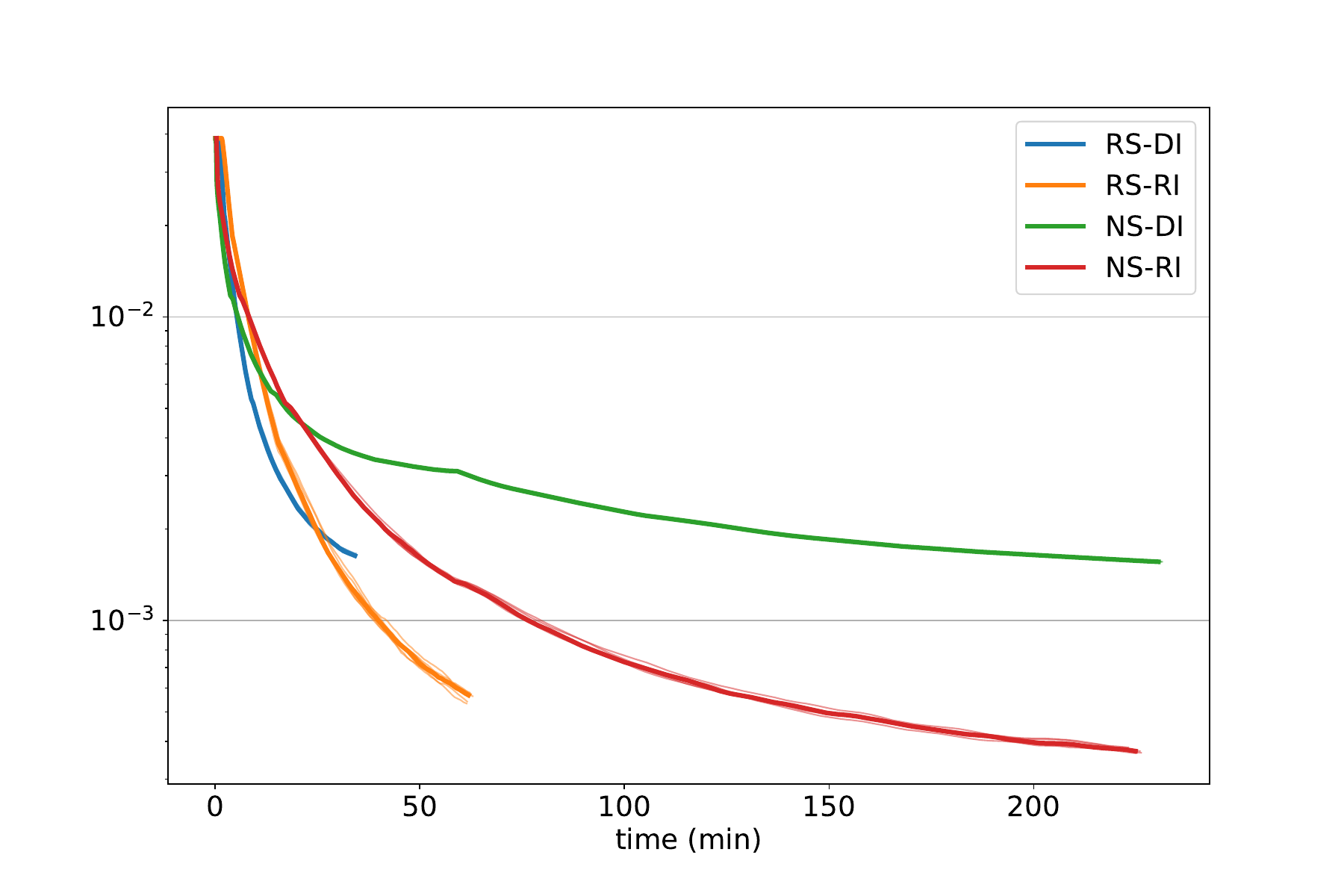}&\includegraphics[width =0.48 \textwidth,trim = 1cm 0cm 2.5cm 2cm,clip ]{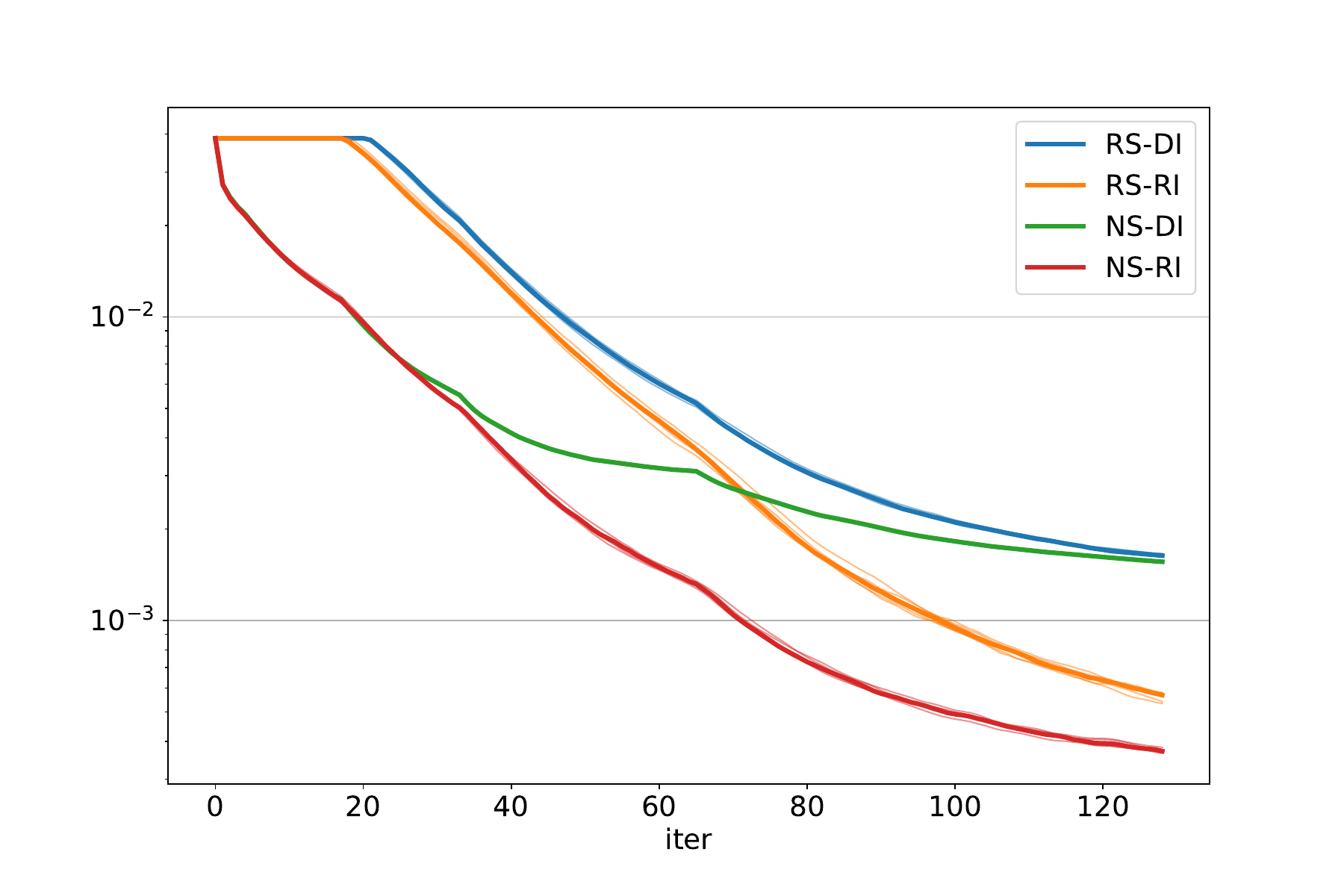}
    \end{tabular}
    }
    \caption{{\small{Additional plots on a log-scale of Relative Objective 
    versus time (mins) [left panel] and number of iterations [right panel]. Details are given in the caption of Figure~\ref{fig:profile}.
}}}
    \label{fig:profile-real}
\end{figure}

\begin{figure}[ht]
    \centering
\resizebox{0.95\textwidth}{!}{\begin{tabular}{r c c}
\multicolumn{3}{c}{ {Objective Profiles for Stock returns, $n=10{,}000$, $d=4$ }}\\
\rotatebox{90}{  { {~~~~~~~~~~~~~$\texttt{relobj}$}}}&\includegraphics[width =0.48 \textwidth,trim = 1cm 0cm 2.5cm 2cm,clip ]{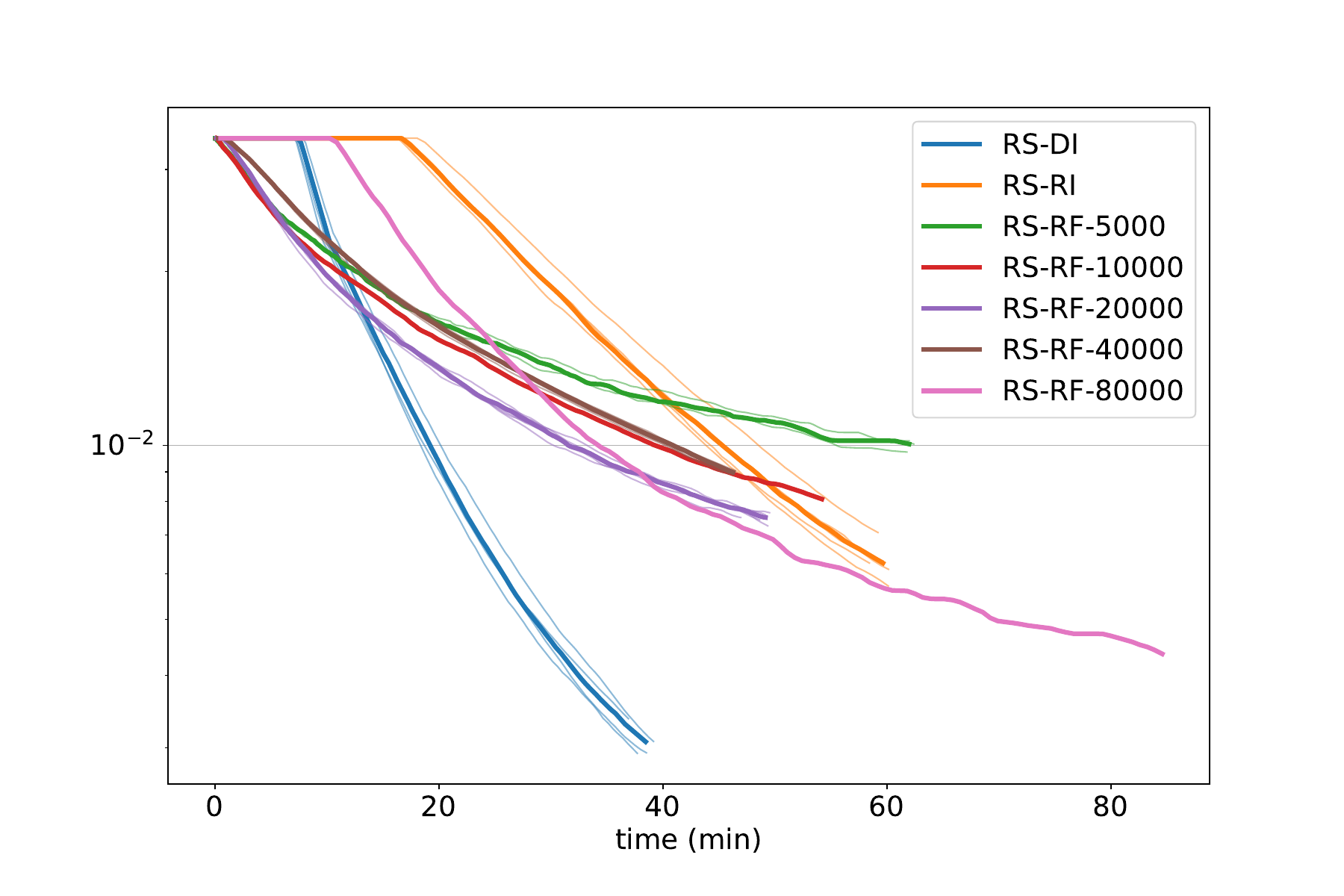}&\includegraphics[width =0.48 \textwidth,trim = 1cm 0cm 2.5cm 2cm,clip ]{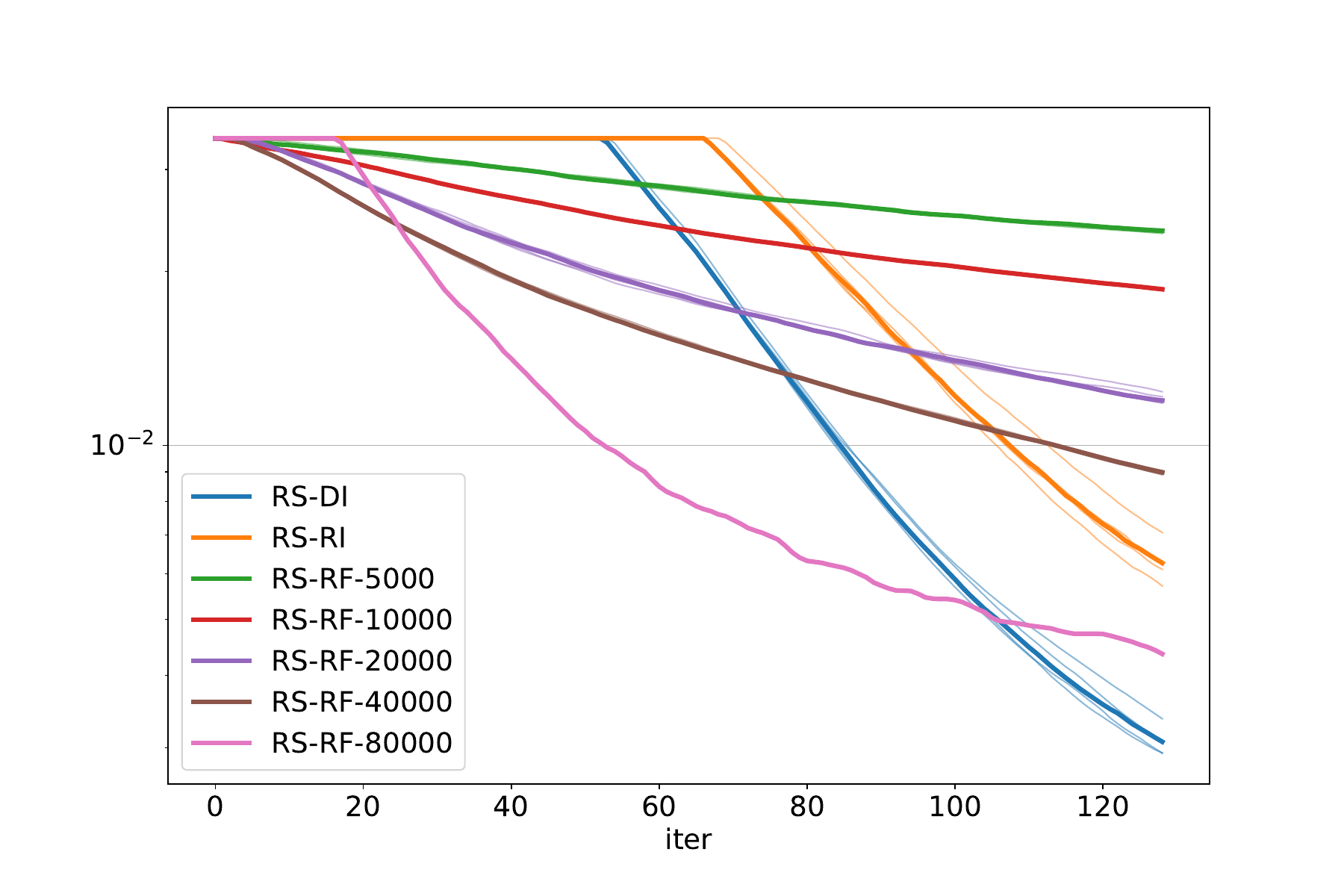}\\
 \multicolumn{3}{c}{ {Objective Profiles for Census, $n=10{,}000$, $d=4$}} \\
\rotatebox{90}{  { {~~~~~~~~~~~~~$\texttt{relobj}$}}}&\includegraphics[width =0.48 \textwidth,trim = 1cm 0cm 2.5cm 2cm,clip ]{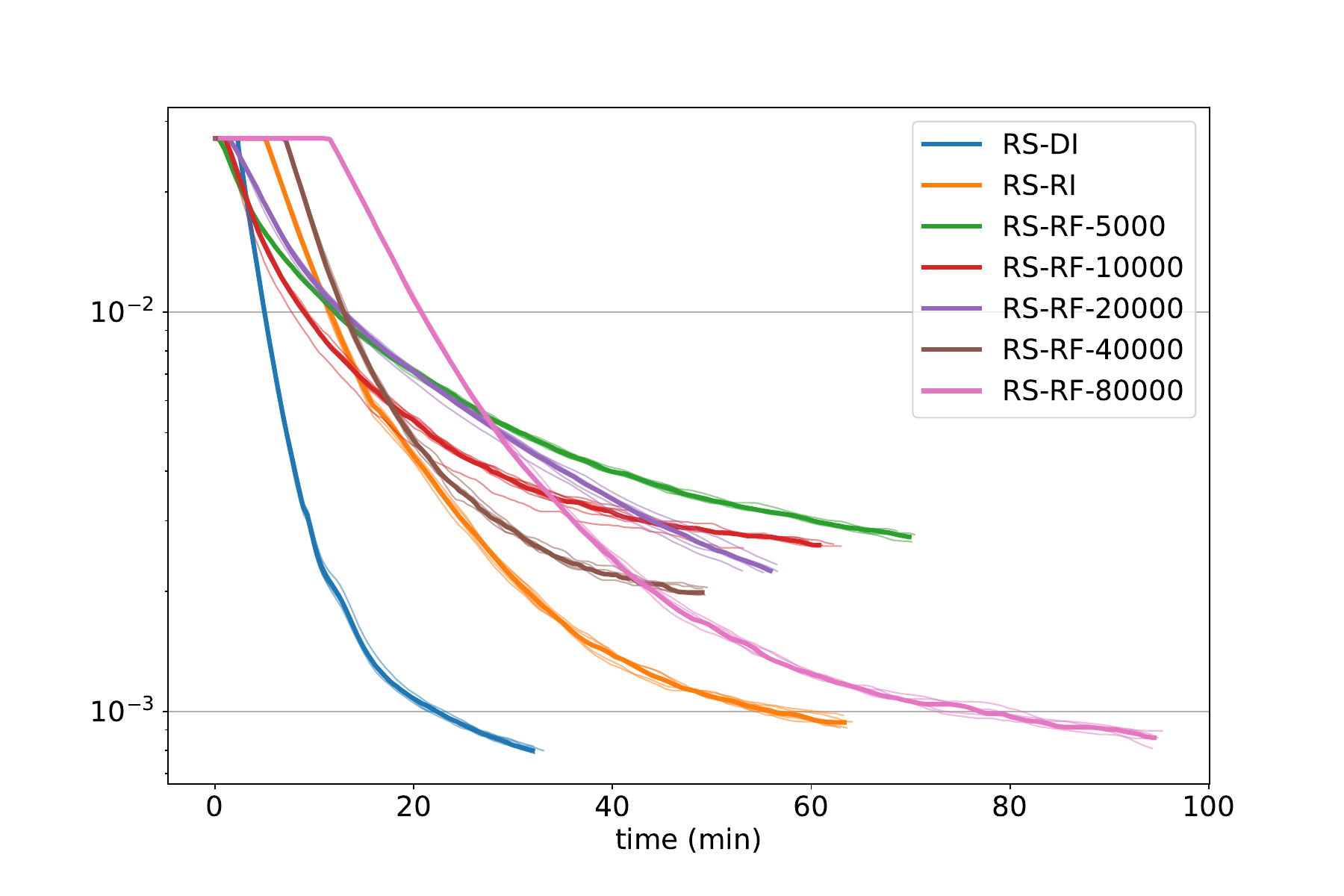}&\includegraphics[width =0.48 \textwidth,trim = 1cm 0cm 2.5cm 2cm,clip ]{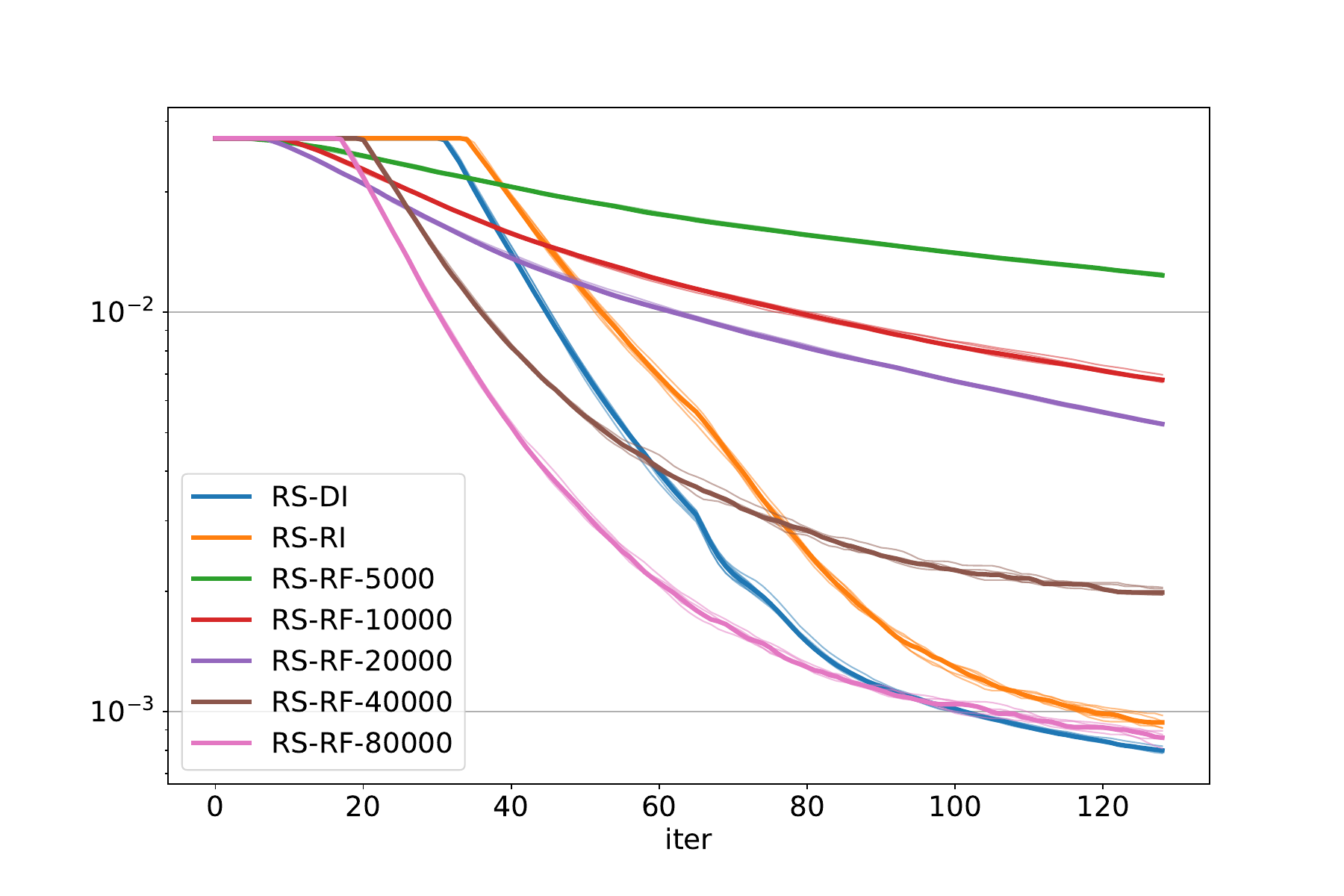}\\
\multicolumn{3}{c}{ {Objective Profiles for Gas turbine, $n=10{,}000$, $d=4$ }}\\
\rotatebox{90}{  { {~~~~~~~~~~~~~$\texttt{relobj}$}}}&\includegraphics[width =0.48 \textwidth,trim = 1cm 0cm 2.5cm 2cm,clip ]{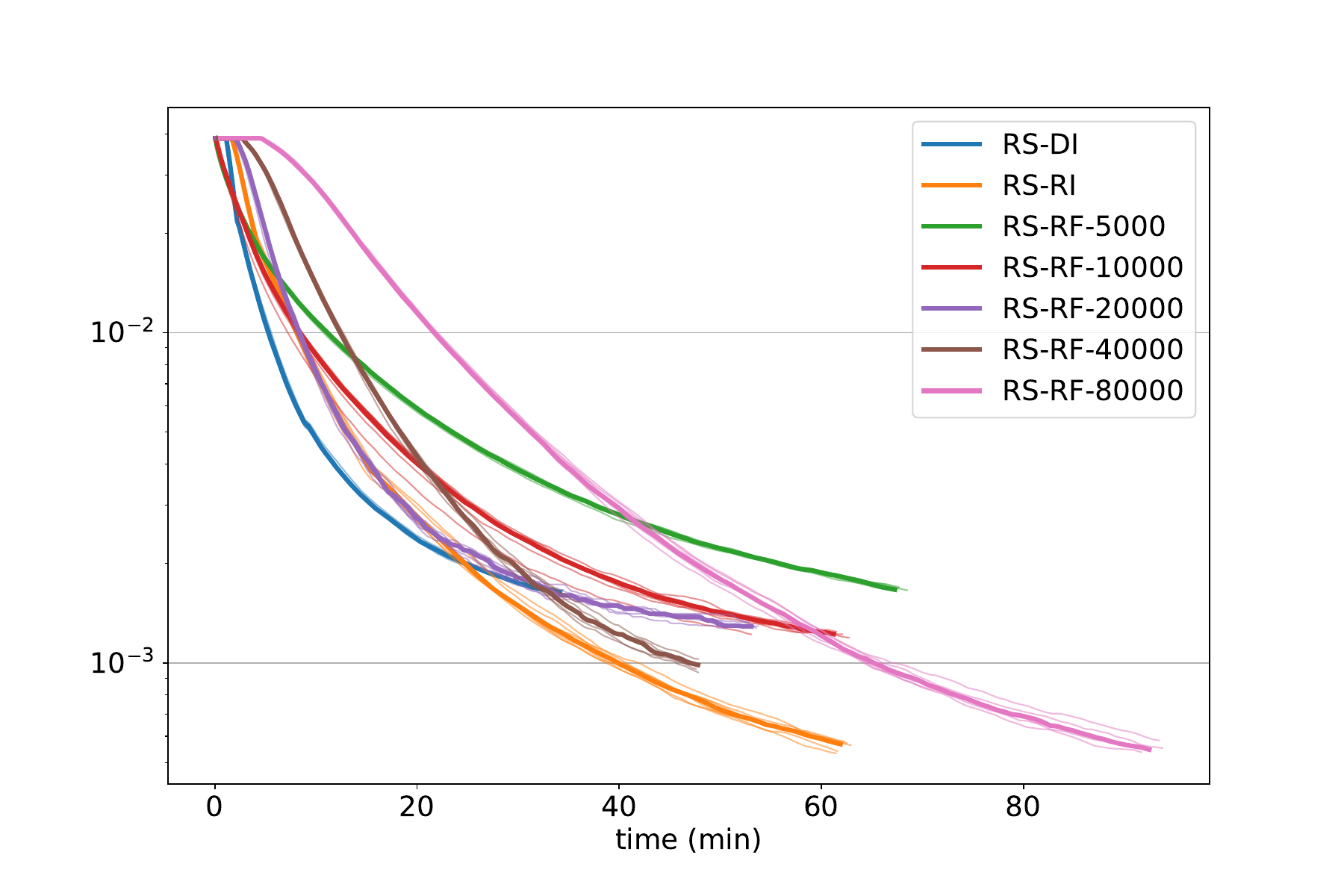}&\includegraphics[width =0.48 \textwidth,trim = 1cm 0cm 2.5cm 2cm,clip ]{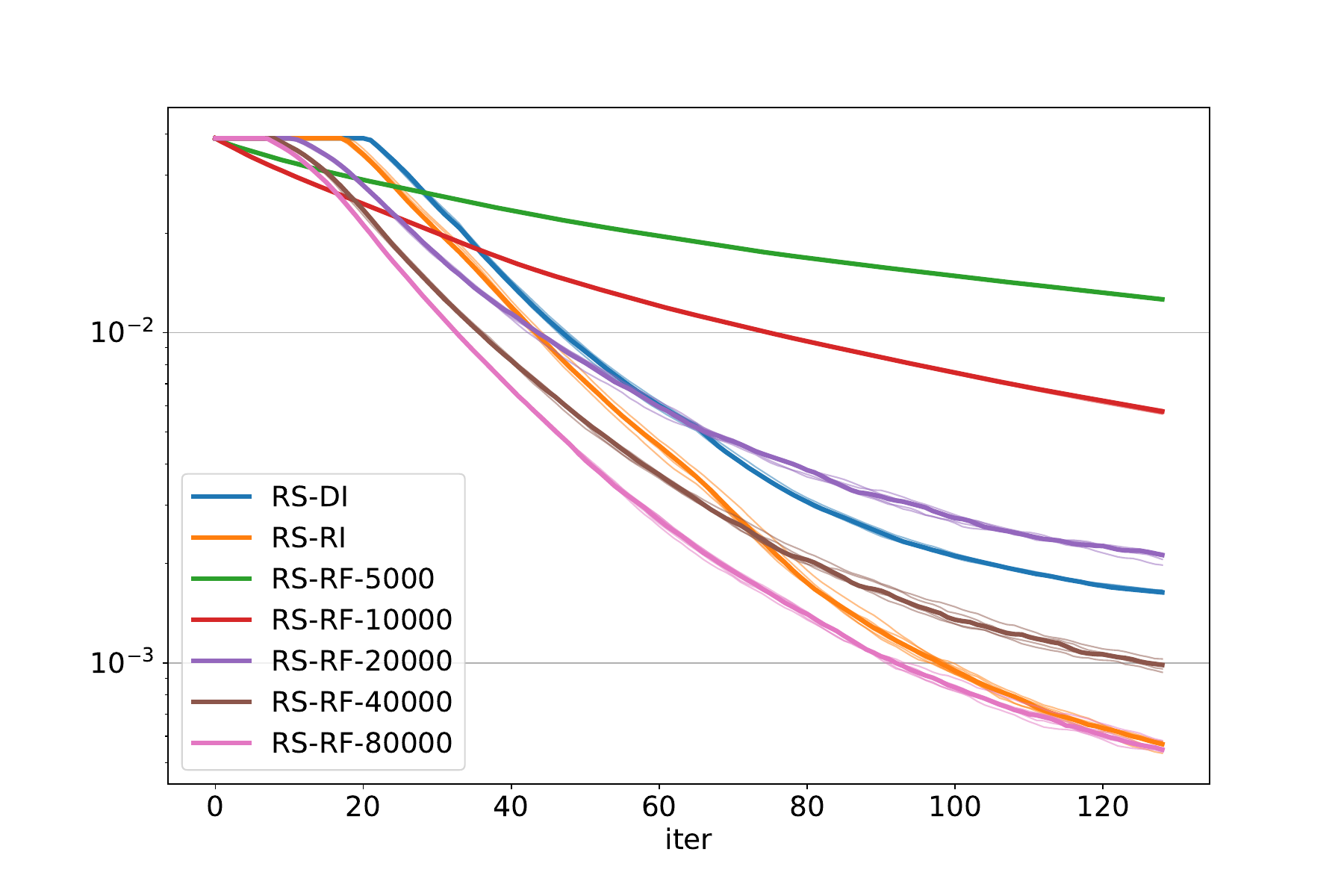}\\
    \end{tabular}
    }
    \caption{{\small{Plots on a log-scale of Relative Objective 
    versus time (mins) [left panel] and number of iterations [right panel].
}}}
    \label{fig:profile-RS-real}
\end{figure}

\clearpage
\section{Proofs}\label{sec:proofs}

\subsection{Proofs of Propositions~\ref{prop:bound-for-phis} and \ref{prop:bound-subgradient}}\label{subsec:proofs-bounds-on-phis}
The proof of Proposition~\ref{prop:bound-for-phis} is adapted from the proof of \citet[Lemma~2]{axelrod2019polynomial}, which in turn is based on \citet[Lemma~8]{carpenter2018near}. 
\begin{proof}[Proof of Proposition~\ref{prop:bound-for-phis}] The proof has three parts.

\noindent \textbf{Part 1}. We first prove that
$\phi^*_{i} \in [\phi_0,\phi^0]$ for all $i \in [n]$; or equivalently 
\begin{align}\label{eqn:bound-phi-eqn0}
    \begin{aligned}
        \log\hat{p}_n(\bm x_i)&\geq  -(n-1) - d(n-1)\log \bigl(2n + 2nd \log(2nd)\bigr) -\log \Delta\\
        \log\hat{p}_n(\bm x_i)&\leq 1+ d \log \bigl(2n + 2nd \log(2nd)\bigr) -\log \Delta
    \end{aligned}
\end{align}
for all $i \in [n]$.
Define
\[
  M:=\sup_{\bx \in \mathbb{R}^d}\hat{p}_n(\bx)=\max_{i\in[n]}\hat{p}_n(\bx_i)~~~\text{and}~~R := \frac{\max_{i\in[n]}\hat{p}_n(\bx_i)}{\min_{i\in[n]}\hat{p}_n(\bx_i)}.
\]
We proceed to obtain an upper bound on $R$.  To this end, let $\bar{p}_{n}$ denote the uniform density over $C_{n}$. If $\hat{p}_{n}(\bx_{i})=\bar{p}_{n}(\bx_{i})=1/\Delta$ for all $i$, then~\eqref{eqn:bound-phi-eqn0} holds. So we may assume that $\hat{p}_{n} \neq \bar{p}_{n}$, so that $R>1$ and $M>1/\Delta$.  For a density $p$ on $\mathbb{R}^d$ and for $t \in \mathbb{R}$, let $L_p(t) := \{\bx\in\R^d:p(\bx)\geq t\}$ denote the super-level set of $p$ at height $t$.  Since $\hat{p}_n$ is supported on $C_n$, and since $\hat{p}_n(\bx)\geq \min_{i\in[n]}\hat{p}_n(\bx_i)=M/R$ for $\bx \in C_n$, it follows by \citet[Lemma~8]{carpenter2018near}\footnote{In fact, the factor of $e$ is omitted in the statement of \citet[Lemma~8]{carpenter2018near}, but one can see from the authors' inequalities~(27) and~(28) that it should be present.} that when $R \geq e$,
\begin{equation}
\label{eqn:bound-phi-eqn1}
\Delta = \mathrm{vol}(C_n) \leq \mathrm{vol}\bigl(L_{\hat{p}_n}(M/R)\bigr) \leq \frac{e\log^d R}{M}.
\end{equation}
On the other hand, since $\inf_{\bx \in C_n}\hat{p}_n(\bx) = M/R$, we have $(M/R) \cdot \Delta \leq 1$, so for $R < e$, we have $M \leq R/\Delta < e/\Delta$.  We deduce that 
\begin{equation}
  \label{Eq:MUpperBound}
  M \leq \frac{e\log_+^d R}{\Delta},
\end{equation}
for all $R > 1$, where $\log_+(x) := 1 \vee \log x$.  
Now, by the optimality of $\hat{p}_n$, we have
\begin{align}
  \label{Eq:Opthatpn}
  n\log(1/\Delta)\! = \!\sum_{i=1}^n\log \bar{p}_n(\bx_i)\! \leq\! \sum_{i=1}^n\log \hat{p}_n(\bx_i) &\leq \min_{i \in [n]}\log \hat{p}_n(\bx_i) + (n-1)\max_{i \in [n]}\log \hat{p}_n(\bx_i) \nonumber \\
  &= \log (M/R)+(n-1)\log M,
\end{align}
so that $R \leq (M\Delta)^n$.  It follows that when $R \geq e$, we have from~\eqref{Eq:MUpperBound} and the fact that $\log y \leq y$ for $y > 0$ that
\begin{equation}
  \label{Eq:R}
  R = \frac{R^2}{R} \leq \frac{e^{2n} \log^{2nd} R}{R} \leq e^{2n}(2nd)^{2nd}.
\end{equation}
Since~\eqref{Eq:R} holds trivially when $R < e$, we may combine~\eqref{Eq:R} with~\eqref{Eq:MUpperBound} to obtain
\begin{equation}
\label{Eq:logMBound}
  \log M\leq 1+ d \log \bigl(2n + 2nd \log(2nd)\bigr) -\log \Delta
\end{equation}

Moreover, from~\eqref{Eq:Opthatpn} and~\eqref{Eq:logMBound}, we also have
\begin{align*}
  \log(M/R) &\geq - n \log \Delta - (n-1) \log M \\
  &\geq -(n-1) - d(n-1)\log \bigl(2n + 2nd \log(2nd)\bigr) -\log \Delta,
\end{align*}
as required.
%

\medskip

\noindent \textbf{Part 2}. 
Now we extend the above result to all $\bphi \in \R^n$ such that $I(\bphi)=1$ and  $f(\bphi)\leq f(\bar\bphi)$, where $\bar\bphi$ is defined just after Proposition~\ref{prop:bound-for-phis}.  The key observation here is that the proof of Part 1 applies to any density with log-likelihood at least that of the uniform distribution over~$C_{n}$.  In particular, for any $\bphi$ satisfying these conditions, 
the density $p \in \mathcal{F}_d$ given by $p(\bx)=\exp\{-\cef[\bphi](\bx)\}$ has log-likelihood at least that of the uniform distribution over~$C_{n}$, so
\begin{align*}
  -\phi^0\leq \min_{i \in [n]} \log p(\bx_i) \leq \sup_{\bx \in \R^d} \log p(\bx) \leq -\phi_0,
\end{align*}
as required.

\medskip

\noindent \textbf{Part 3}. We now consider the case for a general $\bphi \in \R^n$ with $f(\bphi)\leq f(\bar\bphi)$.  Let $\tilde{\bphi} := \bphi+\log I(\bphi)\bm1$, so that $I(\tilde{\bphi})=1$ and $\cef[\tilde{\bphi}](\cdot) =\cef[\bphi](\cdot)+\log(I(\bphi))$.  Furthermore,
\[
  f(\tilde{\bphi})= \frac{1}{n}\bm1^\top\bphi+\log I(\bphi)+1 \leq \frac{1}{n}\bm1^\top\bphi+I(\bphi)=f(\bphi)\leq f(\bar\bphi).
\]
The result therefore follows by Part~2.
\end{proof}

\begin{proof}[Proof of Proposition~\ref{prop:bound-subgradient}]
Recall our notation from Section~\ref{sec:Setup} that $\bm\alpha^*[\bphi](\bx)\in A[\bphi](\bx)$ denotes a solution to~\eqref{eqn:cef-LP} at $\bx \in C_n$.  Recall further from~\eqref{eqn:subgradient-phi} that any subgradient $\bm g(\bphi)$ of $f$ at $\bphi$ is of the form
\[
  \bm g(\bphi)= \frac{1}{n}\bm1- \bm{\gamma}, 
\]
where $\bm{\gamma}:=\Delta\E\bigl[\bm\alpha^*[\bphi](\bm\xi)\exp\{-\cef[\bphi](\bm\xi)\}\bigr]$ and $\bm\xi$ is uniformly distributed on $C_n$.
Since $\bm\alpha^*$ lies in the simplex $\{\bm\alpha\in\R^n:\bm\alpha\geq 0,\bm1^\top\bm\alpha=1 \},$
we have $\bm\gamma \geq 0$ and
\[
  \bm1^\top\bm\gamma = \Delta \E[\bm1^\top\alpha^*[\bphi](\bm\xi)\exp\{-\cef[\bphi](\bm\xi)\}]=\Delta\E[\exp\{-\cef[\bphi](\bm\xi)\}]=I(\bphi).
\]
In particular, $\norm{\bm\gamma}_1=I(\bphi)$, so 
\[
\|\bm g(\bphi)\|^2=\frac{1}{n}- \frac{2}{n}\bm1^\top\bm\gamma+\|\bm\gamma\|^2 \leq \frac1n - \frac2nI(\bphi)+ I(\bphi)^2.
\]
If $I(\bphi) \leq 1/2$, then $\norm{\bm g(\bphi)}^2\leq 1/4+ 1/n$; if $I(\phi)> 1/2$, then $\norm{\bm g(\bphi)}^2\leq I(\bphi)^2$. Therefore, $\norm{\bm g(\bphi)}^2\leq \max\bigl\{1/4+ 1/n,I(\bphi)^2\bigr\}$. 
\end{proof}

\subsection{Proofs of Proposition~\ref{prop:fhat-properties} and Proposition~\ref{prop:fbar-properties}}\label{proofs-props-properties}
The proof of Proposition~\ref{prop:fhat-properties} is based on the following properties of the quadratic program $q_u[\bphi](\bx)$ defined in~\eqref{eqn:h-QP}, as well as its unique optimizer $\bm{\alpha}_u^*[\bphi](\bx)$:
\begin{proposition}\label{prop:QP-properties}
  For $\bphi\in\bm{\Phi}$ and $\bx\in C_n$, we have
  
  \noindent(a) $\|\bm\alpha_u^*[\bphi](\bx)\|\leq 1$ for any $\bx\in C_n$ and $\bphi \in \bm\Phi$;
  
  \noindent(b) $q_{u'}[\bphi](\bx)+(u'-u)/2\leq q_u[\bphi](\bx)\leq q_{u'}[\bphi](\bx)$ for $u' \in [0,u]$;
  
  \noindent(c) $\phi_0-\frac{u}{2} \leq q_u[\bphi](\bx)\leq \phi^0 $ for all $u\geq 0$, $\bphi\in\bm{\Phi}$ and $\bx\in C_n$;  
  
  \noindent(d) $\|\bm\alpha_u^*[\tilde\bphi](\bx)-\bm\alpha_u^*[\bphi](\bx)\|\leq ({1}/{u})\|\tilde\bphi-\bphi\|$ for any $u > 0$, $\bphi,\tilde{\bphi}\in\bm{\bm{\Phi}}$, and any $\bx\in C_n$.
\end{proposition}
\begin{proof} The proof exploits ideas from~\cite{nesterov2005smooth}. For (a), observe that $\bm\alpha_u^*[\bphi](\bx)\in E(\bx) \subseteq \{\bm\alpha \in \R^n: \bm1_n^\top\bm\alpha=1, \bm\alpha\geq 0\bigr\}$, and this simplex is the convex hull of $n+1$ points that all lie in the closed unit Euclidean ball in $\R^n$.

  The lower bound in (b) follows immediately from the definition of the quadratic program in~\eqref{eqn:h-QP}.  For the upper bound, for $u' \in [0,u]$, we have
  \begin{align*}
  &~q_u[\bphi](\bx) - q_{u'}[\bphi](\bx)\\
     &= \bm\alpha^*_u[\bphi](\bx)^\top \bphi + \frac{u}{2}\bigl\|\bm\alpha_u^*[\bphi](\bx) - \bm\alpha_0\bigr\|^2 - \bm\alpha^*_{u'}[\bphi](\bx)^\top \bphi - \frac{u'}{2}\bigl\|\bm\alpha_{u'}^*[\bphi](\bx) - \bm\alpha_0\bigr\|^2+\frac{u'-u}{2} \\
    &\leq \frac{u-u'}{2}\biggl(\bigl\|\bm\alpha_{u'}^*[\bphi](\bx) - \bm\alpha_0\bigr\|^2-1\biggr) \leq \frac{u-u'}{2}\biggl(1 - \frac{2}{n} + \frac{1}{n}-1\biggr) \leq 0.
  \end{align*}

  (c) For all $u\geq 0$, $\bphi\in\bm{\Phi}$ and $\bx\in C_n$, we have
  \begin{align*}
    q_u[\bphi](\bx) = \inf_{\bm \alpha \in E(\bx)}\biggl(\bm\alpha^\top \bphi + \frac{u}{2}\|\bm\alpha - \bm\alpha_0\|^2-\frac{u}2\biggr) &\geq \inf_{\bm \alpha \in E(\bx)} \bm\alpha^\top \bphi-\frac{u}2 \\
    &\geq \phi_0 \inf_{\bm \alpha \in E(\bx)} \bm \alpha^\top \bm1_n-\frac{u}2 = \phi_0-\frac{u}2.
  \end{align*}
  Similarly,
  \[
    q_u[\bphi](\bx) \leq \phi^0 \sup_{\bm \alpha \in E(\bx)} \bm\alpha^\top \bm1_n + \frac{u}{2}\sup_{\bm \alpha \in E(\bx)}\|\bm\alpha - \bm\alpha_0\|^2-\frac{u}2 \leq \phi^0.
  \]

(d) Observe that
\[
  \bm\alpha_u^*[\bphi](\bx) = \argmin_{\bm \alpha \in E(\bx)}~ \biggl(\bm\alpha^\top \bphi + \frac{u}{2}\|\bm\alpha - \bm\alpha_0\|^2\biggr) = \argmin_{\bm \alpha \in E(\bx)}~ \biggl\|\bm\alpha - \biggl(\bm\alpha_0 - \frac{\bphi}{u}\biggr)\biggr\|^2,
\]
so $\bm\alpha_u^*[\bphi](\bx)$ is the Euclidean projection of $\bm\alpha_0- ({\bphi}/{u})$ onto $E(\bx)$.  Since this projection is an $\ell_2$-contraction, we deduce that
\[
  \bigl\|\bm\alpha_u^*[\tilde{\bphi}](\bx) - \bm\alpha_u^*[\bphi](\bx)\bigr\| \leq \biggl\|\bm\alpha_0- \frac{\tilde{\bphi}}{u} - \biggl(\bm\alpha_0- \frac{\bphi}{u}\biggr)\biggr\| = \frac{1}{u}\|\tilde{\bphi} - \bphi\|,
  \] 
as required.
\end{proof}
\begin{proof}[Proof of Proposition~\ref{prop:fhat-properties}]
  (a) For $u \geq 0$ and $\bphi \in \bm\Phi$, let
  \[
    \tilde{I}_u(\bphi) := \int_{C_n} e^{-q_u[\bphi](\bx)} \, d\bx = \Delta \mathbb{E}(e^{-q_u[\bphi](\bm\xi)}),
  \]
  where $\bm\xi$ is uniformly distributed on $C_n$, so that $\tilde{I}_0(\bphi) = I(\bphi)$.  By definition of $\tilde{f}_u$, we have for $u' \in [0,u]$ that
  \begin{equation}
    \label{eqn:fhat-property-a-eqn1}
    \tilde{f}_u(\bphi)-\tilde{f}_{u'}(\bphi)=\tilde{I}_u(\bphi)-\tilde{I}_{u'}(\bphi)=\Delta\E\bigl(e^{-q_u[\bphi](\bm\xi)}-e^{-q_{u'}[\bphi](\bm\xi)}\bigr) \geq 0,
  \end{equation}
where the inequality follows from Proposition~\ref{prop:QP-properties}(b).  Hence, for every $u \geq 0$ and $\bphi \in \bm\Phi$, 
\begin{equation}\label{eqn:fhat-property-a-eqn3}\tilde{I}_{u}(\bphi)\leq e^{u/2}\tilde{I}_0(\bphi) = e^{u/2}I(\bphi).
\end{equation}
Now, from~\eqref{eqn:fhat-property-a-eqn1}, Proposition~\ref{prop:QP-properties}(b) and~\eqref{eqn:fhat-property-a-eqn3}, we deduce that
\[
  \tilde{f}_u(\bphi)-\tilde{f}_{u'}(\bphi)\leq \bigl(e^{(u-u')/2}-1\bigr)\tilde{I}_{u'}(\bphi) \leq \frac{u-u'}{2}e^{u'/2}I(\bphi),
\]
as required.



\medskip

\noindent (b) For each $\bx \in C_n$, the function $\bphi \mapsto q_u[\bphi](\bx)$ is the infimum of a set of affine functions of $\bphi$, so it is concave.  Moreover, $y \mapsto e^{-y}$ is a decreasing convex function, so $\bphi \mapsto e^{-q_u[\bphi](\bx)}$ is convex, and it follows that $\bphi \mapsto (1/n)\bm1^\top \bphi + \Delta\mathbb{E}(e^{-q_u[\bphi](\bm\xi)}) = \tilde{f}_u(\bphi)$ is convex. Similarly to the proof of Proposition~\ref{prop:bound-subgradient}, any subgradient $\tilde{\bm g}_u(\phi)$ of $\tilde{f}_u$ at $\bphi$ satisfies
\begin{equation}\label{proof-prop3-linea-1}
\norm{\tilde{\bm{g}}_u(\bphi)}^2\leq \max\biggl\{\frac14+\frac1n,\tilde{I}_u(\bphi)^2\biggr\}\leq \max\biggl\{\frac14+\frac1n,\Delta^2e^{-2\phi_0+u}\biggr\}.
\end{equation}
But $\bphi^*\in\bm{\Phi}$, so $\Delta^2 e^{-2\phi_0+u}\geq (\Delta e^{-\phi_0})^2 \geq I(\bphi^*)^2 = 1 \geq 1/4 + 1/n$. Hence, $\tilde{f}_u$ is $e^{-\phi_0+u/2}$-Lipschitz.

\medskip

\noindent (c) 
To establish the Lipschitz property of $\nabla_{\bm\phi}\tilde{f}_u$, for any $\bx \in C_n$, any $\bphi,\tilde\bphi\in \bm{\Phi}$, and $t\in[0,1]$, we define
\[
  \eta(t):=e^{-q_u[\bphi+t(\tilde\bphi-\bphi)](\bx)}.
\]
Then
\[
  \eta'(t)=-e^{-q_u[\bphi+t(\tilde\bphi-\bphi)](\bx)}(\tilde\bphi-\bphi)^\top\bm\alpha_u^*[\bphi+t(\tilde\bphi-\bphi)](\bx).
\]
By the mean value theorem there exists $t_0\in[0,1]$ such that  
\begin{align}\label{prop-3-c-line}
  &\bigl|e^{-q_u[\tilde\bphi](\bx)}-e^{-q_u[\bphi](\bx)}\bigr| = |\eta(1)\!-\!\eta(0)|=|\eta'(t_0)| \nonumber\\
  &\leq e^{-q_u[\bphi+t_0(\tilde\bphi-\bphi)](\bx)}\|\tilde\bphi-\bphi\|\bigl\|\bm\alpha_u^*[\bphi+t_0(\tilde\bphi-\bphi)](\bx)\bigr\| \nonumber \\
  &\leq e^{-\phi_0+u/2}\|\tilde\bphi-\bphi\|,
\end{align}
where the final bound follows from Proposition~\ref{prop:QP-properties}(a) and~(c).  Now, for any $\bx \in C_n$, we have by~\eqref{prop-3-c-line} as well as Proposition~\ref{prop:QP-properties}(a),~(c) and~(d) that 
\begin{align*}
    &\bigl\|e^{-q_u[\tilde\bphi](\bx)}\bm\alpha_u^*[\tilde\bphi](\bx)-e^{-q_u[\bphi](\bx)}\bm\alpha_u^*[\bphi](\bx)\bigr\| \nonumber \\
    &= \bigl\|\bigl(e^{-q_u[\tilde\bphi](\bx)}-e^{-q_u[\bphi](\bx)}\bigr)\bm\alpha_u^*[\tilde\bphi](\bx)+e^{-q_u[\bphi](\bx)}\bigl(\bm\alpha_u^*[\tilde\bphi](\bx)-\bm\alpha_u^*[\bphi](\bx)\bigr)\bigr\| \nonumber \\
    &\leq e^{-\phi_0+u/2}\|\tilde\bphi-\bphi\|+ e^+u/2{-\phi_0}\frac{1}{u}\|\tilde\bphi-\bphi\|  =e^{-\phi_0+u/2}(1+u^{-1})\|\tilde\bphi-\bphi\|. \label{prop3-long-ineq-diff1}
\end{align*}
It follows that for any $\bphi,\tilde\bphi\in \bm{\Phi}$, we have
\begin{align*}
    \bigl\|\nabla_{\tilde{\bphi}}\tilde{f}_u(\tilde\bphi)-\nabla_{\bphi}\tilde{f}_u(\bphi)\bigr\|\leq \Delta\E\bigl\|e^{-q_u[\tilde\bphi](\bm\xi)}\bm\alpha_u^*[\tilde\bphi](\bm\xi)&-e^{-q_u[\bphi](\bm\xi)}\bm\alpha_u^*[\bphi](\bm\xi)\bigr\| \nonumber\\
    & \leq \Delta e^{-\phi_0+u/2}(1+u^{-1})\|\tilde\bphi-\bphi\|,
\end{align*}
as required.

\medskip

\noindent (d) For $u \geq 0$ and $\bphi\in\bm\Phi$, it follows from Proposition~\ref{prop:QP-properties}(a) and~(c) that
\begin{align*}
    \E\bigl(\|{\tilde{\bm G}_u(\bphi,\bm\xi)-\nabla\tilde{f}_u(\bphi)}\|^2\bigr) &= \E \bigl\|\Delta e^{-q_u[\bphi](\bm\xi)}\bm\alpha_u^*[\bphi](\bm\xi)-\Delta\E\bigl(e^{-q_u[\bphi](\bm\xi)}\bm\alpha_u^*[\bphi](\bm\xi)\bigr)\bigr\|^2\\
    &\leq \Delta^2 \E \bigl\|e^{-q_u[\bphi](\bm\xi)}\bm\alpha_u^*[\bphi](\bm\xi)\bigr\|^2 \leq (\Delta e^{-\phi_0+u/2})^2,
\end{align*}
as required.
\end{proof}

\begin{proposition}\label{proof-expected-prop6}
If $\bm{z}$ is uniformly distributed on the unit $\ell_{2}$-ball in $\mathbb{R}^n$, then $\E(\norm{\bz}_\infty)\leq \sqrt{\tfrac{2\log n}{n+1}}$.
\end{proposition}
\begin{proof}
By \citet[Proposition~3]{xu2019high}, we have that $\bz \overset{d}{=} U^{1/n} \bz'$, where $U\sim \mathcal{U}[0,1]$, where $\bz'$ is uniformly distributed on the unit sphere in $\R^n$, and where $U$ and $\bz'$ are independent. 
Thus, 
\begin{equation}
  \label{eqn:fbar-property-a-eqn3}
  \E(\norm{\bm z}_\infty) = \E(U^{1/n})\E(\norm{\bm z'}_{\infty})=
  \frac{n}{n+1} \E(\norm{\bm z'}_{\infty}).
\end{equation}
Moreover, if $\bm\zeta\sim\mathcal{N}(0,I_n)$, then $\norm{\bm\zeta}$ and $\bm\zeta/\norm{\bm\zeta}$ are independent, and $\bz'\overset{d}{=}\bm\zeta/\norm{\bm\zeta}$.  It follows that
\begin{equation}
\E(\norm{\bm z'}_{\infty})\! =\! \mathbb{E}\biggl(\frac{\|\bm\zeta\|_\infty}{\|\bm\zeta\|}\biggr) \cdot \frac{\mathbb{E}(\|\bm\zeta\|)}{\mathbb{E}(\|\bm\zeta\|)} \!=\!\frac{\mathbb{E}(\|\bm\zeta\|_\infty)}{\mathbb{E}(\|\bm\zeta\|)} = \frac{\sqrt{2\log n} \cdot \Gamma(n/2)}{2^{1/2}\Gamma\bigl((n+1)/2\bigr)} \leq \frac{1}{n}\sqrt{2(n+1)\log n},\label{eqn:fbar-property-a-eqn4}\end{equation}
where the final bound follows from bounds on the gamma function, e.g. \cite[Lemma~12]{duembgen2019bounding}.  The result follows from~\eqref{eqn:fbar-property-a-eqn3} and~\eqref{eqn:fbar-property-a-eqn4}.
\end{proof}

\begin{proof}[Proof of Proposition~\ref{prop:fbar-properties}]
(a) By Jensen's inequality,  
\begin{equation}\label{ineq-1-prop4-proof}
\bar{f}_u(\bphi)=\E f(\bphi+u\bm z) \geq f(\bphi).
\end{equation}
For the upper bound, let $\bm v\in \R^n$ have $\|\bm v\| \leq 1$ and, for some $\bphi\in\bm{\Phi}$, let $\bar\bphi := \bphi+u\bm v$. For any $\bm\alpha\in E(\bx)$, we have
\[
  |\bm\alpha^\top\bar\bphi-\bm\alpha^\top\bphi|=u|\bm\alpha^\top\bm v|\leq u \norm{\bm\alpha}_1\norm{\bm v}_{\infty}\leq u.
\]
Therefore, for any $\bx\in C_n$, we have $\cef[\bar\bphi](\bx)\geq \cef[\bphi](\bx)-u.$
Hence
\begin{equation}\label{proof-prop4-lineiphi-1}
I(\bar\bphi)=\Delta\E[\exp\{-\cef[\bar\bphi](\bm\xi)\}]\leq e^u\Delta\E[\exp\{-\cef[\bphi](\bm\xi)\}]=e^uI(\bphi).
\end{equation}
Recall that all subgradients of $I$ at $\tilde{\bphi} \in \R^n$ are of the form $-\bm\gamma(\tilde{\bphi})$, where
\[
  \bm\gamma(\tilde{\bphi}) = \Delta\E\bigl(\bm\alpha\exp\{-\cef[\tilde{\bphi}](\bm\xi)\}\bigr)
\]
for some $\bm\alpha \in A[\tilde{\bphi}](\bx)$.  Moreover, as we saw in the proof of Proposition~\ref{prop:bound-subgradient}, $\|\bm\gamma(\tilde{\bphi})\|_1=I(\tilde{\bphi})$. 
We deduce from \citet[Theorem~24.7]{rockafellar1997convex} that
\begin{align*}
  \label{eqn:fbar-property-a-eqn2}
  \bar{f}_u(\bphi)-f(\bphi) = \E\bigl(I(\bphi+u\bz)-I(\bphi)\bigr)& \leq u\sup_{\bphi \in \bm \Phi,\|\bm v\| \leq 1} I(\bphi + u\bm v)\mathbb{E}\bigl(\|\bm z\|_\infty\bigr)  \nonumber \\
  &\leq I(\bphi)ue^u\E\bigl(\norm{\bz}_\infty\bigr)\leq I(\bphi)ue^u\sqrt{\frac{2\log n}{n+1}},
  \end{align*}
  where the final inequality uses Proposition~\ref{proof-expected-prop6}.
\medskip

\noindent (b) 
By the convexity of $f$, we have
\[
  \bar{f}_{u'}(\bphi)= \E\bigl(f(\bphi+u'\bz)\bigr) \leq \frac{u'}{u}\bar{f}_{u}(\bphi) +\biggl(1-\frac{u'}{u}\biggr)f(\bphi) \leq \bar{f}_u(\bphi),
\]
where the last inequality uses property~(a).

\medskip

\noindent (c) For each $\bm v \in \R^n$ with $\|\bm v\| = 1$, the map  $\bphi \mapsto f(\bphi+u\bm v)$ is convex, so $\bphi \mapsto \E\bigl(f(\bphi+u\bz)\bigr) = \bar{f}_u(\bphi)$ is convex.  The proof of the Lipschitz property is very similar to that of~Proposition~\ref{prop:fhat-properties}(b) and is omitted for brevity.

\medskip

\noindent (d) As in the proof of~(a), for any $\bm v \in \R^n$ with $\|\bm v\| \leq 1$, $\bx \in C_n$ and $\bm \alpha \in A[\bphi+u\bm v](\bx)$, we have
\begin{equation*}\label{eqn:fbar-prop-proof-d-1}
  \bigl\|\bm\alpha e^{-\cef[\bphi+u\bm v](\bx)}\bigr\| \leq e^{-\phi_0+u}.
\end{equation*}
Since $\nabla_{\bphi} \bar{f}_u(\bphi)=n^{-1}\bm1 -\Delta \E\bigl(\bm \alpha^*[\bphi+u\bz](\bm\xi) e^{-\cef[\bphi+u\bz](\bm\xi)}\bigr)$, where $\bm \alpha^*[\bphi+u\bm v](\bx) \in A[\bphi+u\bm v](\bx)$, we have by \citet[Lemma~8]{yousefian2012stochastic} that $\nabla_{\bphi} \bar{f}_u$ is $\Delta e^{-\phi_0+u}n^{1/2}/u$-Lipschitz. 


\medskip

\noindent (e) The proof is very similar to the proof of Proposition~\ref{prop:fhat-properties}(d) and is omitted for brevity.
\end{proof}

\subsection{Proofs of Theorem~\ref{thm:smoothing-guarantee-expectation} and Theorem~\ref{thm:smoothing-guarantee-highprob}}
We will make use of the following lemma:
\begin{lemma}[Lemma~4.2 of \cite{duchi2012randomized}]\label{lem:smoothing-key-lemma}
  Let $\bigl(\ell_{u_t}(\bphi)\bigr)_t$ be a smoothing sequence such that $\bphi \mapsto \ell_{u_t}(\bphi)$ has $L_t$-Lipschitz gradient. Assume that $\ell_{u_t}(\bphi)\leq \ell_{u_{t-1}}(\bphi)$ for $\bphi\in\bm{\Phi}$. Let $(\bphi_t^{(x)})_{t=0}^T, (\bphi^{(y)}_t)_{t=0}^T, (\bphi^{(z)}_t)_{t=0}^T$ be the sequences generated by Algorithm~\ref{algo:smoothing}.  Let $\bm g_t$ denote an approximation of $\nabla \ell_{u_t}(\bphi_t^{(y)})$ with error $\bm e_t= \bm g_t - \nabla \ell_{u_t}(\bphi_t^{(y)})$.  Then for any $\bphi\in\bm{\Phi}$ and $t \in \mathbb{N}$, we have
\begin{align*}
    \frac{1}{\theta_t^2}\ell_{u_t}(\bphi^{(x)}_{t+1})\leq \sum_{\tau=0}^t\frac1{\theta_\tau}\ell_{u_\tau}(\bphi)&+\frac{1}{2}\biggl(L_{t+1}+\frac{\eta_{t+1}}{\theta_{t+1}}\biggr)\norm{\bphi-\bphi_0}^2\\
    &+\sum_{\tau=0}^t\frac{\norm{\bm e_\tau}^2}{2\theta_\tau\eta_\tau}+\sum_{\tau=0}^t\frac{1}{\theta_\tau}\langle \bm e_\tau,\bphi - \bphi^{(z)}_\tau\rangle.
\end{align*}
\end{lemma}
Recall the definition of the diameter $D$ of $\bm \Phi$ given just before Theorem~\ref{thm:smoothing-guarantee-expectation}.
\begin{cor}\label{cor:smoothing-key-lemma-cor}
Fix $u, \eta > 0$, and assume that Assumption~\ref{assump:smoothing-seq} holds with $r \geq u$.  Suppose in Algorithm~\ref{algo:smoothing} that $u_t=\theta_tu$, $L_t={B_1}/{u_t}$ and $\eta_t=\eta$.  Let $(\bphi_t^{(x)})_{t=0}^T, (\bphi^{(y)}_t)_{t=0}^T, (\bphi^{(z)}_t)_{t=0}^T$ be the sequences generated by Algorithm~\ref{algo:smoothing} and let $\bm e_t= \bm g_t - \nabla \ell_{u_t}(\bphi_t^{(y)})$.  Then for any $\bphi \in \bm{\Phi}$, we have 
\begin{align*}
    f(\bphi_T^{(x)})-f(\bphi)\leq  \frac{B_1D^2}{Tu}+\frac{\eta D^2}{T}+\frac{1}{T\eta}\sum_{t=0}^{T-1}\norm{\bm e_t}^2 + \theta_{T-1}^2\sum_{t=0}^{T-1}\frac{1}{\theta_t}\langle \bm e_t,\bphi \!-\! \bphi_t^{(z)}\rangle+\frac{4B_0I(\bphi)u}{T}.
\end{align*}
\end{cor}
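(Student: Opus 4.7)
My plan is to apply \Cref{lem:smoothing-key-lemma} at index $t = T-1$, multiply the resulting inequality through by $\theta_{T-1}^2$, and then simplify each of the five terms on the right using (i) \Cref{assump:smoothing-seq} to sandwich $f$ between the smoothed objectives, (ii) standard identities for the Nesterov-type sequence $(\theta_t)$, and (iii) the diameter bound $\|\bphi - \bphi_0\| \leq D$ that holds because both $\bphi$ and $\bphi_0 = \bphi_0^{(x)}$ lie in $\bm{\Phi}$.

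First I would note that, since $u_\tau = \theta_\tau u \leq u \leq r$ for all $\tau$, \Cref{assump:smoothing-seq}(a) gives $f(\bphi_T^{(x)}) \leq \ell_{u_{T-1}}(\bphi_T^{(x)})$ on the left and $\ell_{u_\tau}(\bphi) \leq f(\bphi) + B_0\theta_\tau u$ on the right. Next I would record the two algebraic facts about $(\theta_t)$ arising from $\tfrac{1-\theta_{t+1}}{\theta_{t+1}^2} = \tfrac{1}{\theta_t^2}$: the telescoping identity $\sum_{\tau=0}^{T-1} \tfrac{1}{\theta_\tau} = \tfrac{1}{\theta_{T-1}^2}$, and the monotone bound $\theta_t \leq 2/(t+2)$, which in particular gives $\theta_{T-1} \leq 2/(T+1)$. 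Combining the telescoping identity with the sandwich inequalities yields
\[
\theta_{T-1}^2\sum_{\tau=0}^{T-1}\frac{1}{\theta_\tau}\ell_{u_\tau}(\bphi) \leq f(\bphi) + B_0 u T \theta_{T-1}^2.
\]

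The remaining step is routine bookkeeping. Plugging in $L_T = B_1/(\theta_T u)$ and $\eta_T = \eta$, I need three numerical bounds: $T\theta_{T-1}^2 \leq 4/T$ (from $\theta_{T-1} \leq 2/(T+1)$), $\theta_{T-1}^2/\theta_T = \theta_T/(1-\theta_T) \leq 2/T$ (from the recursion identity combined with $\theta_T \leq 2/(T+2)$), and $\theta_{T-1}^2/(2\theta_\tau) \leq \theta_{T-1}/2 \leq 1/T$ for each $\tau \leq T-1$ (using that $(\theta_\tau)$ is non-increasing). These convert the four non-random terms into $4B_0u/T$, $B_1 D^2/(uT)$, $\eta D^2/T$, and $(T\eta)^{-1}\sum_{\tau=0}^{T-1}\|\bm e_\tau\|^2$ respectively, while the cross term $\theta_{T-1}^2\sum_\tau \theta_\tau^{-1}\langle \bm e_\tau, \bphi - \bphi_\tau^{(z)}\rangle$ is left intact so that it can be handled via conditional expectations or martingale concentration in the proofs of \Cref{thm:smoothing-guarantee-expectation} and \Cref{thm:smoothing-guarantee-highprob}.

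There is no genuine obstacle here; the entire proof is a mechanical rewriting of \Cref{lem:smoothing-key-lemma}. The only point deserving care is the quick verification of the two $(\theta_t)$ identities from the recursion defining $\theta_{t+1}$, which are standard in the Nesterov acceleration literature but should be derived for self-containedness.
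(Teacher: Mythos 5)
Your proposal is correct and follows essentially the same argument as the paper's proof: apply Lemma~\ref{lem:smoothing-key-lemma} at $t=T-1$, invoke the sandwich bounds from Assumption~\ref{assump:smoothing-seq}(a), use the identities $\sum_{\tau=0}^{T-1}\theta_\tau^{-1}=\theta_{T-1}^{-2}$ and $\theta_t\leq 2/(t+2)$ together with $\theta_{T-1}^2/\theta_T=\theta_T/(1-\theta_T)\leq 2/T$, and plug in $L_T=B_1/(\theta_T u)$, $\eta_T=\eta$, and the diameter bound. The bookkeeping you describe matches the paper's proof step for step.
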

\begin{proof}
By induction, we have that $\theta_t\leq {2}/{(t+2)}$ and $\sum_{\tau=0}^t {1}/{\theta_\tau}={1}/{\theta_t^2}$ for all $t = 0,1,\ldots,T$ \cite{tseng2008accelerated,duchi2012randomized}. 
Using Assumption~\ref{assump:smoothing-seq}, we have 
\begin{align*}
\frac{1}{\theta_{T-1}^2}\bigl(f(\bphi_T^{(x)})-f(\bphi)\bigr)    &\leq \frac{1}{\theta_{T-1}^2}\ell_{u_{T-1}}(\bphi^{(x)}_T)-\sum_{t=0}^{T-1}\frac{1}{\theta_t} \bigl(\ell_{u_t}(\bphi)-B_0u_t\bigr)\nonumber\\
    &= \frac{1}{\theta_{T-1}^2}\ell_{u_{T-1}}(\bphi_T^{(x)})-\sum_{t=0}^{T-1}\frac{1}{\theta_t}\ell_{u_t}(\bphi)+TB_0I(\bphi)u.\label{eqn:smoothing-cor-proof-eqn1}
\end{align*}
Hence, by Lemma~\ref{lem:smoothing-key-lemma},
\begin{align*}
    f(\bphi_T^{(x)})&-f(\bphi) \nonumber \\
    \leq &~ \frac{\theta_{T-1}^2}{2}\norm{\bphi-\bphi_0}^2\biggl(L_T + \frac{\eta_T}{\theta_T}\biggr)+\sum_{t=0}^{T-1}\frac{\theta_{T-1}^2}{2\theta_t\eta_t}\norm{\bm e_t}^2\\
    &\hspace{4.5cm}+\theta_{T-1}^2\sum_{t=0}^{T-1}\frac{1}{\theta_t}\langle \bm e_t,\bphi - \bphi^{(z)}_t\rangle+\theta_{T-1}^2TB_0I(\bphi)u \nonumber  \\
    \leq &~\frac{B_1D^2\theta_{T-1}^2}{2u\theta_T}+\frac{\eta_T\theta_{T-1}^2D^2}{2\theta_T}+\sum_{t=0}^{T-1}\frac{\theta_{T-1}^2}{2\theta_t\eta_t}\norm{\bm e_t}^2\\
    &\hspace{4.5cm}+\theta_{T-1}^2 \sum_{t=0}^{T-1} \frac{1}{\theta_t} \langle \bm e_t,\bphi - \bphi^{(z)}_t\rangle+\theta_{T-1}^2TB_0I(\bphi)u \\
    \leq&~\frac{B_1D^2}{Tu}+\frac{\eta_TD^2}{T}+\frac{1}{T}\sum_{t=0}^{T-1}\frac{1}{\eta_t}\norm{\bm e_t}^2+\theta_{T-1}^2\sum_{t=0}^{T-1}\frac{1}{\theta_t}\langle \bm e_t,\bphi - \bphi^{(z)}_t\rangle+\frac{4B_0I(\bphi)u}{T}, 
\end{align*}
where we have used the facts that $\theta_{T-1}^2/\theta_T = \theta_T/(1-\theta_T) \leq 2/T$ and $\theta_{T-1}^2/\theta_t \leq \theta_{T-1} \leq 2/T$ for $t \in \{0,1,\ldots,T-1\}$ and $\theta_{T-1}^2\leq 4/T^2$. 
\end{proof}

\begin{proof}[Proof of Theorem~\ref{thm:smoothing-guarantee-expectation}]
According to Corollary~\ref{cor:smoothing-key-lemma-cor}, it suffices to bound  $\theta_{T-1}^2\sum_{t=0}^{T-1}\frac{1}{\theta_t}\E\bigl(\langle \bm e_t,\bphi - \bphi^{(z)}_t\rangle\bigr)$ and $\sum_{t=0}^{T-1}\E\bigl(\norm{\bm e_t}^2\bigr)$.  To this end, we have by Assumption~\ref{assump:error} that
\begin{align}
    \E\bigl(\langle \bm e_t,\bphi - \bphi^{(z)}_t\rangle\bigr) &= \E\bigl(\E(\langle \bm e_t,\bphi - \bphi^{(z)}_t\rangle \bigm| \mathcal{F}_{t-1})\bigr) \leq \E\bigl(\E(\|\bm e_t\|\|\bphi - \bphi^{(z)}_t\| \mid \mathcal{F}_{t-1})\bigr)\nonumber\\
    &\leq D\cdot \E\bigl(\E(\norm{\bm e_t}\mid\mathcal{F}_{t-1})\bigr) \leq D \cdot \E\Bigl(\sqrt{\E\bigl(\|\bm e_t\|^2\bigm| \mathcal{F}_{t-1}\bigr)}\Bigr) \leq \frac{D\sigma}{\sqrt{m_t}}.
\end{align}
We deduce that
\begin{equation}
  \theta_{T-1}^2\sum_{t=0}^{T-1}\frac{1}{\theta_t}\E\bigl(\langle \bm e_t,\bphi - \bphi_t^{(z)}\rangle \bigr) \leq \frac{2D\sigma M_T^{(1/2)}}{T}.\label{eqn:smoothing-thm-expectation-proof-eqn2}
\end{equation}
Moreover, by Assumption~\ref{assump:error} again, 
\begin{equation}
  \sum_{t=0}^{T-1}\E\bigl(\|\bm e_t\|^2\bigr)\leq \sigma^2\sum_{t=0}^{T-1} \frac{1}{m_t} = \sigma^2(M_T^{(1)})^2.\label{eqn:smoothing-thm-expectation-proof-eqn3}
\end{equation}
The bound~\eqref{thm1-state-1} follows from Corollary~\ref{cor:smoothing-key-lemma-cor}, together with~\eqref{eqn:smoothing-thm-expectation-proof-eqn2} and~\eqref{eqn:smoothing-thm-expectation-proof-eqn3}, and the bound~\eqref{thm1-state-2} then follows directly from the parameter choice of $u$ and $\eta$ and the fact that $I(\bphi^*)=1$.

Finally, if $\E(\bm e_t\mid \mathcal{F}_{t-1})=\bm0$, then
\begin{align*}
    \E\bigl(\langle \bm e_t,\bphi - \bphi_t^{(z)}\rangle\bigr)=\E\bigl(\E(\langle \bm e_t,\bphi - \bphi_t^{(z)}\rangle\mid\mathcal{F}_{t-1})\bigr)=\E\bigl(\langle \E(\bm e_t|\mathcal{F}_{t-1}),\bphi- \bphi_t^{(z)}\rangle\bigr)=0
\end{align*}
where the second equality uses the fact that $\bphi - \bphi_t^{(z)}$ is $\mathcal{F}_{t-1}$-measurable.  This allows us to remove the last term of the two inequalities in the theorem.
\end{proof}

\begin{proof}[Proof of Theorem~\ref{thm:smoothing-guarantee-highprob}]
According to Corollary~\ref{cor:smoothing-key-lemma-cor}, it suffices to obtain a high-probability bound for $\theta_{T-1}^2\sum_{t=0}^{T-1}\frac1{\theta_t}\langle \bm e_t,\bphi-\bphi_t^{(z)}\rangle$ and $\sum_{t=0}^{T-1}\norm{\bm e_t}^2$.  Writing $\bm\zeta_t:=\bphi - \bphi_t^{(z)}$, we have from the proof of Theorem~\ref{thm:smoothing-guarantee-expectation} that $(1/{\theta_t})\langle \bm e_t,\bm\zeta_t\rangle$ is a martingale difference sequence under Assumption~\ref{assump:error-subGaussian}.  Note that $\bm\zeta_t$ is $\mathcal{F}_{t-1}$-measurable, and we will now show that $\langle\bm e_t, \bm\zeta_t\rangle$ is $\sqrt{2}\sigma_t D$ sub-Gaussian, conditional on $\mathcal{F}_{t-1}$.

For any $x\in\R$, we have $e^x\leq x+e^{x^2}$. Hence, for $\lambda \in \R$ such that $\lambda^2\sigma_t^2D^2\leq 1$, we have by the conditional version of Jensen's inequality that
\begin{align*}
    \E\bigl(e^{\lambda\langle \bm e_t,\bm\zeta_t\rangle}\mid\mathcal{F}_{t-1}\bigr) &\leq \E\bigl(\lambda\langle \bm e_t,\bm\zeta_t\rangle\mid\mathcal{F}_{t-1}\bigr)+\E\bigl(e^{\lambda^2\langle \bm e_t,\bm\zeta_t\rangle^2}\mid\mathcal{F}_{t-1}\bigr)\\ &\leq \E\bigl(e^{\lambda^2\norm{\bm e_t}^2D^2}|\mathcal{F}_{t-1}\bigr) \leq e^{\lambda^2\sigma_t^2D^2}.
\end{align*}
On the other hand, if $\lambda^2\sigma_t^2D^2>1$, then since $2ab \leq a^2 + b^2$ for all $a,b \in \R$, we have
\begin{align*}
    \E\bigl(e^{\lambda\langle \bm e_t,\bm\zeta_t\rangle}\mid\mathcal{F}_{t-1} \bigr)&\leq e^{\lambda^2\sigma_t^2D^2/2}\E\bigl(e^{\langle\bm e_t,\bm\zeta_t\rangle^2/(2\sigma_t^2D^2)}\mid\mathcal{F}_{t-1}\bigr)\\
    &\leq e^{\lambda^2\sigma_t^2D^2/2}\E\bigl(e^{\norm{\bm e_t}^2/(2\sigma_t^2)}\mid\mathcal{F}_{t-1}\bigr) \leq e^{\lambda^2\sigma_t^2D^2/2}e^{1/2} \leq e^{\lambda^2\sigma_t^2D^2}.
\end{align*}
We deduce that $\langle \bm e_t,\bm\zeta_t\rangle/\theta_t$ is ${(\sqrt{2}\sigma_tD)}/{\theta_t}$ sub-Gaussian, conditional on $\mathcal{F}_{t-1}$.  Applying the Azuma--Hoeffding inequality \citep[e.g.][]{azuma1967weighted} therefore yields that for every $\epsilon > 0$,
\begin{align*}
\p\biggl(\theta_{T-1}^2\sum_{t=0}^{T-1}\frac1{\theta_t}\langle \bm e_t,\bm\zeta_t\rangle \geq \epsilon\biggr)& \leq \exp\biggl(-\frac{\epsilon^2}{4D^2\theta_{T-1}^2\sum_{t=0}^{T-1} \sigma_t^2\theta_{T-1}^2/\theta_t^2}\biggr)\\
&\leq \exp\biggl(-\frac{T^2\epsilon^2}{16D^2\sigma^2(M_T^{(1)})^2}\biggr),
\end{align*}
where the last inequality uses the facts that $\theta_{T-1}\leq \theta_t$ and $\theta_{T-1} \leq {2}/{T}$.  Therefore, for every $\delta \in (0,1)$, we have with probability at least $1-\delta/2$ that
\begin{equation}\label{eqn:smoothing-thm-highprob-proof-eqn1}\theta_{T-1}^2\sum_{t=0}^{T-1}\frac1{\theta_t}\langle \bm e_t,\bm\zeta_t\rangle \leq  \frac{4\sigma DM_T^{(1)}\sqrt{\log(2/\delta)}}{T}.\end{equation}

Next we will turn to finding a tail bound for $\sum_{t=0}^{T-1}\norm{\bm e_t}^2$. By Assumption~\ref{assump:error-subGaussian} and Jensen's inequality, we have
\begin{equation}\label{eqn:smoothing-thm-highprob-proof-eqn2}
  \sum_{t=0}^{T-1}\E\bigl(\|\bm e_t\|^2\bigm|\mathcal{F}_{t-1}\bigr) \leq \sum_{t=0}^{T-1}\sigma^2_t\log\bigl(\E(e^{\|\bm e_t\|^2/\sigma_t^2}|\mathcal{F}_{t-1})\bigr) \leq \sigma^2(M_T^{(1)})^2.
\end{equation}
Now define the random variables $\Xi_t:=\|\bm e_t\|^2-\E\bigl(\|\bm e_t\|^2\bigm|\mathcal{F}_{t-1}\bigr)$.  
Then by Markov's inequality, for every $\epsilon > 0$,
\[
  \mathbb{P}\bigl(\Xi_t > \epsilon \bigm| \mathcal{F}_{t-1}\bigr)\! \leq\! \mathbb{P}\bigl(\|\bm e_t\|^2/\sigma_t^2 > \epsilon/\sigma_t^2 \bigm| \mathcal{F}_{t-1}\bigr)\! \leq\! e^{-\epsilon/\sigma_t^2}\mathbb{E}\bigl(e^{\|\bm e_t\|^2/\sigma_t^2} \bigm| \mathcal{F}_{t-1}\bigr) \leq e^{1 - \epsilon/\sigma_t^2}.
\]
Moreover, by Markov's inequality again, and then Jensen's inequality, we have for every $\epsilon > 0$ that
\begin{align*}
  \mathbb{P}\bigl(\Xi_t < -\epsilon \bigm| \mathcal{F}_{t-1}\bigr) &\leq e^{-\epsilon/\sigma_t^2} \mathbb{E}\bigl(e^{\mathbb{E}(\|\bm e_t\|^2/\sigma_t^2| \mathcal{F}_{t-1}) - \|\bm e_t\|^2/\sigma_t^2} \bigm| \mathcal{F}_{t-1}\bigr)\\ &\leq e^{-\epsilon/\sigma_t^2} e^{\mathbb{E}(\|\bm e_t\|^2/\sigma_t^2| \mathcal{F}_{t-1})} \leq e^{1-\epsilon/\sigma_t^2}.
\end{align*}
It follows by, e.g., \citet[Lemma~F.7]{duchi2012randomized} that $\Xi_t$ is sub-exponential with parameters $\lambda_t:=1/(2\sigma_t^2)={m_t}/{(2\sigma^2)}$ and $\tau_t^2 := 16e\sigma_t^4={16e\sigma^4}/{m_t^2}$, in the sense that 
\begin{equation}
\label{Eq:BaseCase}
  \E\bigl(e^{\lambda \Xi_t}\bigm|\mathcal{F}_{t-1}\bigr) \leq e^{8e\lambda^2\sigma_t^4},
\end{equation}
for $|\lambda| \leq 1/(2\sigma_t^2)$.

Now define $\Lambda_T := \min_{t=0, \ldots, T-1} \lambda_t={m_0}/{(2\sigma^2)}$ (as we assume $(m_t)$ is increasing) and $C_T := \bigl(\sum_{t=0}^{T-1}\tau_t^2\bigr)^{1/2}=4e^{1/2}\sigma^2M_T^{(2)}$.  We claim that $\sum_{t=0}^{T-1} \Xi_t$ is sub-exponential with parameters $\Lambda_T$ and $C_T$, and prove this by induction on $T$.  The base case $T=1$ holds by~\eqref{Eq:BaseCase}, so suppose it holds for a given $T \in \mathbb{N}$.  Then for $\lambda \in \mathbb{R}$ with $|\lambda| \leq \min(\Lambda_T,\lambda_T) = \Lambda_{T+1}$, we have
\begin{align*}
\mathbb{E}\biggl\{\exp\biggl(\lambda\sum_{t=0}^T \Xi_t\biggr)\biggr\} &= \mathbb{E}\biggl[\exp\biggl(\lambda\sum_{t=0}^{T-1} \Xi_t\biggr)\mathbb{E}\bigl\{\exp\bigl(\lambda\Xi_T|\mathcal{F}_{T-1}\bigr)\bigr\}\biggr]\\
&\leq e^{(\lambda^2C_T^2 + 16e\lambda^2\sigma_T^4)/2} = e^{\lambda^2 C_{T+1}^2/2},
\end{align*}
which proves the claim by induction.  We deduce by, e.g. \citet[Lemma~1.4.1]{buldygin2000metric}, that for every $\epsilon > 0$ and $T \in \mathbb{N}$,
\[
  \p\biggl(\sum_{t=0}^{T-1} \Xi_t\geq \epsilon\biggr)\leq \exp\biggl(-\min\biggl\{\frac{\epsilon^2}{2C_T^2},\frac{\Lambda_T\epsilon}{2}\biggr\}\biggr).
\]
In other words, with probability at least $1-\delta/2$,
\begin{equation}
  \sum_{t=0}^{T-1}\Xi_t\leq 4\sigma^2\max\biggl\{M_T^{(2)}\sqrt{2e\log\frac2\delta},\frac{1}{m_0}\log\frac2\delta\biggr\}.\label{eqn:smoothing-thm-highprob-proof-eqn3}
\end{equation}
Applying~\eqref{eqn:smoothing-thm-highprob-proof-eqn1}, \eqref{eqn:smoothing-thm-highprob-proof-eqn2} and \eqref{eqn:smoothing-thm-highprob-proof-eqn3} in Corollary~\ref{cor:smoothing-key-lemma-cor}, together with a union bound, yields that with probability at least $1-\delta$, 
\begin{align*}
  f(\bphi_T^{(x)})-f(\bphi)&\leq \frac{B_1D^2}{Tu}+\frac{4B_0I(\bphi)u}{T}+\frac{\eta D^2}{T}+\frac{\sigma^2(M_T^{(1)})^2}{T\eta}+\frac{4\sigma DM_T^{(1)}\sqrt{\log(2/\delta)}}{T}\\
                           &\hspace{3.4cm}+\frac{4\sigma^2\max\bigl\{M_T^{(2)}\sqrt{2e\log(2/\delta)},m_0^{-1}\log(2/\delta)\bigr\}}{T\eta}.
\end{align*}
Taking the same choices of $\bm \phi$, $u$ and $\eta$ as  in Theorem~\ref{thm:smoothing-guarantee-expectation}, we obtain the final result.
\end{proof}


\subsection{Proof of Theorem~\ref{thm:error-in-cef}}
To prove Theorem~\ref{thm:error-in-cef}, we first introduce the following lemma.  Recall that $\mathcal{C}_d$ denotes the class of proper, convex lower-semicontinuous functions $\varphi:\R^d \rightarrow (-\infty,\infty]$ that are coercive in the sense that $\varphi(\bx) \rightarrow \infty$ as $\|\bx\| \rightarrow \infty$.  Recall further from \citet[Theorem~2.2]{dumbgen2011approximation} that if $P$ is a distribution on $\R^d$ with $\int_{\mathbb{R}^d} \|\bx\| \, dP(\bx) < \infty$ and $P(H) < 1$ for all hyperplanes $H$, then the strictly convex function $\Gamma:\mathcal{C}_d \rightarrow (-\infty,\infty]$ given by
\begin{equation}
\label{lemma2-main-defn-1}
  \Gamma(\varphi) := \int_{\R^d} \varphi(\bx)\,dP(\bx)+\int_{\R^d}e^{-\varphi(\bx)}\,d\bx
\end{equation}
has a unique minimizer $\varphi^* \in \mathcal{C}_d$ satisfying $\Gamma(\varphi^*) \in \R$.
\begin{lemma}\label{lem:strong-convexity-with respect to-varphi}
  Let $P$ be a distribution on $\R^d$ with $\int_{\mathbb{R}^d} \|x\| \, dP(x) < \infty$ and $P(H) < 1$ for all hyperplanes~$H$, and let $\varphi^* := \argmin_{\varphi \in \mathcal{C}_d} \Gamma(\varphi)$.  Then \\
{\bf (1)} For any $\lambda \in [0,1]$, and $\varphi,\tilde{\varphi}\in \mathcal{C}_d$, we have
\begin{equation}\label{eqn:strong-convexity-with respect to-varphi}
    \Gamma\bigl(\lambda\varphi+(1-\lambda)\tilde\varphi\bigr) \leq \lambda \Gamma(\varphi)+(1-\lambda)\Gamma(\tilde\varphi)-\frac{\lambda(1-\lambda)}{2}\int_{\R^d}e^{-\max\{\varphi(\bx),\tilde\varphi(\bx)\}}\bigl\{\varphi(\bx)-\tilde\varphi(\bx)\bigr\}^2\,d\bx.
\end{equation}
Here, when $\max\bigl\{\varphi(\bx),\tilde\varphi(\bx)\bigr\} =\infty$, we define the integrand to be zero.\\
{\bf (2)} Furthermore, if $\varphi \in \mathcal{C}_d$ is such that $\max\{\varphi(\bx),\varphi^*(\bx)\}\leq \phi^0$ for all $\bx \in \dom{\varphi}\cap \dom{\varphi^*}$, then
\begin{equation}\label{eqn:error-Gamma-varphi}
    \Gamma(\varphi)-\Gamma(\varphi^*)\geq \frac12e^{-\phi^0}\int_{\dom{\varphi}\cap \dom{\varphi^*}}\bigl\{\varphi(\bx)-\varphi^*(\bx)\bigr\}^2\,d\bx.
\end{equation}
\end{lemma}

\begin{proof}

\textbf{(1)} Fix $\varphi,\tilde{\varphi}\in \mathcal{C}_d$ with $\max\bigl\{\Gamma(\varphi),\Gamma(\tilde{\varphi})\bigr\} < \infty$ (because otherwise the result is clear).  For any $M \in \mathbb{R}$, the function $y \mapsto e^{-y}$ is $e^{-M}$-strongly convex on $y \leq M$. 
Therefore, for any $\lambda \in [0,1]$, we have $\Gamma\bigl(\lambda\varphi+(1-\lambda)\tilde\varphi\bigr) \geq \Gamma(\varphi^*) > -\infty$, so
\begin{align}
    \Gamma\bigl(\lambda\varphi+(1-\lambda)\tilde\varphi\bigr)-\lambda\Gamma(\varphi)&-(1-\lambda)\Gamma(\tilde\varphi)\nonumber\\
    &= \int_{\R^d} \bigl[e^{-\{\lambda \varphi(\bx)+(1-\lambda)\tilde\varphi(\bx)\}}-\lambda e^{-\varphi(\bx)}-(1-\lambda)e^{-\tilde\varphi(\bx)}\bigr]\,d\bx\nonumber\\
    &\leq -\frac{\lambda(1-\lambda)}{2}\int_{\R^d}e^{-\max\{\varphi(\bx),\tilde
    \varphi(\bx)\}}\bigl\{\varphi(\bx)-\varphi^*(\bx)\bigr\}^2\,d\bx, \label{thm3-lemma-last-line1}
\end{align}
as required.

\medskip

\textbf{(2)}  
By~\eqref{thm3-lemma-last-line1}, we have for any $\lambda \in (0,1)$ that
\begin{align*}
  \frac{\lambda(1-\lambda)}{2}e^{-\phi^0}\int_{\dom{\varphi}\cap \dom{\varphi^*}} &\bigl\{\varphi(\bx)-\varphi^*(\bx)\bigr\}^2\dx \\
  &\leq \lambda\Gamma(\varphi)+(1-\lambda)\Gamma(\varphi^*)-\Gamma\bigl(\lambda\varphi+(1-\lambda)\varphi^*\bigr) \\
  &\leq \lambda\bigl\{\Gamma(\varphi)-\Gamma(\varphi^*)\bigr\},
\end{align*}
where the last inequality follows by definition of $\varphi^*$. We deduce that
\[
  \frac{1}{2}e^{-\phi^0}\int_{\dom{\varphi}\cap \dom{\varphi^*}} \bigl\{\varphi(\bx)-\varphi^*(\bx)\bigr\}^2\dx\leq \frac{\Gamma(\varphi)-\Gamma(\varphi^*)}{1-\lambda}.
\]
The result follows on taking $\lambda \searrow 0$.
\end{proof}

\begin{proof}[Proof of Theorem~\ref{thm:error-in-cef}]
In Lemma~\ref{lem:strong-convexity-with respect to-varphi}, let $P$ be the empirical distribution of $\{\bx_i\}_{i=1}^n$.  From the proof of Theorem~2 of~\cite{cule2010maximum} (see also the proof of Theorem~\ref{thm:s-concave-MLE}), 
Problem~\eqref{lemma2-main-defn-1} is equivalent to \eqref{eqn:MLE-phi} in the sense that $\Gamma(\cef[\bphi]) = f(\bphi)$ for all $\bphi \in \R^n$, and $\varphi^* = \cef[\bphi^*]$.  Now fix $\bphi\in \bm{\Phi}$ and let $\varphi=\cef[\bphi]$, so that $\varphi(\bx)\leq \phi^0$ for all $\bx \in C_n$.  From~\eqref{eqn:error-Gamma-varphi}, we obtain
\[
  \frac{1}{2}e^{-\phi^0}\int_{C_n} \bigl\{\cef[\bphi](\bx)-\cef[\bphi^*](\bx)\bigr\}^2 \, d\bx \leq \Gamma(\cef[\bphi])-\Gamma(\cef[\bphi^*])=f(\bphi)-f(\bphi^*),
\]
as required.\end{proof}


\subsection{Proofs of Theorem~\ref{thm:s-concave-MLE} and Theorem~\ref{thm:quasi-concave}}
 The proof of Theorem~\ref{thm:s-concave-MLE} is based on following proposition:
\begin{proposition}\label{prop:psi-s-scale}
Given $s\in \R$, $\varphi\in\mathcal{C}_d$ and $c>0$, there exists $\tilde{\varphi}\in\mathcal{C}_d$, such that
$\psi_s\circ\varphi(\cdot) = c\cdot\psi_s\circ\tilde\varphi(\cdot).$
\end{proposition}
\begin{proof}[of Proposition~\ref{prop:psi-s-scale}]
Given $s$, $\varphi$ and $c$, define 
\[
\tilde{\varphi}(\cdot) :=\left\{\begin{array}{ll}
     \varphi(\cdot)+\log c& ~\text{if}~ s=0 \\
     c^{-s}\varphi(\cdot)& ~\text{if}~ s\neq 0. 
                                \end{array}\right.
\]
Then $\tilde\varphi\in\mathcal{C}_d$, and for $\bx \in \mathcal{D}_s$,
\[
\psi_s\circ \tilde{\varphi}(\bx)=\left\{\begin{array}{ll}
     \bigl(c^{-s}\varphi(\bx)\bigr)^{1/s}& ~\text{if}~ s<0 \\
     \exp\bigl(-\varphi(\bx)-\log c\bigr)& ~\text{if}~ s=0 \\
     \bigl(-c^{-s}\varphi(\bx)\bigr)^{1/s}& ~\text{if}~ s> 0 
                                          \end{array}\right.=c^{-1}\cdot\psi_s\circ\varphi(\bx),
\]
as required.                                      
\end{proof}

\begin{proof}[Proof of Theorem~\ref{thm:s-concave-MLE}] The proof is split into four steps. The first three steps hold for any $s \in \R$, while in Step~4, we show the convexity of the objective when $s\in[0,1]$.

\medskip  

\noindent {\bf Step 1:} We claim that any solution $\hat{p}_n$ to~\eqref{eqn:MLE-psi-functional} is supported on $C_n$, so that~$\varphi^*(\bx)=\infty$ when $\bx\notin C_n$, where $\varphi^*$ is the solution to \eqref{eqn:MLE-psi-varphi}. Indeed, suppose for a contradiction that $p=\psi_s\circ\varphi\in\mathcal{P}_s(\R^d)$ is such that $\sum_{i=1}^n\log p(\bx_i) > -\infty$, and that $\int_{C_n}p(\bx)\dx=c<1 = \int_{\R^d}p(\bx)\dx$.  We may assume that $c > 0$, because otherwise $p(\bx) = 0$ for almost all $\bx\in C_n$, which would mean that $\sum_{i=1}^n\log p(\bx_i) = -\infty$ since $p\in\mathcal{P}_s(\R^d)$.  Define $\bar\varphi \in \mathcal{C}_d$ by
\[
\bar\varphi(\bx) := \left\{\begin{array}{ll}
     \varphi(\bx)&~\text{if}~ \bx\in C_n  \\
     \infty&~\text{otherwise,}
                           \end{array}\right.
\]
so that $\int_{\R^d}\psi_s\circ\bar\varphi(\bx)\dx=\int_{\R^d}\psi_s\circ\varphi(\bx)\dx=c<1$. Applying Proposition~\ref{prop:psi-s-scale} to $\bar\varphi\in\mathcal{C}_d$ and $c>0$, we can find $\tilde \varphi\in \mathcal{C}_d$ with $\int_{\R^d}\psi_s\circ\tilde\varphi(\bx)\dx= \int_{\R^d}c^{-1}\cdot\psi_s\circ\bar\varphi(\bx)\dx=1$, and 
\[
  \sum_{i=1}^n\log \psi_s\circ\tilde\varphi(\bx_i)=\sum_{i=1}^n\log \psi_s\circ\varphi(\bx_i)-n\log c> \sum_{i=1}^n\log \psi_s\circ\varphi(\bx_i).
\]
This establishes our desired contradiction. 

\noindent {\bf Step 2:} We claim that any solution $\varphi^*$ to~\eqref{eqn:MLE-psi-varphi} satisfies
\begin{equation}\label{eqn:MLE-psi-step2}
\varphi^* =  \argmin_{\substack{\varphi \in\mathcal{C}_d: \mathrm{Im}(\varphi) \subseteq \mathcal{D}_s\cup\{\infty\},\\\dom{\varphi}=C_n}} \biggl\{-\frac{1}{n}\sum_{i=1}^n\log\psi_s\circ\varphi(\bx_i)+\int_{C_n}\psi_s\circ\varphi(\bx)\dx\biggr\}.
\end{equation}
Indeed, for any $\varphi\in\mathcal{C}_d$ such that $\dom{\varphi}=C_n$ and $\int_{C_n}\psi_s\circ\varphi(\bx)\dx=c\neq 1$, we can again apply Proposition~\ref{prop:psi-s-scale} to $\varphi$ and $c$ to obtain $\tilde\varphi$.  Then
\begin{align*}
  -\frac{1}{n}&\sum_{i=1}^n\log\psi_s\circ\varphi(\bx_i)+\int_{C_n}\psi_s\circ\varphi(\bx)\dx = -\frac{1}{n}\sum_{i=1}^n\log\psi_s\circ\varphi(\bx_i) + c \\
  &> -\frac{1}{n}\sum_{i=1}^n\log\psi_s\circ\varphi(\bx_i) + \log c + 1 = -\frac{1}{n}\sum_{i=1}^n\log\psi_s\circ\tilde\varphi(\bx_i)+\int_{C_n}\psi_s\circ\tilde\varphi(\bx)\dx,
\end{align*}
so $\int_{C_n}\psi_s\circ\varphi^*(\bx)\dx=1$, which establishes our claim.

\medskip

\noindent {\bf Step 3:} 
Letting $\bm\phi^* = (\phi_1^*,\ldots,\phi_n^*)$ denote an optimal solution to~\eqref{eqn:MLE-psi-phi}, we claim that $\cef[\bphi^*](\bx_i)=\phi^*_i$ holds for all $i\in[n]$. Indeed, for any $\bphi = (\phi_1,\ldots,\phi_n) \in \mathcal{D}_s^n$, if there exists $i^* \in [n]$ such that $\cef[\bphi](\bx_{i^*})\neq \phi_{i^*}$, then we can define $\tilde{\bphi} = (\tilde{\phi}_1,\ldots,\tilde{\phi}_n) \in \mathcal{D}_s^n$ such that $\tilde\phi_i=\cef[\bphi](\bx_i)$ for all $i\in[n]$.  We now claim that $\cef[\tilde\bphi]=\cef[\bphi]$. On the one hand, by~\eqref{Eq:cef}, $\tilde\phi_i=\cef[\bphi](\bx_i)\leq \phi_i$ for any $i\in[n]$. From the LP expression \eqref{eqn:cef-LP}, $\cef[\tilde\bphi] \leq \cef[\bphi]$. On the other hand, since $\cef[\bphi](\cdot)$ is a convex function with $\cef[\bphi](\bx_i)=\tilde\phi_i\leq \tilde\phi_i$ for any $i$, we have $\cef[\tilde\bphi](\cdot)\geq \cef[\bphi](\cdot)$.  It follows that $\cef[\tilde\bphi]=\cef[\bphi]$, and $\tilde\bphi$ with a smaller objective than $\bphi$. This establishes our claim, and shows that \eqref{eqn:MLE-psi-phi} is equivalent to \eqref{eqn:MLE-psi-step2} in the sense that 
$\hat{p}_n$ and $\bphi^*$ satisfy $\hat{p}_n=\psi_s(\cef[\bphi^*])$.

\medskip

\noindent {\bf Step 4:} 
When $s=0$, the function $\bphi \mapsto \frac1n\bm1^\top\bphi$ is convex on $\R^n$; when $s>0$, the function $\bphi \mapsto -\frac{1}{ns}\sum_{i=1}^n\log(-\phi_i)$ is convex on $(-\infty,0]^n$.  Moreover, when $s\leq 1$, the function $\psi_s$ is decreasing and convex, and since $\bphi \mapsto \cef[\bphi](\bx)$ is concave for every $\bx \in \R^n$, the result follows.
\end{proof}



\section{Background on shape-constrained inference}\label{app:background}


Entry points to the field of nonparametric inference under shape constraints include the book by \cite{groeneboom2014nonparametric}, as well as the 2018 special issue of the journal {\emph{Statistical Science}} \citep{samworth2018special}.  Other canonical problems in shape constraints that involve non-trivial computational issues include isotonic regression \citep{brunk1972statistical,zhang2002risk,chatterjee2015risk,durot2018limit,bellec2018sharp,yang2019contraction,han2019isotonic,pananjady2020isotonic} and convex regression \citep{hildreth1954point,seijo2011nonparametric,cai2015framework,guntuboyina2015global,han2016multivariate,fang2019risk,chen2020multivariate}, or combinations and variants of these \citep{chen2016generalized}.

Beyond papers already discussed, early theoretical work on log-concave density estimation includes \cite{pal2007estimating}, \cite{dumbgen2009maximum}, \cite{walther2009inference}, \cite{cule2010theoretical}, \cite{dumbgen2011approximation}, \cite{schuhmacher2011multivariate}, \cite{samworth2012independent} and \cite{chen2013smoothed}.  Sometimes, the class $\mathcal{P}_d$ is considered as a special case of the class of $s$-concave densities \citep{koenker2010quasi,seregin2010nonparametric,han2016approximation,doss2016global,han2019global}; see also Section~\ref{Sec:sconcave}.  Much recent work has focused on rates of convergence, which are best understood in the Hellinger distance $d_{\mathrm{H}}$, given by
\[
  d_{\mathrm{H}}^2(p,q) := \int_{\R^d} (p^{1/2}-q^{1/2})^2.
\]
For the case of correct model specification, i.e.~where $\hat{p}_n$ is computed from an independent and identically distributed sample of size $n$ from $p_0 \in \mathcal{P}_d$, it is now known \citep{kim2016global,kur2019log} that
\[
  \sup_{p_0 \in \mathcal{P}_d} \mathbb{E} d_{\mathrm{H}}^2(\hat{p}_n,p_0) \leq K_d \cdot \left\{ \begin{array}{ll} n^{-4/5} & \mbox{when $d=1$} \\
n^{-2/(d+1)}\log n & \mbox{when $d \geq 2$,} \end{array} \right.
\]
where $K_d > 0$ depends only on $d$, and that this risk bound is minimax optimal (up to the logarithmic factor when $d \geq 2$).  See also \cite{carpenter2018near} for an earlier result in the case $d \geq 4$, and \cite{xu2019high} for an alternative approach to high-dimensional log-concave density estimation that seeks to evade the curse of dimensionality in the additional presence of symmetry constraints.  It is further known that when $d \leq 3$, the log-concave maximum likelihood estimator can adapt to certain subclasses of log-concave densities, including log-concave densities whose logarithms are piecewise affine \citep{kim2018adaptation,feng2018adaptation}.  See also \cite{barber2020local} for recent work on extensions to the misspecified setting (where the true distribution from which the data are drawn does not have a log-concave density).

\end{appendix}

\section{Acknowledgements}
This research was supported in part by grants from the Office of Naval Research: ONR-N000141812298, N000142112841, the National Science Foundation: NSF-IIS-1718258 and IBM to Rahul Mazumder. The research of Richard J. Samworth was supported by EPSRC grants
EP/N031938/1 and EP/P031447/1, as well as ERC Advanced Grant 101019498.

The authors acknowledge MIT SuperCloud and Lincoln Laboratory Supercomputing Center for providing HPC resources that have contributed to the research
results reported within this paper.  We also thank the anonymous reviewers for constructive comments on an earlier version that helped to improve this paper.



\end{document}